\documentclass[12pt]{article}
\usepackage{graphicx}
\usepackage[latin1]{inputenc}
\usepackage{a4}
\usepackage{bbm}
\usepackage{amsmath}
\usepackage{amsthm}
\usepackage{amssymb}
\usepackage{eucal}
\usepackage{mathrsfs}
\usepackage{cite}
\usepackage[colorlinks=true]{hyperref}



\usepackage{ifthen}

\DeclareMathOperator{\Tr}{Tr}

\DeclareMathOperator{\diff}{d}











\DeclareMathOperator{\Sym}{Sym}

\DeclareMathOperator{\Mod}{Mod}
\DeclareMathOperator{\Dom}{Dom}


\DeclareMathOperator{\AdS}{AdS}
\DeclareMathOperator{\hypergeombare}{F}




\newcommand{\AdSCFT}{AdS/CFT\;}


\newcommand{\pFq}[2]{{\sideset{_#1}{_#2}\hypergeombare}}
\newcommand{\hypergeom}{\pFq{2}{1}}
\newcommand{\dif}[2][noparam]{\ifthenelse{\equal{#1}{noparam}}{\diff \! #2 \,}{\diff^{#1} \! #2 \,}}





\newcommand{\Em}{\ensuremath{{\mathbbm{E}}}}


\newcommand{\wick}[1]{:\negmedspace#1\negmedspace:}











\newcommand{\cev}[1]{\ensuremath{\overleftarrow{#1}}}
\newcommand{\onehalf}{\ensuremath{{\frac{1}{2}}}}

\newcommand{\threehalf}{\ensuremath{{\frac{3}{2}}}}

\theoremstyle{plain}
\newtheorem{prop}{Proposition}
\newtheorem{thm}[prop]{Theorem}
\newtheorem{lem}[prop]{Lemma}

\theoremstyle{definition}

\newtheorem*{defn*}{Definition}
\newtheorem*{ass*}{Assumption}
\newtheorem*{expl*}{Example}

\theoremstyle{remark}

\newtheorem*{rem*}{Remark}


\newcommand{\comment}[1]{}

\comment{This is a comment and it won't show up in the output even if
  it spans more than one line of the input file or the line breaks
  move around.}

\newcommand{\condcomment}[2]{\ifthenelse{#1}{#2}{}}


\newcommand{\editenable}[1]{\newcounter{#1} \setcounter{#1}{1}}
\newcommand{\editdisable}[1]{\newcounter{#1} \setcounter{#1}{0}}
\editenable{genericedit}
\newcommand{\edit}[2][genericedit]
{\ifthenelse{\equal{\value{#1}}{0}}{}{\P\marginpar{\small #2 \arabic{#1}}\addtocounter{#1}{1}} }


\newcommand{\dhalf}{\ensuremath{{\frac{d}{2}}}}

\newcommand{\commentpaper}[1]{}

\editdisable{17.04.2007}
\newcommand{\ed}{\edit[17.04.2007]{EDIT}} 
\editenable{2ndrevision}

\begin{document}

\title{World Graph Formalism for Feynman Amplitudes}
\date{25th April 2007\\
1st revised version: 21st April 2008\\
2nd revised version: 19th August 2008}
\author{Helmut~H\"olzler}
\maketitle

\begin{abstract}
A unified treatment of Schwinger parametrised Feynman amplitudes is
suggested which addresses vertices of arbitrary order on the same footing
as propagators. Contributions from distinct diagrams are organised
collectively. The scheme is based on the continuous graph Laplacian. The
analogy to a classical statistical diffusion system of vector charges on the
graph is explored.
\end{abstract}

\section{Introduction}

Recently, the well-known technique of Schwinger parametrisation
of Feynman diagrams \cite{Itzykson1980} has received renewed interest, funneled by the
speculation of Gopakumar \cite{Gopakumar:2003ns, Gopakumar:2004qb,
  Gopakumar:2005fx} that it might be the key to an understanding of the
$\AdSCFT$ correspondence on the diagrammatic level of correlation
functions. This suggestion is based on the observation that the
``lifting'' of free large N U(N) symmetric gauge field theory amplitudes
of twist-2 operators
from the boundary into bulk $\AdS$ space has a very
natural appearance when one applies the Schwinger parametrisation to
the boundary amplitudes, at least in the simplest nontrivial case of
three-point amplitudes. The particular integral
representation of the bulk amplitudes obtained in this manner is being
interpreted as a string theory on a curved space in the limit of
large curvature; and since such a theory is currently
beyond a direct understanding, these results consequently incited a
program of trying to gain knowledge about this particular
string theory by the study of the lifted boundary field
theory.
According to the advertised model, the correspondence proceeds in two
clearly distinct levels: First, there should be a correspondence between the
boundary amplitudes and an open string theory including branes; second, by
open-closed duality, these open string amplitudes should be equivalent to
closed string amplitudes living in the bulk (see eg \cite{Ooguri2002}).

In this paper, we concentrate on the
Schwinger parametrisation of quantum field theories.
In its simplest form, it is obtained by going to the momentum space
representation of a Feynman diagram derived from the path integral and
rewriting it, making use of the representation
\begin{equation}
\label{eqn:schwinger}
\frac{1}{q^2 + m^2}
= \int_0^\infty \dif\tau e^{- \tau (q^2 + m^2)}.
\end{equation}
The issue that Schwinger parametrisation can be interpreted as
being generated by ``world line'' path integrals is rather settled by
now \cite{Schubert:2001he, Strassler:1992zr}. In section 8 of Schubert's review \cite{Schubert:2001he},
the question of how to treat multi loop Feynman graphs in that context is discussed, noting
that the world line formalism cannot be implemented immediately on those
graphs since -- in opposition to one-loop graphs -- the graph
cannot be treated as a differentiable manifold (parametrised $S^1$), due to the vertices.
The solution offered is rather a pragmatic one: Multi-loop Feynman graphs
are constructed from a one-loop ``spider'' graph with several external insertions,
by connecting some of the external insertions by propagators in Schwinger
parametrisation. The resulting amplitudes are then manipulated algebraically
and the loop momenta are integrated out.

It is the intention of this text to suggest a direct, stringent procedure
implementing the world line formalism also for graphs with vertices,
without resorting to iterative construction out of simpler graphs.
We show how a multi loop Feynman graph can be treated as one-dimensional
manifold with branching points, enabling us to write down a ``world graph''
formalism which delivers equivalent results
(e.g., formula \ref{eqn:G:Zeff} on page \pageref{eqn:G:Zeff})
in a consistent, ``one-step'' fashion.
It takes the form of a simple diffusion path integral mapping the
\textit{complete} Feynman graph into
coordinate space, with particular continuity conditions at the
vertices. The main result reported in this article is the methodical derivation of this new route.
A side result is to offer a more detailed view on the role of generalised
Schwinger parameters, or ``moduli'', of Feynman graphs.

The formalism is not restricted to a particular set of
vertices, or particle types (although it is rather natural to employ
it for massless particles). From this point of view, Schwinger
parametrisation is a notion which makes sense
for a Feynman graph as a whole - one should rather speak of
``world graphs'' than world lines. We will show there is a close
connection to the interpretation of Feynman amplitudes as a
partition sum of charged particles residing on the graph, generalising a concept which has
successfully been applied to one-loop and two-loop graphs.
It is crucial that these partition sums are in fact sums over all
different possibilities to connect the external propagators to the graph.
Polynomial prefactors in the internal momenta of the Feynman
amplitudes, ie derivative interactions, for vector or tensor particles can be included easily by
introducing infinitesimal ``test-dipoles'' on the graph.

Let us mention at this point the connection to string theory:
Bern and Kosower \cite{Bern1992} have shown in a long work that
Schwinger parametrised amplitudes can be obtained
from the infinite tension limit of a certain open string theory,
where the strings degenerate into point particles.
The tachyonic modes of the string can be
made to decouple, and the only excitations left in this
limit are the massless modes (all other modes become too massive to be
excited at all). The Feynman rules which result in this limit come out
very naturally in the Schwinger parametrised form (however see eg \cite{Roland1996}
for a different limit retaining only the \textit{tachyonic} modes,
producing scalar $\phi^3$ theory). In fact, we can say
more: The theory of the massless vector fields obtained in this way is a Yang-Mills
theory; if the strings carry Chan-Paton factors, then it is a
non-Abelian gauge theory \cite{Polchinski1998}. Now, in the usual Feynman diagrammatic
calculation of amplitudes in non-Abelian gauge theories, there is a
lot of redundancy: The amplitudes corresponding to the diagrams
consist of \textit{very} many different summands, and there occurs a
host of cancellations between those, so that the
final result usually reduces to a comparably compact expression.
When the same amplitudes are derived by way of
the infinite-tension limit of string theory, they
turn out to be very well organised so that cancellations are immediate
\cite{Bern1991, Lam1994}.

Let us mention another interesting detail: On the string theory side,
we have to integrate over the so-called ``string moduli''. These are
parameters which label uniquely the conformally inequivalent ways to
put a metric on the string world sheet. Taking the
infinite-tension limit, the string moduli are mapped partially onto the
Schwinger parameters. A point we want to stress is that the
mathematical problems which are a major obstruction when one tries to
consider more complicated string world sheet topologies in the
infinite tension limit are understood rather naturally in the
world graph limit.

We thank the referee for pointing out to us an
earlier work by Dai and Siegel \cite{Dai2007}, who explore a related
approach to multi-loop amplitudes. As a starting point, they choose the first-quantised formalism,
developed by Strassler \cite{Strassler:1992zr} and many others, which includes
an integral over the reparametrisation group of the parametrised Feynman graphs,
and requires the subsequent fixing of this reparametrisation symmetry.
Their conclusions are similar, stressing along the way the ``electrical analogy''
which is obtained when the momentum flux through the diagram is set in analogy
to a (vector) current. Explicitly, their approach is spelled out only for scalar
fields.

The organisation of this paper is the following:
In section 2, we introduce the na\"ive Schwinger representation
and show how for each propagator it can be interpreted as
a diffusion kernel, implying the world line picture. In section 3, we
introduce the world graph formalism, enlarging the diffusion scheme to
complex graphs, and state its main content as a theorem.
In section 4, we show the equivalence of the
world graph scheme to the partition sum of a system of charged
particles residing on the graph and complete the proof of the theorem;
we give some elementary examples of the technique.
Finally, in the remaining section, we extend the formalism to vector
and tensor particles.

\section{World-line formalism}
\subsection{Schwinger parametrisation}

The Schwinger parametrisation of the correlation functions of a Lagrangian field
theory in $d$-dimensional Euclidean space containing a set of scalar fields
and an arbitrary non-derivative polynomial interaction is based on the
perturbative expansion of the effective action in momentum space.
The effective action is the sum over connected, amputated Feynman diagrams,
containing massive propagators
\begin{equation}
\label{eqn:internal}
 G_m(q) = \frac{1}{q^2 + m^2}
\end{equation}
and vertices with varying coordination number $n$, carrying a momentum conserving factor
$
  - \frac{c_n}{(2\pi)^{\left(\frac{n}{2} - 1\right)d}}
   \delta^{(d)}\Big( \sum_{j=1}^n q_j \Big),
$
where $c_n$ is the coupling and $q_j$ are the incoming momenta.

To a vertex $v$, we assign the external momentum
$k_v$, the total sum of the joint momenta entering
the diagram through \textit{all} external, amputated legs of $v$.
External legs are thus effectively represented by vertices with
non-conservation of momentum of the internal
propagators. Conversely, internal vertices will be
treated as being connected to imagined external legs with zero
momentum entering the graph.
Finally, the internal momenta are integrated over.

The na\"ive Schwinger representation is obtained by blindly representing
each internal propagator (\ref{eqn:internal}) by formula (\ref{eqn:schwinger}).
We have thus for each propagator $j$ a Schwinger modulus (Schwinger
parameter) $\tau_j$. As a result, loop momenta integrations are
Gaussian and can be performed explicitly, leaving only the integrals over the
Schwinger parameters. The
result is the well-known formula which reduces to a certain sum of
``two-trees'' of the graph (eg \cite{Lam1993}); the precise form is
irrelevant here.

\comment{
If we consider a theory containing particles with spin, then the
propagators will contain additional factors which are basically
polynomials in the propagator momenta; at the vertices, the various
momenta of adjacent propagators are contracted.
}

Formula (\ref{eqn:schwinger}) has an interpretation based on the
diffusion equation. Each vertex $v$ obtains an additional coordinate
$x_v \in \mathbb R^d$; momentum conservation at $v$
is represented by the integral
$\delta^{(d)}\Big( \sum_{j=1}^n q_j \Big)
= (2\pi)^{-d} \int \dif[d]{x_v} \exp - \left(i x_v \cdot \sum_{j=1}^n q_j \right)$.
A Schwinger parametrised propagator running from vertex $x_1$ to vertex $x_2$ evaluates to
\begin{multline}
\label{eqn:alpha:propagator}
G_m(x_1-x_2)
= \int_0^\infty \dif{\tau}
  \int \dif[d]q
  \exp \left( -i(x_1-x_2) \cdot q - \tau (q^2 + m^2) \right)\\
= \pi^{d/2} \int_0^\infty \dif{\tau} \frac{1}{\tau^{d/2}}
  \exp - \left( \frac{(x_1-x_2)^2}{4 \tau} + m^2 \tau \right).
\end{multline}
If there are external legs carrying momentum $k_v$ attached to
vertex $v$, then we are left with the factor
\begin{equation*}
\exp -( ix_v \cdot k_v) .
\end{equation*}
\begin{figure}
\centering \includegraphics[width=3cm]{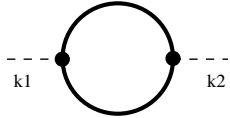}
\caption{Self energy diagram of $\phi^3$-theory. Dashed lines:
  amputated legs. Momenta are incoming.}
\label{fig:phi3-se}
\end{figure}
As an example, the one-loop self
energy of scalar $\phi^3$ theory (cf. fig. \ref{fig:phi3-se}) is
\begin{align}
\label{eqn:one-loop-se}
I(k_1, k_2)
=& \frac{1}{2}\cdot \frac{c_3}{(2\pi)^{\frac{3d}{2}}}
   \int \dif[d]x \exp \left( -ix\cdot k_1 \right) \cdot
   \frac{c_3}{(2\pi)^{\frac{3d}{2}}}
   \int \dif[d]y \exp \left( -iy\cdot k_2 \right) \nonumber\\
 &\pi^{d/2} \int_0^\infty \dif{\tau_1} \frac{1}{\tau_1^{d/2}}
  \exp - \left( \frac{(x-y)^2}{4 \tau_1} + m^2 \tau_1 \right) \\
 &\pi^{d/2} \int_0^\infty \dif{\tau_2} \frac{1}{\tau_2^{d/2}}
  \exp - \left( \frac{(x-y)^2}{4 \tau_2} + m^2 \tau_2 \right).\nonumber
\end{align}
By shifting $y \rightarrow y + x$, we eliminate $x$ from
the quadratic exponent; the $x$-integration is then seen to represent
external momentum conservation.

The propagator (\ref{eqn:alpha:propagator}) is related to the
well-known Gaussian kernel for the Wiener path integral describing
the diffusion of a particle for a time $\tau$
in $d$ (Euclidean) dimensions. We can write it formally as
\begin{align}
\label{def:WienerMeasure}
\mathscr K_\tau(x_1, x_0)
=& Z_0^{-1} \int_{\substack{x(\alpha)=x_1\\x(0)=x_0}} \mathscr Dx(t) \exp
     - \int_0^\alpha \dif{t} \left\{ \frac{1}{4} \dot{x}^2 \right\}\nonumber\\
=& \left( \frac{1}{4\pi \alpha} \right)^{d/2}
\exp - \frac{(x_1 - x_0)^2}{4\tau}.
\end{align}
The prefactor $\frac{1}{4}$ in the exponent is chosen in
order to agree with the standard literature, eg
\cite{Strassler:1992zr}. The kernel is normalised to
\begin{equation}
\label{eqn:Knorm}
\int \dif[d]{x_1} \mathscr K_\tau(x_1, x_0) = 1.
\end{equation}
Comparing with (\ref{eqn:alpha:propagator}), we find that the
propagator for a scalar quantum field can be expressed through
the heat kernel by
\begin{align}
\label{eqn:Gm:masssquare}
G_m(x-y)
=& \pi^{d/2}
   \int_0^\infty \dif{\tau} \frac{1}{\tau^{d/2}}
   \exp - \left( \frac{(x-y)^2}{4 \tau} + m^2 \tau \right) \nonumber\\
=& (2\pi)^d \int_0^\infty \dif{\tau}
   \exp \left( - m^2 \tau \right)
   \mathscr K_{\tau}(x-y).
\end{align}
One could interpret the exponential prefactor
as a dissipative term absorbing the diffusing
particle. The particle mass $m$
enters only in the dissipative part, and for $m=0$, we are
left with the dissipation-free heat kernel. This formula demands
$[\tau] = L^2$, so we must be cautious when we interpret $\tau$ as ``time'' (the reason
is that we have left out the diffusion constant).
The upshot is that the Euclidean field theory propagator is obtained by
averaging over diffusion ``times'' $\tau$,
with a weight factor falling off exponentially.
The diffusion picture does not extend to cover vertices, as it is.

In this approach, the Schwinger parametrised form of the
correlation function reads for any mass
\begin{multline}
\label{eqn:G:massless}
G_{\mathscr G}(k_1, \dots, k_n)
=  \frac{(2\pi)^{\frac{nd}{2}}}{\Sym(\mathscr G)}\,
   \Big( \prod_{\mathrm{vertices}\; v} - c_v
   \int \dif[d]{x_v} e^{-i k_v \cdot x_v}\Big) \\
   \prod_{\mathrm{propagators}\; j}
   \int_0^\infty \frac{\dif{\tau_j}}{(4\pi\tau_j)^{d/2}}
   \exp - \left( \frac{(x_{1(j)}-x_{2(j)})^2}{4 \tau_j} + m^2 \tau_j \right).
\end{multline}
Here, $\Sym(\mathscr G)$ is the symmetry factor of the graph $\mathscr G$, and
all powers of $2\pi$ at the vertices have been cancelled against the
propagators.

\subsection{Conformal propagators}
\label{sec:conf}

The Schwinger parametrisation is useful also to represent conformal
propagators
\begin{equation}
G_\Delta(x-y) = \frac{1}{|x-y|^{2\Delta}}.
\end{equation}
The scaling behaviour will be contained solely in a $\tau$-dependent
prefactor. Introducing a Schwinger-like integral representation
in coordinate space
\begin{equation*}
\frac{1}{(x^2)^\Delta}
= \frac{1}{\Gamma(\Delta)} \int_0^\infty \dif{\alpha}
  \alpha^{\Delta-1} e^{-\alpha x^2},
\hspace{1cm}\Re \Delta > 0,
\end{equation*}
we can compute the Fourier transform as
\begin{align*}
\int \dif[d]x e^{-i q \cdot x} \frac{1}{(x^2)^\Delta}
=& \frac{1}{\Gamma(\Delta)}
   \int \dif[d]x e^{-i q \cdot x}
   \int_0^\infty \dif{\alpha} \alpha^{\Delta-1} e^{-\alpha x^2}\\
=& \frac{\pi^\dhalf}{\Gamma(\Delta)}
   \int_0^\infty \dif{\alpha} \alpha^{\Delta-\dhalf-1}
      e^{-\frac{q^2}{4 \alpha}}
\end{align*}
by completing the square. Substituting $\alpha \rightarrow
(4\tau)^{-1}$, we get the usual Schwinger parametrisation
\begin{equation}
\label{eqn:Schwinger:conf}
G_\Delta(q) = \frac{2^{d - 2 \Delta} \pi^{\frac{d}{2}}}{\Gamma(\Delta)}
     \int_0^\infty \dif\tau \tau^{\frac{d}{2} - \Delta - 1}
         e^{-\tau q^2}.
\end{equation}
This representation is special insofar as the exponential part takes exactly
the form of a massless propagator. The only modification is the
power of $\tau$ in the Schwinger kernel. If $\Re \Delta < \dhalf$, we can evaluate the
integral explicitly to obtain
\begin{equation}
G_\Delta(q) = \frac{2^{d - 2 \Delta} \pi^{\frac{d}{2}}
                \Gamma(\frac{d}{2} - \Delta)}{\Gamma(\Delta)}
     |q|^{2\Delta - d}.
\end{equation}
Note that even when $\Delta$ is not within the bounds indicated, we
may by analytic continuation reach almost every complex $\Delta$.

\section{World-graph formalism}

Bosonic string theory can be formulated as a theory of $d$ scalar
``coordinate'' fields living on the two-dimensional string
world sheet (a Riemann surface which can have an arbitrary
topology). In the infinite string tension limit, strings are
effectively reduced to point-like particles (string shrunk to zero
length). Only very few string excitations survive
this limit, and it has been shown that it can
be consistently treated as a field theory.

The Feynman graphs of the limiting field theory may be quite literally interpreted as
the shrivelled remains of the string world sheet under infinite tension.
As they are one-dimensional, it has become
customary to refer to the propagators in this context as ``world lines'';
they are sewn together at the vertices.
On the technical level, scattering amplitudes mediated by
string interactions are turned into
a Schwinger parametrised version of field theoretic perturbation theory.
From a world sheet point of view however, the discrimination between
``free strings'' and ``vertices'' is artificial; given a section
of the world sheet, the question of whether it is a part of a
``vertex'' or not does not make sense at all.

\comment{
\edit[2ndrevision]{Shortened}
One way to look at string theory is string field theory
 \cite{Witten1986, Zwiebach1993} which closely resembles the Lagrangian
perturbation theory of quantum fields.
This framework is based on a set of ``free string states'' embedded in
surrounding space, with a time evolution describing the transition
of one string state at global time $T_1$ into another at time
$T_2$. Formally, they are living in a (pseudo-)Hilbert Fock space,
generated by variety of quantum fields
whose excitations are interpreted quite literally as elementary string excitations, and there is
a large gauge group corresponding to the string reparametrisation symmetry
to be taken into account. Strings may split up or fuse at ``string vertices'' as global time
goes by. This is modeled by a multitude of perturbative interaction terms.

The other perspective on (bosonic) string theory is simply that of a
theory of $d$ scalar ``coordinate'' fields living on the two-dimensional string
world sheet (a Riemann surface which can have an arbitrary
topology). In the infinite-tension limit of string theory, strings are
effectively reduced to point-like particles (string shrunk to zero
length); in the string field theory treatment, only very few
string excitations survive this limit, implying that it can
be consistently treated as a field theory.

The Feynman graphs of the limiting field theory are quite literally interpreted as
the shrivelled remains of the string world sheet under infinite tension.
As they are one-dimensional, it has become
customary to refer to the propagators in this context as ``world lines'';
they are sewn together at the vertices.
From a world sheet point of view however, the discrimination between
``free strings'' and ``vertices'' is artificial; given a section
of the world sheet, the question of whether it is a part of a
``vertex'' or not does not make sense at all.

In the vein of parallelism, as string field field theory has borrowed
from the pool of ideas of perturbative quantum field theory,
let us reverse the argument and ask whether there is an approach to
field theory which resolves the special treatment of the vertices.
This approach is found in the generalisation of Schwinger parametrisation.

\edit[2ndrevision]{Original}
Let us once more return to string theory.
As the control over ``simple'' strings (ie worldsheets with a simple
topology) grew, there arose the need to handle more complex
scenarios, containing ``several'' strings, and at the same time there
was the urgent need to face the challenge of traditional quantum field
theory on the predictive side. It turned out that it is possible to
give a formulation
of string theory which closely resembles the Lagrangian perturbation
theory of quantum fields via Feynman diagrams, sharing the traditional
language of QFT and offering a solution to the problem of
treating higher genus string worldsheets at the same time.

This framework is based on a set of single string states embedded in
the surrounding space, with a time evolution describing the transition
of one string state at global time $T_1$ into another at time
$T_2$. Strings may split up or fuse at ``string vertices'' as time
goes by, and the network of ``interacting strings'' is treated by a
``string field theory'' \cite{Witten1986, Zwiebach1993}. The
similarity of this approach to Lagrangian perturbation theory of
quantum fields via Feynman diagrams is palpable and intentional.
At the bottom of the pit, there is the hidden conviction that it does
make sense to approximate the Hilbert space of an interacting theory
by adiabatically turning on the interaction of some underlying free
fields (resp. ``free strings''). It has already been pointed out that
by taking the infinite-tension limit of the strings, they are
effectively reduced to point-like particles (string shrunk to zero
length), and on the technical side scattering amplitudes mediated by
string interactions are turned into
a Schwinger parametrised version of field theoretic perturbation theory.
The single particle plane wave ``states'' of this theory are therefore
described only by a momentum in $d$-dimensional space.

The other perspective on (bosonic) string theory is simply that of a
theory of $d$ scalar ``coordinate'' fields living on the two-dimensional string
world sheet (a Riemann surface which can have an \textit{arbitrary}
topology). In the infinite-tension limit,
the role of the world sheet is taken by the graphs; and because they are
one-dimensional except at the vertices, it has become
customary to refer to the propagators making up the Feynman graph in
this context as ``world lines''.  The fields living on the propagators are
then the $d$ components of $x(t)$.
From the string theory point of view however, there is no
discrimination between ``proper strings'' and ``vertices''; given a section
of the world sheet, the question of whether it is a part of a
``vertex'' or not does not make sense at all \footnote{Even in string
  field theory, the vertices are not ``points''.}.
In the vein of parallelism, as string field field theory has borrowed
from the pool of ideas of perturbative quantum field theory,
let us reverse the argument and ask whether there is an approach to
field theory which resolves the special treatment of the vertices.
This approach is found in the generalisation of Schwinger parametrisation.
}

Let us reverse the argument and ask whether there is an approach to
field theory which resolves the special treatment of the vertices.
This approach is found in the generalisation of Schwinger parametrisation.
Instead of using ``world lines'', we will employ
the concept of a ``world graph'' -- a Feynman graph
$\mathscr G$ is treated as a manifold with branching points at
the vertices which is mapped into the ambient space $\mathbb R^d$.
The \textit{world graph path integral} is a
weighted integral over all allowed embeddings of this kind.
No longer enter external particles the world graph at
``special points'' (vertices); rather, they are implemented by
sliding ``operator insertions'' which have to be taken care of in the
path integral.

At the branching points (vertices) of the
world graph, it will be required to impose a continuity condition.
It is precisely this continuity condition which marks the
difference between world line and world graph formalism.
The representation of the propagators is taken over unmodified from the
world line formalism: We associate a ``length'' $\tau_j$ to each
propagator in the graph; the final amplitude is obtained by performing
the world graph path-integral and integrating over all lengths with
the appropriate weight factor. The lengths $\tau_j$ will be called
``moduli'', in reference to the term used in string theory, where the moduli
characterise different conformal equivalence classes of Riemannian metrics on
the string world sheet. The parameter space for the moduli is
the \textit{moduli space} $\Mod[\mathscr G]$.
The ``dimension'' $\dim [\mathscr G]$ of a graph $\mathscr G$ is by
definition the dimension of its moduli space $\Mod[\mathscr G]$.
It is important to recognize that the positions $t_j$ of
the operator insertions are part of the moduli.
Graphs which are assigned such Schwinger lengths will be called
``metric'', in distinction to the usual Feynman graphs without
Schwinger lengths, which will be called ``non-metric''.


\comment
{
Let us illustrate this in the self-energy example
(\ref{eqn:one-loop-se}). Using the stated formulas, we have
\begin{align*}
I(k_1, k_2)
=& \frac{1}{2} \cdot \frac{c_3}{(2\pi)^{\frac{3d}{2}}}
   \int \dif[d]x \exp \left( -ix\cdot k_1 \right) \cdot
   \frac{c_3}{(2\pi)^{\frac{3d}{2}}}
   \int \dif[d]y \exp \left( -iy\cdot k_2 \right) \\
 & \frac{(2\pi)^d}{2m} \int_0^\infty \dif{\tau_1} \exp \left( - \frac{m \tau_1}{2} \right)
   \mathscr K_{\tau_1}^m(x,y) \\
 & \frac{(2\pi)^d}{2m} \int_0^\infty \dif{\tau_2} \exp \left( - \frac{m \tau_2}{2} \right)
   \mathscr K_{\tau_2}^m(y,x) \\
=& \frac{1}{2} \cdot \frac{c_3}{(2\pi)^{\frac{3d}{2}}}
   \int \dif[d]x \exp \left( -ix\cdot k_1 \right) \cdot
   \frac{c_3}{(2\pi)^{\frac{3d}{2}}}
   \int \dif[d]y \exp \left( -iy\cdot k_2 \right) \\
 & \frac{(2\pi)^{2d}}{4 m^2}
   \int_0^\infty \dif{T} \int_0^T \dif{\tau_1} \exp \left( - \frac{m T}{2} \right)
   \mathscr K_{\tau_1}^m(x,y) \mathscr K_{T - \tau_1}^m(y,x) \\
=& \cdots
   \int_0^\infty \frac{\dif{T}}{2T} \int_0^T \dif{\tau_1} \int_0^T \dif{\tau_2}
   \exp \left( - \frac{m T}{2} \right)
   \mathscr K_{|\tau_1 - \tau_2|}^m(x,y)
   \mathscr K_{T - |\tau_1 - \tau_2|}^m(y,x).
\end{align*}
We introduce first a modulus $T$ which gives the \textit{total} length
of the loop; next, we switch from \textit{relative} positions of the
vertices to \textit{absolute} positions. We have to introduce a
symmetry factor $1/T$ to take care of the overcounting (the factor
$1/2$ takes care of the usual discrete symmetry of the two external
legs). The measure is independent of the positions $\tau_1$ and
$\tau_2$, however.
}

\subsection{Equivalence classes and cells of moduli space}

In the usual Schwinger parametrised Feynman graphs (world line
formalism), the order of the external (amputated) legs entering the graph is
fixed. The integration of the Schwinger parameters varies just the
distances in-between them; this is equivalent to
letting the insertions slide over the branches of the graph, without
changing their order, and integrating over all possible branch
lengths.

The world graph formalism does not admit a special treatment
of vertices; so we should expect that it makes sense to integrate over
\textit{all} admissible localisation points and
orders of the insertions.
When we integrate over moduli space and let the operator insertions
slide over the diagram, we find a natural sum over different orderings
of the external legs on the same branch of the diagram, and also
include the case where the external legs are inserted onto different
branches (see fig. \ref{fig:slide}).
\begin{figure}
\centering \includegraphics[width=11cm]{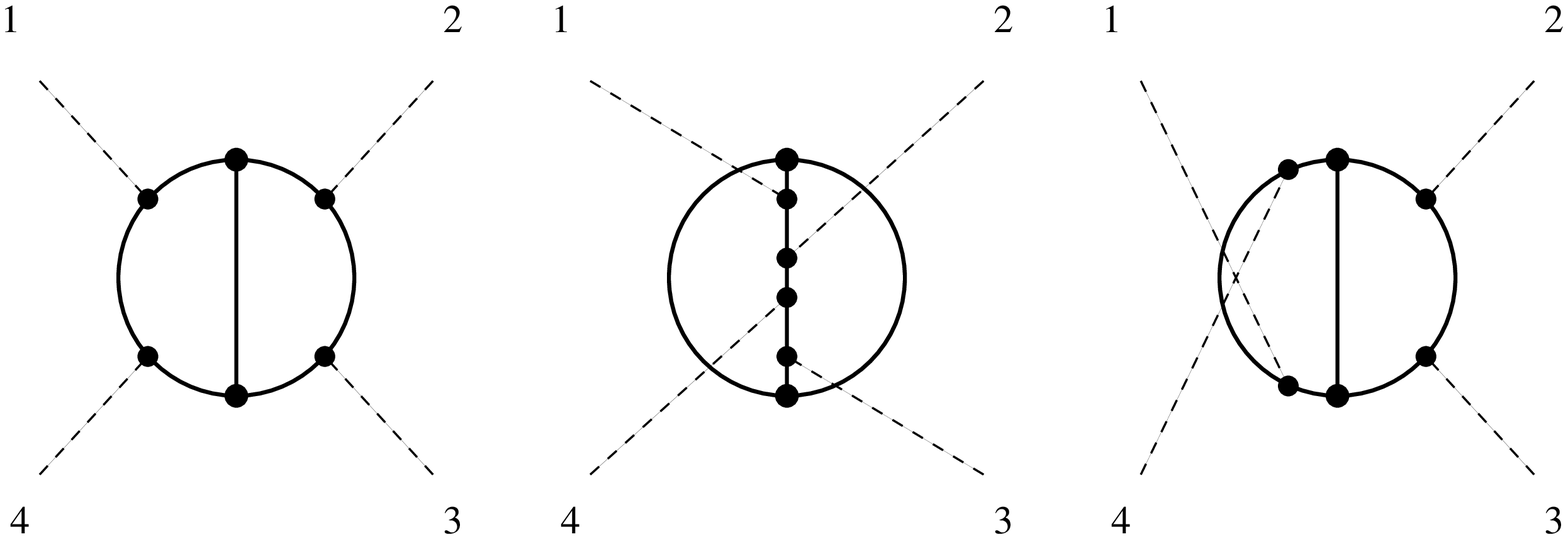}
\caption{Three different graphs grouping into the same equivalence class under
  class scheme A.}
\label{fig:slide}
\end{figure}
This clearly means that using the world graph formalism, we
necessarily will obtain \textit{sums over different Feynman graphs}.
Each sum defines a subset of the total set of all admissible Feynman
graphs (where we accept the Feynman rules as a priori given),
and because the world graph formalism should be equivalent to
the usual sum over Feynman graphs, it is important that each graph is
part of precisely one such sum. In other words, the set of all Feynman
graphs is partitioned into (mutually exclusive) equivalence
classes. The class containing the connected graph $\mathscr G$ will be denoted
$[\mathscr G]$; it is a set of non-metric connected Feynman graphs.
The moduli space cells $\Mod[\mathscr G]$ are in fact
parametrising these equivalence classes, rather than just single
graphs. Given a class $\mathfrak g \equiv [\mathscr G]$ and a
particular set of moduli $\tau \in \Mod \mathfrak g$, the particular
metric graph identified by the moduli will be denoted
$\mathfrak g(\tau)$. The moduli space $\Mod \mathfrak g$ is a measure space
of dimension $\dim \mathfrak g$, and the moduli $\tau$ are coordinates
on this space. The measure on $\Mod \mathfrak g$ is 
derived from the proposed equivalence to the usual
diagrammatic computation.

We will suggest a set of rules telling us how, given a graph
$\mathscr G$, we can subsequently generate all $\mathscr G'
\in [\mathscr G]$ in the equivalence class; it is easy to
show that these relations are reflexive, symmetric and transitive, and
therefore define a true partitioning into equivalence classes.
Let us point out that there are several consistent ways to define the
classes; we discuss them in turn.

As a prerequisite, we have to classify the operator insertions: Each insertion is
connected to a number of external legs entering the diagram, its
``external valency''; in the simplest case, it will be one. We
assume that all external legs are distinguishable. Likewise,
there is a number of internal legs connected to the insertion (the
``internal valency''). The sum of external and internal valency is the
total valency of the insertion, and this settles the coupling needed
at the insertion (ignoring the question of different particle types
for the moment). \textit{We stress once more that ``ordinary''
  vertices are treated throughout as operator insertions with external
  valency 0 in the world graph formalism.}

\begin{defn*}[Equivalence Class A]
Given a graph $\mathscr G$, the equivalence class $[\mathscr G]$ is
generated by letting insertions with internal valency 2 slide over the
complete graph, changing their order as they go along. Insertions with
internal valency other then 2 (1 or larger than 2) cannot slide;
however, if there are several insertions with identical valencies
(internal \textit{and} external), then the external legs attached to these
insertions may be \textit{permuted} groupwise; ie if several external legs
are attached to the same insertion in $\mathscr G$, then they must be
attached to the same insertion for every graph in $[\mathscr G]$
(external legs are ``sticky'').

The continuous moduli are thus given by coordinates of the sliding
insertions on the graph, and by the metric of the underlying ``torso''
\footnote{Symmetries will be discussed below.}; in addition, there are
discrete moduli counting possible permutations of the external legs.
\end{defn*}
This definition may seem a little arbitrary; however, it is the one which
is best suited to the statistical analogy which will be introduced below.
An example of discrete moduli is given in fig. \ref{fig:discrete}.
We will nevertheless write the integral over moduli space as $\int
\dif[{\dim [\mathscr G]}]{\tau}$.
\begin{figure}
\centering \includegraphics[width=10cm]{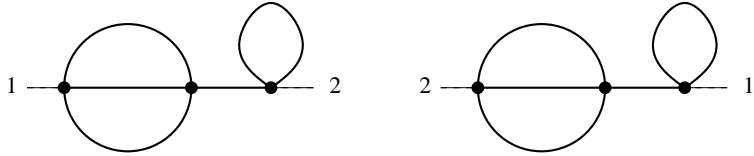}
\caption{External legs 1 and 2 can be interchanged without modifying the
  valency of the insertions. There is a discrete modulus
  (eg $\tau_{12} \in \{ 12, 21 \}$).}
\label{fig:discrete}
\end{figure}

There are further possibilities: A rather strict one is
\begin{defn*}[Equivalence Class B]
Given a graph $\mathscr G$, the equivalence class
$[\mathscr G] = \{ \mathscr G \}$ is minimal.
There are no permutations or rearrangements.
The corresponding moduli space $\Mod[\mathscr G]$ is made up of the
usual Schwinger parameters.
\end{defn*}
This is the class concept of the world line formalism. Finally, we may
be very liberal and make
\begin{defn*}[Equivalence Class C]
Given a graph $\mathscr G$, the equivalence class $[\mathscr G]$
contains all graphs which can be assembled by cutting all
internal propagators of $\mathscr G$ and
reconnecting the remaining insertions (retaining their internal valency)
in an arbitrary manner, under the constraint that the resulting
graph is connected.

The moduli space counts all different possible topologies, and on each
topology there are continuous moduli controlling the metrics.
\end{defn*}
Note that this includes ``discrete permutations''; the external legs
are still sticky. It is easy to see that the loop number
is constant within a class; the remaining class invariants are derived
from the types of operator insertions.
This latter class concept is the one which
is most closely related to closed string theory: A closed string world
sheet with finite tension is topologically characterised solely by its
loop number. Incidentally, it is the one which is
obtained following the approach of Bern and Kosower
\cite{Bern1992}.

In a wider sense, the moduli space of all possible graphs is made up
of different ``cells'' $\Mod[\mathscr G]$. In this
way, the total moduli space (containing all graphs) has a natural cell
structure \cite{Gopakumar:2005fx}. The class concepts introduced
regulate the extent of these cells. It is obvious that $B \leq A \leq C$, ie
B defines a subpartitioning of A, and A subpartitions C. The
partitioning A seems to be the one which incorporates systematically
the world graph concept without making the classes unnecessarily
large. This should not be mistaken for a physical statement: It is
rather one of convenience.

\ed On the other hand, the partitioning C is the only one treating
internal and external valencies alike. This can be seen by
following example: Consider two vertices $v$ and $w$ sliding along
one branch of a graph $\mathscr G$; assume that these vertices are
distinguishable (ie by their total valency); for concreteness, assume
that vertex $v$ is 3-valent and vertex $w$ 4-valent. We have to
decide what to do with the one remaining leg of $v$ and
the two remaining legs of $w$. If the remaining
legs of the vertices are all external (and thus amputated), then $v$ and $w$ may
exchange their order on the branch within $[\mathscr G]$ by
A. However, if the remaining legs are connected internally by a propagator to some
other vertex of the graph, they may not, by A. This is not so in C: There,
diagrams are completely rewired, and every ordering is included.

For the purpose of the examples given in this text, the concept A is
broad enough. The generalisations to C are immediate in most
cases. For this reason, \textit{we will in the rest of this text
  adhere to the class concept A}.

\paragraph{Symmetry factors.}
Each Feynman graph has to be divided by a symmetry
factor which is obtained in the usual way from the perturbative expansion
(it is the size of the automorphism group of the graph).
These factors are identically taken over in the world graph formalism.
If the topology of the underlying diagram varies
within a cell of moduli space, the symmetry factor may change.
For an important practical aspect, see however the comments
at the end of section \ref{sec:wp}.

\paragraph{Particle types.}
If there are different sorts of particles involved, then every
propagator has to be assigned a particle type. Two graphs of the same topology are
different by definition if the particle types are not completely identical.
The particle types of external, amputated legs are fixed by assumption.
In this case, we make the agreement that all possible assignments of
particle types to the internal propagators which can be satisfied by
a set of vertices from the Feynman rules are part of the class.
As we generate the graphs in a class, each time the topology changes
or one insertion crosses another, we have to change the particle type
of the propagators. This implies that the couplings have to vary as well.
There will be topologies that can be fulfilled (because the Feynman
graph corresponding to this particular ordering can be constructed
from the couplings) and topologies that will fail (because there
is no corresponding Feynman graph). There have to be additional,
discrete moduli to keep track of particle types and couplings.

As far as we allow arbitrary sets of vertices and
particles, this is already the end of the story. When the
analogy to string theory is deepened however, we expect that
string theory puts serious restrictions on the possible types of
particles and vertices, and their coupling constants.
The simplification of the amplitudes which has been
mentioned in the introduction should be present only for
very particular theories, and supposedly the
field theories motivated by string theory are strong candidates
here. Eg, we expect such simplifications for non-Abelian gauge
theories. The choice of a suitable class concept is at the heart of
these supposed simplifications.

Although, if we take serious the string theory parallelism, we should
only consider massless propagators, massive particles can without
problems be included into the scheme, with certain qualifications.
If all propagators have the same mass, then the mass prefactor is
trivially given by the total length of the graph.
If the propagators carry different masses, we
will not be able to give such a concise description: The operator
insertions describing external legs change the ``phase''
(mass) of the world graph lines. A possible way out is the inclusion
of a further ``particle type'' or mass field $m(t)$ on the world graph
and to describe the operator insertions as symmetric matrices
connecting different ``mass'' spaces (see section \ref{sec:mass}
below).

\subsection{Formulation of world-graph path integral}
\label{sec:wp}

Let us recapitulate: To compute the amplitude corresponding to an
equivalence class of Feynman graphs $\mathfrak g$, we first select one
particular metric graph $\mathfrak g(\tau)$ by a choice of Schwinger
parameters $\tau \in \Mod \mathfrak g$. On $\mathfrak g(\tau)$, we put
a theory of $d$ Euclidean massless scalar fields
$x: \mathfrak g(\tau) \rightarrow \mathbb R^d$ whose
dynamics is described by a diffusion (Wiener) process.
External legs are treated as operator insertions $e^{-i k_v \cdot x(t_v)}$;
and after integrating the fluctuations of the $d$ coordinate fields
$x(t)$, we have to integrate over the moduli space $\Mod \mathfrak g$.
This contains an integral over all possible positions of the
operator insertions on $\mathfrak g(\tau)$ as well as all possible
Schwinger lengths of the propagators with
the appropriate measure; there is a sum over the discrete moduli,
controlling permutations of the external legs, and finally over the
different topologies in the class $\mathfrak g$.

Note the important distinction between \textit{points on the
  graph $\mathfrak g(\tau)$} as a (singular) manifold which will be denoted
by small Latin letters $s,t,v \in \mathfrak g(\tau)$ and continuous ``moduli''
$\tau, T \in \mathbb R_+$, denoting \textit{distances or lengths} on
the graph. The location of the operator insertions is
determined by the moduli. By choosing a coordinate system
(parametrisation) of the graph resp. the propagators, a point
$t \in \mathfrak g(\tau)$ can sometimes be assigned a number - its coordinate. The
moduli, on the other hand, are independent of a choice of coordinates.

Instead of developing step-by-step the world graph formalism, we will
state at once the respective form of the world graph path integral and
prove subsequently that the amplitude obtained in this way is indeed
identical to direct computation by the usual Feynman rules.
Without loss of generality, we will study only connected graphs in the sequel.
We need some technical tools to begin with.

Introduce for vector-valued functions
$f,g : \mathfrak g(\tau) \rightarrow \mathbb R^d$ a real scalar product
\begin{equation}
\langle f, g \rangle
\equiv \int_{\mathfrak g(\tau)} \dif{t} f(t) \cdot g(t).
\end{equation}
It is similarly defined for scalars. This product defines a
real Hilbert space of functions on the metric graph $\mathfrak g(\tau)$.

The \textit{graph Laplacian} is an operator acting on functions defined
on the graph as a one-dimensional manifold with branching points at
the vertices
\footnote{These graphs have recently been termed ``quantum graphs'' in the
physics community. For an introduction and overview, see
\cite{Kuchment2004} and references therein.}.
This is not the discrete graph Laplacian;
we define the Laplacian $\triangle = \partial_t^2$
for functions on the graph as the one-dimensional continuous Laplacian along the
parametrised links of the graph; at the vertices, we get a
distributional contribution
\begin{equation}
\label{eqn:Laplacegraph}
\triangle f(t)
= \sum_{\mathrm{vertices}\; v}
  \left(\sum_{\mathrm{adjacent\;links\;}l} \lim_{(\text{$s$ on $l$}) \rightarrow v}
  f'(s) \right) \delta_v(t) + \mathrm{propagator\;contribs.}
\end{equation}
(the Dirac distribution $\delta_v(t)$ on the graph is defined as
\begin{equation*}
\int_{\mathfrak g(\tau)} \dif{t} \delta_v(t)\; g(t) = g(v)
\end{equation*}
for a continuous function $g : \mathfrak g(\tau) \rightarrow \mathbb R^d$).
While the first derivative of a function on the graph demands an
orientation of the links, the second derivative is well-defined
without this concept. The rule (\ref{eqn:Laplacegraph}) applies also
at vertices with only \textit{one} internal propagator attached (such
vertices have two or more external propagators attached).

Usually, we need a domain which makes the graph Laplacian
a self-adjoint operator. The treatment of the graph Laplacian is not much
different from the well-known treatment of the one-dimensional Laplacian $\triangle$
on the unit circle $S^1$, since the graphs we are considering are,
with exception of the vertices, compact one-dimensional manifolds.
A self-adjoint domain $\mathcal D(\triangle)$ can be constructed by closing
the subspace of continuous functions with respect to the finite Sobolev norm
$\| f \|_{H^2}^2 = \int_{\mathfrak g(\tau)} \dif{t}
[f^2 + (\partial_t f)^2 + (\triangle f)^2]$; on this domain, the graph Laplacian
is symmetric by integration by parts (the marked difference to the $S^1$-case
is the use of the graph derivative $\partial_t$ in this Sobolev norm).
Since the domain is maximal, this is also the domain of self-adjointness.

Let $t_j \in \mathfrak g(\tau)$ be the point on the graph where the external
momentum $k_j$ enters the graph; then in the world graph path integral,
we have to include a factor $e^{-i k_j \cdot x(t_j)}$.
For a concise notation, the external momentum ``density''
can be modelled by a generalised function
\begin{equation}
\label{def:pbracket}
k\{\tau\} : t \mapsto \sum_j k_j \delta_{t_j}(t).
\end{equation}
The argument $\{\tau\}$ indicates that the positions of the operator
insertions are parametrised by the moduli. Thus we have
\begin{equation}
\sum_j k_j\cdot x(t_j)
= \int_{\mathfrak g(\tau)} \dif{t} x(t) \cdot k_j \delta_{t_j}(t)
= \langle x, k\{\tau\} \rangle.
\end{equation}
The contribution of all operator insertions in the path integral is
then given by a factor $e^{-i\langle x, k\{\tau\} \rangle}$.
We will start with the simpler situation where all propagators have
the same mass $m(t) \equiv m$; then the mass term will contribute a factor
$\exp \left( - m^2 |\mathfrak g(\tau)| \right)$, where
$|\mathfrak g(\tau)|$ is the total length of the graph.

\begin{thm}
\label{thm:worldgraph}
Let $\mathscr G$ be a compact Feynman graph. Let $k_j$, $j=1 \dots n$
be a collection of external momenta. The amplitude corresponding
to the sum of all graphs in the equivalence class
$\mathfrak g = [\mathscr G]$
is given by the formal world graph path integral
\begin{multline}
\label{eqn:thm:world graph}
G_{[\mathscr G]}(k_1, \dots, k_n)
=  (2\pi)^{\frac{nd}{2}}
   \Big( \prod_{\mathrm{vertices}\; v} - c_v \Big)
   \int_{\Mod \mathfrak g} \frac{\dif[\dim \mathfrak g]{\tau}}
                              {\Sym(\mathfrak g(\tau))}\\
   e^{- m^2 |\mathfrak g(\tau)|} Z_0(\mathfrak g(\tau))^{-1}
   \int_{C(\mathfrak g(\tau))} \mathscr D(x)
   \exp - \left( - \frac{1}{4} \langle x, \triangle x \rangle 
    + i \langle x , k\{\tau\} \rangle \right),
\end{multline}
where $Z_0(\mathfrak g(\tau))$ is a (formal) normalisation
depending on the moduli.
\end{thm}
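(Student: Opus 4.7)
The natural starting point is the naive Schwinger representation (\ref{eqn:G:massless}), which already expresses the amplitude of a single graph $\mathscr{G}$ as a product, over propagators, of heat kernels $\mathscr{K}_{\tau_j}(x_{1(j)}-x_{2(j)})$ times external-vertex factors $e^{-ik_v\cdot x_v}$, the whole integrated over the coordinate fields $x_v\in\mathbb{R}^d$ at the vertices and over the $\tau_j\in\mathbb{R}_+$. The plan is to rewrite this product as a single Gaussian functional integral governed by the graph Laplacian on $\mathfrak{g}(\tau)$, and then account for the sum over all graphs $\mathscr{G}'\in[\mathscr{G}]$ by extending the integration to the whole cell $\Mod\mathfrak{g}$ of moduli space.

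The key technical step is the rewriting of propagator kernels as a path integral on the world graph. On each propagator $j$, the heat kernel is already of the form (\ref{def:WienerMeasure}), i.e.\ a Wiener integral over paths $x_j:[0,\tau_j]\to\mathbb{R}^d$ with fixed endpoints, whose action is $\tfrac14\int(\dot x_j)^2$. If one now demands that the collection $\{x_j\}$ glues to a continuous function $x:\mathfrak{g}(\tau)\to\mathbb{R}^d$ on the graph viewed as a one-dimensional manifold with branching points, the product of propagator Wiener measures is exactly the Wiener-type measure associated with the one-dimensional Laplacian along each edge, while the integrations $\int d^d x_v$ at every vertex become the free integrations over the common endpoint value $x(v)$ of all incident edges. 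Integration by parts combined with the vertex term in (\ref{eqn:Laplacegraph}) identifies the exponent with $\tfrac14\langle x,-\triangle x\rangle$; the self-adjoint domain constructed from the $H^2$-Sobolev norm forces the vertex boundary terms to vanish and thus enforces the continuity condition automatically. The insertion factor $e^{-i\sum_v k_v\cdot x_v}$ is rewritten, using (\ref{def:pbracket}), as $e^{-i\langle x,k\{\tau\}\rangle}$, giving the integrand of (\ref{eqn:thm:world graph}) for the particular metric graph $\mathfrak{g}(\tau)$.

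It remains to pass from a single graph to the entire class $\mathfrak{g}=[\mathscr{G}]$. Here one invokes the definition of equivalence class A: moving an internal‑valence‑2 insertion past another along a link, or across a vertex, changes the underlying non-metric graph $\mathscr{G}$ but corresponds, on the world‑graph side, simply to varying the modulus $t_v\in\mathfrak{g}(\tau)$ continuously through the cell $\Mod\mathfrak{g}$. Summing (\ref{eqn:G:massless}) over all $\mathscr{G}'\in[\mathscr{G}]$ therefore amounts to replacing the separate $\tau_j$-integrations with an integration over the full cell, with the Jacobian being trivial because both measures restrict to $\prod d\tau_j$ on any region of fixed ordering of insertions. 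Discrete moduli (permutations of external legs attached to non-sliding insertions, and topology changes within the class) are handled by an explicit sum, and the symmetry factor $\Sym(\mathfrak{g}(\tau))$ inherited from the perturbative expansion may vary piecewise across $\Mod\mathfrak{g}$; one checks that the symmetry quotient it implements is compatible with the overcounting introduced by letting insertions slide freely through coincident configurations.

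The principal obstacle is precisely this last bookkeeping: verifying that the cell measure $d^{\dim\mathfrak{g}}\tau/\Sym(\mathfrak{g}(\tau))$ reproduces, after partitioning $\Mod\mathfrak{g}$ into regions of constant ordering and constant topology, the sum of per-graph measures in (\ref{eqn:G:massless}) without double counting, and that the discrete jumps of $\Sym$ at stratum boundaries match the change of automorphism groups when insertions collide or cross a vertex. The Laplacian-to-Wiener identification in the middle step, though conceptually the heart of the theorem, is essentially forced once the self-adjoint domain with continuous boundary values is fixed, and can be verified on any single edge by direct comparison with (\ref{def:WienerMeasure}); the $Z_0(\mathfrak{g}(\tau))^{-1}$ simply absorbs the edge-wise normalisations $(4\pi\tau_j)^{-d/2}$ appearing in (\ref{eqn:G:massless}), which is why it is only required formally.
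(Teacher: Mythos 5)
Your proposal is correct in outline but proceeds in a genuinely different direction from the paper. You work ``upwards'': each heat kernel $\mathscr K_{\tau_j}$ is recognised as an edge-wise Wiener integral, the vertex integrations $\int\dif[d]{x_v}$ glue these into a single path integral over continuous fields on $\mathfrak g(\tau)$, and integration by parts identifies the glued action with $-\frac14\langle x,\triangle x\rangle$, so the comparison with the Schwinger representation happens at the level of the unevaluated (formal) path integrals, with $Z_0$ absorbing the edge normalisations. The paper instead works ``downwards'': it performs the Gaussian integral over the vertex coordinates in the Schwinger representation using the discrete covariance matrix $C$ built from the $\tau_j^{-1}$, performs the formal Gaussian integral in the world-graph path integral using the continuous graph Laplacian, and then proves the key identity $C^{\text{inv}}_{ij}=\varphi(t_i,t_j)$ (by checking that $C$ applied to the potential $U_\mathrm{tot}$, which is piecewise linear and satisfies $\triangle U_\mathrm{tot}=-2k\{\tau\}$, returns the charges $k_i$), after which $Z_\text{eff}$ is read off by comparison. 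Your route is more structural and arguably closer to the spirit of the theorem as stated, but it leaves implicit precisely the quantitative content that the paper's route delivers for free: the explicit effective pair potential $\varphi(t_i,t_j)$ and the closed formula (\ref{eqn:Zeff:C}) for the normalisation, both of which are what every subsequent example actually uses. Conversely, the bookkeeping you flag as the principal obstacle --- matching the cell measure $\dif[\dim\mathfrak g]{\tau}/\Sym(\mathfrak g(\tau))$ against the per-graph measures and the jumps of the symmetry factor at stratum boundaries --- is handled in the paper only by the observation that the scheme-B parameters are in bijection with the continuous moduli and by the informal multiplicity argument of figure \ref{fig:symmetry}, so on that point your treatment is at the same (heuristic) level of rigour as the original.
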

The domain $C(\mathfrak g(\tau))$ is a reminder that
the paths are supposed to be continuous at the vertices.
We stress once more that the positions $t_j$ of the operator
insertions are part of the moduli.
The normalisation $Z_0(\mathfrak g(\tau))$ will be determined below.

There is a subtlety concerning
the symmetry factors in this formula: When the symmetry factors
are taken over from the usual perturbative expansion, they may vary in general
as the positions of the external insertions are varied. On the other hand,
the symmetry factor could be determined with \textit{all external insertions
removed}. Since the symmetry factor is the size of the automorphism group
of the graph, we would expect it to increase generally (since without the external
insertions, there are less distinguishable features on the graph).
When the external insertions are now again included, we have to
take into account all possible insertion positions, and parametrise them
by additional moduli. The point is that features of the graph which are
\textit{indistinguishable} from the point of view of the graph automorphism group
are very well distinguishable from the point of view of the moduli.
This causes an extra multiplicity which exactly cancels
the surplus symmetry factors of the underlying graph without external insertions
(see figure \ref{fig:symmetry} for an illustration).

\begin{figure}[htbp]
\begin{center}

{
\begingroup\makeatletter
\gdef\SetFigFont#1#2{}%
\endgroup%
}

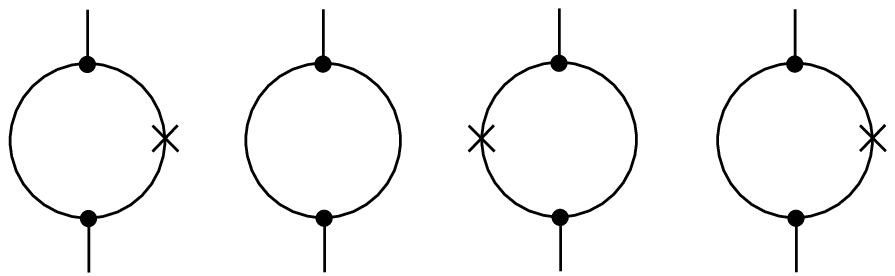
\caption{Example for symmetry factor counting. a. Subgraph from a bigger graph.
The symmetry factor of this loop is 1, since there is an external insertion (X) which
allows to distinguish both loop handles.
b.i) The subgraph without the external insertion is divided by a symmetry factor 2, since
both loop handles are indistinguishable.
ii) + iii) The modulus determining the position of the 
external insertion may now place it on either handle
of the loop. Since the resulting graphs
are topologically equivalent, they add identically and just cancel the factor 1/2.
}
\label{fig:symmetry}
\end{center}
\end{figure}

\section{Interpretation as an effective theory of point particles
  carrying vector charges on the graph}

In order to prove the theorem, we will reformulate the world graph
path integral in such a manner that it will be seen to be
equivalent to the partition function of a classical system of charged
particles moving on the graph. This result is closely related to
the results of Schmidt and Schubert \cite{Schmidt1994}, although we choose a
different language (see also \cite{Lam1993, Feng1994}). It is reminiscent
of the Born-Oppenheimer approximation to the hydrogen molecule, where
after determining the effective potential for the nuclei
mediated by the electrons, one analyses the motion of
the nuclei in this effective potential (integration over moduli
space). This is an interesting result in itself, but it will also
aid the proof.

Consider the Gaussian integral in (\ref{eqn:thm:world graph}).
The integration of the global translation degree of freedom yields the
usual factor $(2\pi)^d\delta^{(d)}(\sum_j k_j)$, so
\begin{multline*}
\int_{C(\mathfrak g(\tau))} \mathscr D(x)
  \exp - \left( - \frac{1}{4} \langle x, \triangle x \rangle 
  + i \langle x , k\{\tau\} \rangle \right) \\
\sim (2\pi)^d \delta^{(d)} \Big( \sum_j k_j \Big)
  \exp \Big \langle k\{\tau\}, \triangle^{-1} k\{\tau\} \Big \rangle.
\end{multline*}
The exponent will be very important later on. It defines an
``interaction potential'' $V_\text{eff}(t_1, \dots, t_n)$ of the
operator insertions by
\begin{align*}
V_\mathrm{eff}(t_1, \dots, t_n)
=& -\Big \langle k\{\tau\}, \triangle^{-1} k\{\tau\} \Big \rangle
\hspace{1cm}\Big(\sum_j k_j=0 \Big).
\end{align*}
If we collect all normalisations into
\begin{equation}
\label{eqn:Zeff}
Z_\mathrm{eff}^{-1}(\mathfrak g(\tau))
= Z_0(\mathfrak g(\tau))^{-1} \; \Big\| \frac{-\triangle}{4\pi} \Big\|_+^{-d/2}
\end{equation}
(read further), then
\begin{multline*}
Z_0(\mathfrak g(\tau))^{-1}
   \int_{C(\mathfrak g(\tau))} \mathscr D(x)
   \exp - \left( - \frac{1}{4} \langle x, \triangle x \rangle 
    + i \langle x , k\{\tau\} \rangle \right) \\
= (2\pi)^d \delta^{(d)} \Big( \sum_j k_j \Big)
  Z_\mathrm{eff}^{-1}(\mathfrak g(\tau))\, e^{- V_\mathrm{eff}(t_1, \dots, t_n)}.
\end{multline*}
This formula deserves a few comments on
the graph Laplacian $\triangle$. It is easy to see that
$\triangle \; 1=0$, so the kernel of $\triangle$ is nonempty and
$\triangle^{-1}$ is not uniquely defined in the first place. We
therefore declare that we wish to study the $\triangle^{-1}$ obeying
\begin{equation}
\langle 1, \triangle^{-1} f \rangle = 0,
\hspace{1cm}f \in \Dom(\triangle^{-1})
\end{equation}
(this scalar product is vector-valued, as the left hand side is a
scalar and the right hand side is a vector).
$-\triangle$ is a positive operator and its kernel consists of the
constant functions. It has a pure point spectrum, as
$\mathfrak g(\tau)$ is compact. 
In the determinant (\ref{eqn:Zeff}), we should therefore ignore the 0 eigenvalue
on the right-hand side as it has been already taken care of in the
explicit inclusion of momentum conservation. This is indicated by the
symbol $\|.\|_+$.

On the other hand, $\langle 1, \triangle g \rangle = 0$; so
$\triangle^{-1} f$ is only defined for $f$ with
$\langle 1, f \rangle = 0$. But
\begin{equation*}
\Big \langle 1, k\{\tau\} \Big \rangle = \sum_j k_j = 0
\end{equation*}
precisely due to momentum conservation.

Na\"ively, we would interpret the effective pair potential between two
insertions $k_i \delta_{t_i}$ and $k_j \delta_{t_j}$ as
\begin{equation*}
-2 \langle k_i \delta_{t_i}, \triangle^{-1} k_j \delta_{t_j} \rangle,
\end{equation*}
however, as it stands, $\delta_{t_j} \not\in \Dom(\triangle^{-1})$
because $\langle 1, \delta_{t_j} \rangle = 1$. There is a canonical
solution to the problem.

\comment
{
Let $\mathcal P$ be any scalar linear
operator, acting on vector functions on the graph elementwise, such that
\begin{eqnarray}
\label{eqn:Pconditions}
\mathcal P f =& f& \mathrm{if}\; f \in \Dom(\triangle^{-1})\nonumber \\
\langle 1, \mathcal P f \rangle =& 0.
\end{eqnarray}
These conditions fix $\mathcal P$ to act as
$\mathcal P f(t) = f(t) - \rho(t) \int_\mathscr{G} \dif{\theta}
f(\theta)$, where $\int_\mathscr{G} \dif{t} \rho(t) = 1$. Therefore,
$\mathcal P \delta_\theta(t) = \delta_\theta(t) - \rho(t)$.
The function $\rho$ serves as a ``countercharge''-distribution which
is inserted in order to make the total charge of the system vanish.

Because $\mathcal P k\{\tau\} = k\{\tau\}$, we have
\begin{align*}
- \Big \langle k\{\tau\}, \triangle^{-1} k\{\tau\} \Big \rangle
=& - \Big \langle \mathcal P k\{\tau\}, \triangle^{-1} \mathcal P k\{\tau\}
   \Big \rangle \\
=& - \sum_{i,j} (k_i \cdot k_j) \Big \langle \mathcal P \delta_{t_i},
   \triangle^{-1} \mathcal P \delta_{t_j} \Big \rangle.
\end{align*}
The pair potential $\varphi_{\mathcal P}$ induced by $\mathcal P$ is
thus the symmetric function
\begin{equation}
\label{def:phi}
\varphi_{\mathcal P}(t_1, t_2)
= - 2 \langle \mathcal P \delta_{t_1},
   \triangle^{-1} \mathcal P \delta_{t_2} \rangle,
\end{equation}
and the total effective potential is
\begin{equation*}
V_\mathrm{eff}(t_1, \dots, t_n) = \frac{1}{2}
  \sum_{i,j} (k_i \cdot k_j) \varphi_{\mathcal P}(t_i, t_j).
\end{equation*}
The role of the charges is taken by the external momenta $k_i$; they
are ``vector charges''. As the graphs are compact without a boundary,
we have to have total charge 0 on the graph by Gauss' law.

The potential $\varphi_\mathrm{P}$ has the
disadvantage that $\varphi(t, t) \not= 0$ in general; so the
self-energy of the charges does not vanish. The solution is to define
\begin{equation}
\label{def:phisub}
\varphi_{\mathcal P}^\mathrm{sub}(t_1, t_2)
= \varphi_{\mathcal P}(t_1, t_2)
  - \frac{\varphi_{\mathcal P}(t_1, t_1) + \varphi_{\mathcal P}(t_2, t_2)}{2}.
\end{equation}
Obviously,
\begin{align*}
\sum_{i,j} (k_i \cdot k_j)\; \varphi_{\mathcal P}^\mathrm{sub}(t_i, t_j)
=& \sum_{i,j} (k_i \cdot k_j)\;\varphi_{\mathcal  P}(t_i, t_j)
\hspace{1cm}\Big( \sum_j k_j = 0 \Big)
\end{align*}
because the subtracted terms in (\ref{def:phisub}) do not depend
either on $t_1$ or $t_2$. The subtracted potential now fulfills
$\varphi_{\mathcal P}^\mathrm{sub}(t, t) = 0$.
\begin{lem}
$\varphi^\mathrm{sub}_{\mathcal P}$ is independent of the choice of
$\mathcal P$. It is the unique symmetric pair potential reproducing
$V_\text{eff}(t_1, \dots t_n)$ with the property
that $\varphi(t,t)=0$ for all $t \in \mathscr
G$. $\varphi^\mathrm{sub}_{\mathcal P}$ is a nonpositive function.
\end{lem}
\begin{proof}
By linearity,
\begin{align}
\varphi_\mathcal{P}^\mathrm{sub}(t_1, t_2)
=& \frac{\varphi_{\mathcal P}(t_1, t_2)
   + \varphi_{\mathcal P}(t_2, t_1)
   - \varphi_{\mathcal P}(t_1, t_1)
   - \varphi_{\mathcal P}(t_2, t_2)}{2}\nonumber\\
=& \langle \mathcal P (\delta_{t_1} - \delta_{t_2}),
   \triangle^{-1} \mathcal P (\delta_{t_1} - \delta_{t_2}) \rangle\\
=& \langle \delta_{t_1} - \delta_{t_2},
   \triangle^{-1} (\delta_{t_1} - \delta_{t_2}) \rangle \leq 0.
   \nonumber
\end{align}
The uniqueness is clear because if there are two pair potentials $\varphi$
and $\tilde \varphi$ with the property $\varphi(t,t) = \tilde \varphi(t,t) = 0$,
then in the presence of only two charges $k$ and $-k$, we have by definition
$V_\text{eff}(t_1, t_2) = -(k \cdot k) \varphi(t_1, t_2)
= -(k \cdot k) \tilde \varphi(t_1, t_2)$.
\end{proof}
We will therefore briefly denote
$\varphi \equiv \varphi_\mathcal{P}^\mathrm{sub}$.
The proof illustrates nicely that the interaction between parallel vector
charges is a repulsive one. It does not depend on the distribution of other
charges on the graph. It is a bounded potential (so it is very weak).
} 

{
While $\delta_{t_i}$ is not in the domain of $\triangle^{-1}$,
the difference $\delta_{ij} \equiv \delta_{t_i} - \delta_{t_j}$
certainly is. We may take advantage of this by employing
repeatedly the momentum conservation condition $\sum_j k_j = 0$ (in
the first and third equality) as
\begin{multline*}
\Big\langle \sum_i k_i \delta_{t_i},\,
   \triangle^{-1} \sum_j k_j \delta_{t_j} \Big\rangle
= \Big\langle \sum_i k_i \delta_{i1},\,
   \triangle^{-1} \sum_j k_j \delta_{j1} \Big\rangle
= \sum_{i,j} (k_i \cdot k_j) \Big\langle \delta_{i1},\,
   \triangle^{-1} \delta_{j1} \Big\rangle\\
= \frac{1}{2} \sum_{i,j} (k_i \cdot k_j) \Big\langle \delta_{i1} - \delta_{j1},\,
   \triangle^{-1} (\delta_{j1} - \delta_{i1}) \Big\rangle
= - \sum_{i<j} (k_i \cdot k_j) \langle \delta_{ij},\,
   \triangle^{-1} \delta_{ij} \rangle.
\end{multline*}
Defining the pair potential
\begin{equation}
\label{def:homosapiens}
\varphi(t, t')
= \langle \delta_t - \delta_{t'},\,
          \triangle^{-1} (\delta_t - \delta_{t'}) \rangle,
\end{equation}
the total effective potential is
\begin{equation}
V_\mathrm{eff}(t_1, \dots, t_n)
= \sum_{i<j} (k_i \cdot k_j) \varphi(t_i, t_j).
\end{equation}
While we have used total momentum conservation, we find a pair
potential which is nevertheless independent of the positions of the
other charges on the graph. It is a continuous function on
$\mathfrak g(\tau) \times \mathfrak g(\tau)$ and bounded (so it is weak).
By definition, $\varphi(t,t) = 0$. As $-\triangle^{-1}$ is a
positive operator, in general $\varphi(t, t') \leq 0$. This
implies that the interaction between parallel vector
charges is a repulsive one, since
$(k \cdot k) \varphi(t,t) \geq (k \cdot k) \varphi(t, t')$ for
all positions $t,t' \in \mathfrak g(\tau)$ and charges $k \in \mathbb R^d$:
Spatial separation of the charges is energetically favoured.
} 

\comment{   
The next issue is to obtain clarity about the normalisation
$Z_\mathrm{eff}^{-1}$. Regarding (\ref{eqn:Zeff}), we can see that
the factor $Z_0(\mathscr G)$ cancels precisely
the fluctuation part \textit{between} the
vertices, ie with the coordinates of the vertices fixed;
what is left is the quotient of the fluctuations which
are \textit{linear} between the vertices, and a prefactor for each
propagator ensuring the normalisation to unit integral:
\begin{align}
\label{eqn:Zeff:1}
Z_\mathrm{eff}^{-1}
= \left( \prod_{\mathrm{vertices}\;v} \int \dif[d]{x_v} \right) \delta^{(d)}(x_1)
  \prod_{\mathrm{propagators}\;j} \left( \frac{1}{4\pi \tau_j} \right)^{d/2}
  \mathscr K_{\tau_j}^{1/2}(x_{1(j)}-x_{2(j)}),
\end{align}
where $x_{1(j)}$ and $x_{2(j)}$ are the endpoints of the propagators. As an
example, all three diagrams in fig. \ref{fig:slide} lead to the normalisation
\begin{align*}
Z_\mathrm{eff}^{-1}
=& \int \dif[d]{x}
  \left( \frac{1}{4\pi \tau_3} \right)^{d/2} \mathscr K_{\tau_1}^{1/2}(x)
  \left( \frac{1}{4\pi \tau_2} \right)^{d/2} \mathscr K_{\tau_2}^{1/2}(x)
  \left( \frac{1}{4\pi \tau_1} \right)^{d/2} \mathscr K_{\tau_3}^{1/2}(x)\\
=&  \left( \frac{1}{4^3\pi^3 \tau_1 \tau_2 \tau_3} \right)^{d/2}
  \int \dif[d]{x} \exp - \left( \frac{1}{\tau_1} + \frac{1}{\tau_2}
                             + \frac{1}{\tau_3}\right)
  \frac{x^2}{4} \\
=&  \left( \frac{1}{16 \pi^2
           (\tau_1 \tau_2 + \tau_1 \tau_3 + \tau_2 \tau_3)} \right)^{d/2}.
\end{align*}
} 

Putting everything together, the total amplitude
(\ref{eqn:thm:world graph}) resulting from the
equivalence class $\mathfrak g = [\mathscr G]$ can be written as
\begin{multline}
\label{eqn:G:1}
G_{[\mathscr G]}(k_1, \dots, k_n)
=  (2\pi)^{\frac{nd}{2}}\,
   (2\pi)^d \delta^{(d)} \Big( \sum_j k_j \Big)
   \Big( \prod_{\mathrm{vertices}\; v} - c_v \Big)\\
   \int_{\Mod \mathfrak g} \frac{Z_\mathrm{eff}^{-1}(\mathfrak g(\tau))\,
                               \dif[\dim \mathfrak g]{\tau}}
                              {\Sym(\mathfrak g(\tau))}
   \exp \left( - \sum_{\text{vertices $i<j$}} (k_i \cdot k_j) \varphi(t_i, t_j)
   - m^2 |\mathfrak g(\tau)| \right)
\end{multline}
(note that the interaction has been written here as a true pair
potential, ie the sum extends over each unordered pair $\{i, j\}$ only once).
The (vector valued) potential generated by all charges on the graph is
\begin{equation}
\label{def:potential:varphi}
U_\mathrm{tot}(t) = \sum_{\text{vertices $j$}} k_j \varphi(t, t_j),
\hspace{2cm}t \in \mathfrak g(\tau),
\end{equation}
and we have
\begin{equation}
\label{def:Epot:varphi}
\sum_{\text{vertices $i<j$}} (k_i \cdot k_j) \varphi(t_i, t_j)
= \frac{1}{2} \sum_{\text{vertices $i$}} k_i \cdot U_\mathrm{tot}(t_i).
\end{equation}
The potential on the graph fulfills
\begin{equation}
\label{eqn:deq:Pot}
\triangle U_\mathrm{tot}(t) = -2 k\{\tau\},
\end{equation}
so in between the insertions, $U_\mathrm{tot}(t)$ is a linear function (with
vanishing second derivative).

\subsection{Proof of theorem \ref{thm:worldgraph}}

The proof starts from the direct Schwinger representation
(\ref{eqn:G:massless}) valid also for massless propagators.
We introduce a matrix notation for the exponent.
Define the symmetric covariance matrix $C \in M_V(\mathbb R)$ as
follows: if $v \not= w$, then
\begin{equation}
\label{def:Cvw}
C_{vw} = - \frac{1}{2} \sum_{j(v \leftrightarrow w)} \tau_j^{-1}
\hspace{1cm}(v \not= w).
\end{equation}
The sum extends over all propagators $j$ connecting $v$ and $w$
directly; if there is no propagator connecting $v$ and $w$ directly
then the matrix element $C_{vw} = 0$.
The diagonal elements are then chosen in such a way that
the sum of each row/column equals zero. The matrix $C$ is a linear
combination of elementary matrices of the form
\begin{equation*}
\Em^{vw}_{ij}
= \delta_i^v \delta_j^v + \delta_i^w \delta_j^w
  - \delta_i^v \delta_j^w - \delta_i^w \delta_j^v.
\end{equation*}
These matrices generate exactly the squares of the coordinate differences
\begin{equation*}
x^T \Em^{vw} x = (x_v - x_w)^2,
\end{equation*}
with the silent understanding that the entries $x_v \in \mathbb R^d$
of the vector $x$ are themselves coordinate vectors. In terms
of these building blocks,
\begin{equation}
\label{eqn:def:C}
C = \frac{1}{2} \sum_{v < w} \Big( \sum_{j(v \leftrightarrow w)}
  \tau_j^{-1} \Big) \Em^{vw}.
\end{equation}
Using the matrix $C$, we may write the amplitude (\ref{eqn:G:massless}) as
\begin{multline*}
G_{\mathscr G}(k_1, \dots, k_n)
=  \frac{(2\pi)^{\frac{nd}{2}}}{\Sym(\mathscr G)}\,
   \Big( \prod_{\mathrm{vertices}\; v} - c_v
   \int \dif[d]{x_v} \Big)\\
   \Big( \prod_{\mathrm{propagators}\; j}
   \int_0^\infty \frac{\dif{\tau_j}}{(4\pi\tau_j)^{d/2}} \Big)
   \exp \Big( - \frac{1}{2} x^T C x - i k^T x - m^2 \sum_j \tau_j \Big).
\end{multline*}
We can say a few things about the spectrum of $C$.
Let $e=\frac{1}{\sqrt V}(1, 1, \dots, 1)^T$. By construction, $C e = 0$.
Because we consider connected graphs, the kernel of $C$ contains only
multiples of $e$. Furthermore, because $x^T C x$ is a sum of
squares and never vanishes except when all $x_v$ coincide, $C$ is
strictly positive with the exception of the eigenspace generated by
$e$.

For the purpose of integrating $\dif[d]{x_v}$, we have to
invert the singular matrix $C$. One can easily see that the 0
eigenvalue again enforces momentum conservation. The generalised
inverse
\begin{equation*}
C^\text{inv} = \lim_{c \rightarrow \infty} (C + cee^T)^{-1}.
\end{equation*}
always exists, because the eigenvector $e$ decouples,
and $C$ is strictly positive elsewhere.
When we insert the momentum-conserving $\delta$-distribution in the
end, we have to include a factor $V^{d/2}$ because of the
normalisation of the eigenvector $e$. The amplitude is thus
\begin{multline}
\label{eqn:G:Zeff}
G_{\mathscr G}(k_1, \dots, k_n)
= \frac{(2\pi)^{\frac{nd}{2}}}{\Sym(\mathscr G)}\,
(2\pi)^d \delta^{(d)}\Big( \sum_j k_j \Big)
\Big( \prod_{\text{vertices $v$}} - c_v\Big) \\
\Big( \prod_{\text{propagators $j$}} \int_0^\infty
   \frac{\dif{\tau_j}}{(4 \pi \tau_j)^{d/2}}\Big)
   V^{d/2} \Big\| \frac{C}{2 \pi} \Big\|_+^{-d/2}
\exp \Big( - \frac{1}{2} k^T C^\text{inv} k
     - m^2 \sum_j \tau_j \Big).
\end{multline}
The non-singular part of the determinant can be obtained by
\begin{equation*}
\| C \|_+ = \det (C + ee^T).
\end{equation*}
Noting that the ``ordinary'' Schwinger parameters $\tau_j$ which we
have used are in one-to-one correspondence to the continuous moduli of 
the equivalence class scheme B, we can easily obtain a sum over the
classes of scheme A or scheme C by summing over the necessary
orderings of the operator insertions, and over the discrete moduli
defining the permutations of the insertions, and possibly over graph
topologies (this is possible since the equivalence classes B are
subclasses of A and C). Hence, equality with theorem
\ref{thm:worldgraph} is established if we can show that the
exponential coincides with the one in (\ref{eqn:G:1}), ie if
\begin{equation}
\label{eqn:Cvarphi}
C^\text{inv}_{ij} = \varphi(t_i, t_j).
\end{equation}
$C^\text{inv}$ should be the matrix analog to the pair potential
$\varphi$. Rather than proving this formula directly, we
compute
\begin{multline*}
\sum_{jl} C_{ij} \varphi(t_j, t_l) k_l
= \sum_j C_{ij} U_\mathrm{tot}(t_j) \\
= \frac{1}{2} \sum_{\substack{\text{propagators}\;l\\
                          \text{ending at vertex $2(l) = i$}}}
  \frac{U_\mathrm{tot}(t_i) - U_\mathrm{tot}(t_{1(l)})}{\tau_l}
= \frac{1}{2} \sum_{\substack{\text{prop.s $l$}\\
                              \text{with $2(l) = i$}}}
   U_\mathrm{tot}'(t) \Big|_{\text{$t$ on $l$}}
=  k_i
\end{multline*}
by the fact that the potential is linear along the propagators,
and equation (\ref{eqn:deq:Pot}). This proves (\ref{eqn:Cvarphi}).
By comparison, we find that the normalisation constant must be given by
\begin{equation}
\label{eqn:Zeff:C}
Z_\text{eff}^{-1}(\mathfrak g(\tau))
= V^{d/2} \Big\| \frac{C}{2 \pi} \Big\|_+^{-d/2}
  \prod_{\text{propagators}\;j} \frac{1}{(4 \pi \tau_j)^{d/2}}.
\end{equation}
This concludes the proof.

As a side-effect, we have found a closed formula for the normalisation
$Z_\text{eff}^{-1}(\mathfrak g(\tau))$. In many cases, the following
lemma states a more useful form:

\begin{lem}
\label{lem:Zeff:1}
The measure $Z_\text{eff}^{-1}(\mathfrak g(\tau)) \dif[\dim \mathfrak g]\tau$
on moduli space $\Mod \mathfrak g$ is given by
\begin{align*}
Z_\mathrm{eff}^{-1}(\mathfrak g(\tau))
= \Big( \prod_{\mathrm{vertices}\;v} \int \dif[d]{x_v} \Big) \delta^{(d)}(x_1)
  \prod_{\mathrm{propagators}\;j} \left( \frac{1}{4\pi \tau_j} \right)^{d/2}
  e^{-\frac{1}{4 \tau_j} (x_{1(j)}-x_{2(j)})^2},
\end{align*}
where $x_{1(j)}$ and $x_{2(j)}$ are the endpoints of the propagators
and $x_1$ is an arbitrary vertex on the graph.
\end{lem}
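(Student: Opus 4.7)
The plan is to evaluate the $\prod_v \int d^d x_v$ Gaussian integral in the claimed formula directly and match it term-by-term with the closed expression (\ref{eqn:Zeff:C}) derived during the proof of Theorem~\ref{thm:worldgraph}. Everything is in place: only the zero-mode bookkeeping of the singular matrix $C$ has to be done carefully.

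First I would rewrite the exponent in the claimed integrand in matrix form. Using the definition (\ref{eqn:def:C}) of $C$ and the elementary matrices $\Em^{vw}$, one has identically
\begin{equation*}
\sum_{\text{propagators }j}\frac{(x_{1(j)}-x_{2(j)})^2}{4\tau_j}
= \frac{1}{2} x^T C x,
\end{equation*}
so the expression in the lemma becomes
\begin{equation*}
\prod_{j}\frac{1}{(4\pi\tau_j)^{d/2}}\,
\Big(\prod_v \int d^d x_v\Big)\,\delta^{(d)}(x_1)\,
\exp\!\Big(-\tfrac{1}{2} x^T C x\Big).
\end{equation*}
The matrix $C$ is singular with one-dimensional kernel spanned by the translation mode $e = V^{-1/2}(1,\dots,1)^T$, which is precisely why the unrestricted Gaussian integral would diverge; the role of $\delta^{(d)}(x_1)$ is to cure this by fixing one vertex.

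Next I would separate the zero mode. Writing $x = \alpha e + y$ with $y\in e^\perp$, we have $x_1 = \alpha/\sqrt{V} + y_1$, so for scalar coordinates
\begin{equation*}
\int d\alpha\,\delta\!\Big(\tfrac{\alpha}{\sqrt V} + y_1\Big) = \sqrt{V}.
\end{equation*}
Upgrading to $d$ dimensions produces a factor $V^{d/2}$ and leaves a Gaussian integral over $y$ on the $(V-1)$-dimensional space $e^\perp$. On $e^\perp$ the matrix $C$ is strictly positive with determinant equal to $\|C\|_+$, so that integral yields $(2\pi)^{d(V-1)/2}/\|C\|_+^{d/2} = \|C/(2\pi)\|_+^{-d/2}$.

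Collecting the factors,
\begin{equation*}
\Big(\prod_v\int d^d x_v\Big)\,\delta^{(d)}(x_1)\,\prod_j\frac{1}{(4\pi\tau_j)^{d/2}}\,
e^{-\frac{(x_{1(j)}-x_{2(j)})^2}{4\tau_j}}
= V^{d/2}\,\Big\|\frac{C}{2\pi}\Big\|_+^{-d/2}\,\prod_j\frac{1}{(4\pi\tau_j)^{d/2}},
\end{equation*}
which is exactly (\ref{eqn:Zeff:C}). Hence both sides agree as functions of the moduli $\tau$, establishing the lemma. The only slightly delicate step is the zero-mode accounting producing $V^{d/2}$ from the mismatch between fixing $x_1=0$ (the coordinate constraint imposed by the delta function) and projecting out $e$ (the eigen-constraint that diagonalises $C$); once the unit normalisation of $e$ is tracked, the rest is a routine Gaussian computation.
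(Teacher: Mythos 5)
Your proposal is correct, but it takes a different route from the paper's own proof. You evaluate the Gaussian integral in the claimed formula from scratch: rewrite the exponent as $\frac{1}{2}x^T C x$ via (\ref{eqn:def:C}), split off the translation zero mode $e=V^{-1/2}(1,\dots,1)^T$, use $\delta^{(d)}(x_1)$ to saturate it (correctly producing the factor $V^{d/2}$ from the normalisation of $e$), and perform the strictly positive Gaussian integral on $e^\perp$ to land on $V^{d/2}\|C/(2\pi)\|_+^{-d/2}\prod_j(4\pi\tau_j)^{-d/2}$, i.e.\ (\ref{eqn:Zeff:C}). The paper instead avoids redoing this integral: it equates the two already-established forms of the amplitude, (\ref{eqn:G:massless}) and (\ref{eqn:G:Zeff}), integrates $\int \dif[d]{k_1}$ (which converts $e^{-ik_1\cdot x_1}$ into $(2\pi)^d\delta^{(d)}(x_1)$ on one side and enforces $k_1=0$ against the momentum-conserving delta on the other), and sets the remaining $k_j$ to zero so that both exponentials trivialise; the lemma then drops out of (\ref{eqn:Zeff:C}). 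Your version is more self-contained and makes the zero-mode bookkeeping explicit rather than inheriting it from the generalised inverse $C^{\text{inv}}=\lim_{c\to\infty}(C+cee^T)^{-1}$ used in the theorem's proof; the paper's version is shorter because it reuses that computation as a black box by testing the amplitude identity at a special momentum configuration. Both establish the same equality, and your accounting of the mismatch between the coordinate constraint $x_1=0$ and the eigenvector normalisation of $e$ is exactly where the paper's factor $V^{d/2}$ originates.
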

\begin{proof}
Equate (\ref{eqn:G:Zeff}) and (\ref{eqn:G:massless}). Pick an
arbitrary vertex $x_1$. Integrate $\int \dif[d]{k_1}$ and
put all other external momenta $k_j$ to zero, $j=2\dots V$.
The proposed expression is obtained by substituting equation
(\ref{eqn:Zeff:C}).
\end{proof}

\subsection{Coincidence of vertices: Cell structure of moduli space. Renormalisation.}
\label{sec:Richard}

If the modulus $\tau_j$ associated to any one
propagator in a graph $\mathfrak g(\tau)$ shrinks to 0 (see
fig. \ref{fig:fuse}), the two adjacent vertices concur
in the limit.
\begin{figure}
\centering \includegraphics[width=11cm]{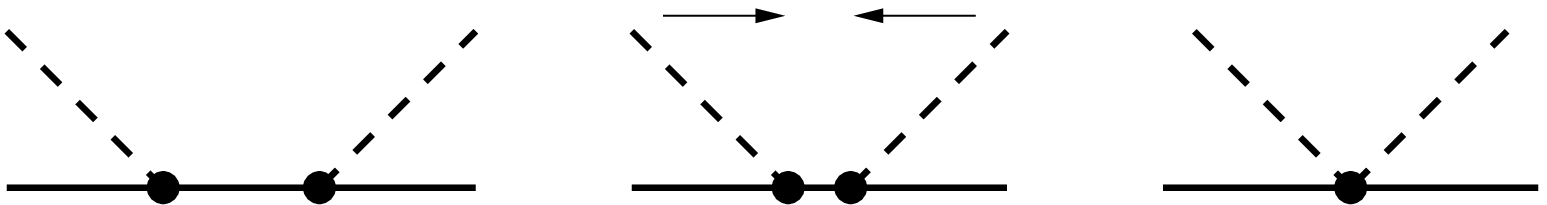}
\put(-150,12){$\tau_j$}
\put(-268,12){$\tau_j$}
\caption{(left) Section of metric graph $\mathfrak g(\tau)$ containing
  a propagator  with modulus $\tau_j$. The amplitude contains the
  product $c_3^2$ of coupling constants.
  (middle) While $\tau_j \rightarrow 0$, the insertions are approaching
  each other. Still, the amplitude is proportional $c_3^2$.
  (right) After the insertions fuse, the new metric graph
  $\mathfrak g'(\tau')$ is proportional $c_4$.}
\label{fig:fuse}
\end{figure}
It is important to realise that there is a fundamental difference
between the limit $\tau_j \rightarrow 0$ of the graph $\mathfrak g(\tau)$,
and the graph $\mathfrak g'(\tau')$ containing a single vertex,
obtained through fusion of the pair of adjacent vertices
($\tau'$ are ``reduced'' moduli, removing $\tau_j$ from
the moduli $\tau$).
While the modulus $\tau_j$ is a continuous parameter,
a coordinate parametrising the integrand of the moduli space integration, the
hypersurface $\tau_j=0$ lies on the boundary of
$\Mod \mathfrak g$ and therefore has measure zero; so it does not
contribute (for divergences, so below).

In contrast, the class $\mathfrak g'$ - viewed as an
independent contribution to the total correlation -
has a nonvanishing measure in general. That $\mathfrak g'(\tau')$ and
$\left. \mathfrak g(\tau) \right|_{\tau_j = 0}$
are truly different can also be seen from that fact that the valency of
the fused vertex and therefore the prefactor assembled from the
product of coupling constants is different.
In technical terms, $\mathfrak g'$ specifies a different
cell of moduli space.

It is interesting to study the global structure of the
complete moduli space, and examine the relations between different
cells. Following the literature, we claim that
$\Mod \mathfrak g' \subset \partial \Mod \mathfrak g$, ie the
cell resulting if one\ed (or several) moduli $\tau_j \rightarrow 0$ makes up part of
the boundary of the original cell \cite{Gopakumar:2005fx}.
For the dimensions of the adjacent cells $\mathfrak g$ and
$\mathfrak g'$, obviously $\dim \mathfrak g' < \dim \mathfrak g$.
In this way, the moduli space has a
natural complex structure (in the topological sense).

Another kind of boundary is reached in the limit
$\tau_j \rightarrow \infty$:
Momentum transfer through the propagator is increasingly suppressed;
this cell boundary is made up of the
graph where the propagator is missing out altogether.
Note that the limiting graph still contains \textit{two} propagators on both
sides, as well as additional 2-valent vertices (mass terms).
It may happen that the graph falls
apart into two components in this limit; a systematic treatment
therefore has to include disconnected graphs.

\commentpaper
{
\begin{figure}
\centering \includegraphics[width=8cm]{threefour.eps}
\caption{The graphs (b) in the limit $\tau \rightarrow 0$ degenerate
  into (a). Under $\tau\rightarrow \infty$, they will approximate (c).}
\label{fig:threefour}
\end{figure}

For concreteness, consider the graph in fig. \ref{fig:threefour} (a).
The 4-valent vertex is the $\tau
\rightarrow 0$~limit of either one of the three graphs (b) in the $s$-,
$t$- and $u$-channel. If we take the limit $\tau \rightarrow
\infty$ on the other hand, then as limiting graphs we get (c). As mentioned, the latter
case requires that we extend the world graph formalism to graphs with
several connection components. There is a subtlety concerning
(c): The graphs still contain \textit{two} propagators on both
unconnected components; we should have correspondingly a 2-valent
vertex in the theory (a mass term).

At the same time, the moduli space cells for the graphs in (b) are
themselves on the boundary of a higher-dimensional moduli space cell,
e.g. one having a second propagator parallel to the propagator connecting the
two vertices.
}

This opens the door for speculations whether there occurs an ultimate
simplification in the amplitude when we extend the sum over all different
cells of moduli space (prior to integration of the moduli).
There is one context where this is indeed required, namely in
renormalisation. Rewriting the na\"ive Schwinger
parametrisation of the propagator as
\begin{equation}
\label{eqn:UVdivergent}
\lim_{\varepsilon \rightarrow 0+} \int_\varepsilon^\infty
\dif{\tau_j} e^{-\tau_j(q^2 + m^2)}
= \lim_{\varepsilon \rightarrow 0+}
  \frac{e^{-\varepsilon(q^2 + m^2)}}{q^2 + m^2},
\end{equation}
in the limit $\varepsilon \rightarrow 0$, the
suppression of high $q^2$ momentum contributions due to the
regularising exponent vanishes and a UV divergence
is a possible consequence. We make the following
\begin{ass*}
Cancellations due to renormalisation are \textit{local} in moduli space.
\end{ass*}
It is immediate that the necessary (formally infinite)
counterterms must come from the neighbouring moduli space cell
reached in the limit $\tau_j \rightarrow 0$.

For IR divergences, rewrite the Schwinger parametrisation of the massless
propagator as
\begin{equation}
\label{eqn:IRdivergent}
\lim_{\varepsilon \rightarrow \infty} \int_0^\varepsilon
\dif{\tau_j} e^{-\tau_j q^2}
= \lim_{\varepsilon \rightarrow \infty}
  \frac{1 - e^{-\varepsilon q^2 }}{q^2};
\end{equation}
it is the limit $\tau_j \rightarrow \infty$ which is responsible
for a possible IR divergence at $q^2 \rightarrow 0$. Correspondingly,
the IR ``counterterms'' are to be obtained from the diagram
without the propagator in question. It has a regularisation
mass term coming from the unobservable background modes coupling to the
fields.

Viewing renormalisation in the geometrical moduli space picture, it
is obvious that the cancellation of divergences is independent of the
particular moduli space parametrisation. A renormalisation example
will be given in section \ref{sec:spider}.

As an intriguing possibility, it is imaginable that the moduli space
integrations can be formulated as integrals of a total divergence; in that case,
by Stokes's Theorem, they might be reduced to an integral over boundary terms \textit{only},
even if we cannot expect the boundary terms to have a direct graphical interpretation.
By iteration of this procedure, amplitudes could be computed as integrals
over the lowest dimensional boundary cells of moduli space (moduli space
``effective vertices'')
\footnote{Such iteration might require very particular relations between the
coupling constants and masses of the system, as in non-Abelian gauge theories.
The case of gauge theories is indeed special: Here,
the couplings are fixed ab initio by the requirement of gauge
invariance.}.

\commentpaper{
Focus on the $s$-channel graph. The amplitude is an integral over the
modulus $\tau$, in the form
\begin{equation*}
G = \int_0^\infty \dif{\tau} \rho(\tau),
\end{equation*}
where $\rho$ is a function depending not only on $\tau$, but on all
the other external parameters; for larger graphs, there
will be integrations over other moduli. Now suppose that we factor
$\rho(\tau) = f(\tau) g(\tau)$ for suitable functions $f,g$; by
integration by parts,
\begin{equation}
\label{eqn:split}
G = F(\tau)g(\tau) \Big|_{\tau=0}^{\tau=\infty}
    - \int_0^\infty \dif{\tau} F(\tau) g'(\tau).
\end{equation}
The boundary terms have one modulus less to be integrated, as
$\tau$ has been fixed to 0 resp. $\infty$; it is consequent to add them to the amplitude of the
respective moduli space cells containing the graphs (a)
resp. (c). The remaining integral is taken as contribution to the
moduli space cell associated to the original graph in (b)
\footnote{By the first Green's identity, we could generalise the ansatz of integration by
  parts of one modulus (\ref{eqn:split}) to a multi-dimensional
  integration by parts over all moduli simultaneously. We obtain then
  also boundary terms with a dimension less 2 or more.}.
A similar procedure may be applied to any propagator in any graph.

The global picture is thus the following: For
every cell of moduli space, the total integrand consists of
the partially integrated ``volume'' terms analogous to the last term of
(\ref{eqn:split}), and receives boundary term contributions from the
neighbouring (lower dimensional) cells. The question is \textit{how} to reasonably factor
$\rho(\tau) = f(\tau)g(\tau)$. Can we expect that for a particular
factorising prescription, the boundary terms and the volume terms
cancel for generic cells? Ideally, in the language of cell
complexes, this would imply that by repeated integration by parts, the
correlation can ultimately be written as a sum over the contributions
from minimal cells (with minimal dimension, ie number of continuous moduli) and
maximal cells (maximal dimension) of the complex associated to this
particular correlation. We expect cancellations to take place in the
intermediate cells of the complex.

Minimal cells exist because the boundary operation reduces the number
of propagators. Whether there exist maximal cells or not is a
nontrivial issue and would have to be dealt with in
the individual theory; by alternately inserting new propagators
between vertices (inverting the $\tau \rightarrow \infty$ limit) and
splitting higher-valent vertices (inverting the $\tau \rightarrow 0$
limit) we can in principle obtain graphs of arbitrary dimension. Three
solutions to this problem spring to mind: 1.~Limiting the loop order
(after all we are doing perturbation theory). 2.~Restricting the possible
boundary operations by making the particle type a part of the
class definition (which we did not), in such a way that the total
moduli space associated to a correlation is finite-dimensional. 3.~It may happen that
the contributions from moduli space cells of infinite dimension (the
``upper boundary'' of moduli space) converge to zero. The correlations would then
be given \textit{only} by the minimal cells. However,
this is pure speculation.

Although the final integrals might still not have analytical
solutions, summation over the total moduli space
condenses the representation of the total amplitude in this scenario.
\textit{Such a simplification would necessarily require very specific
relations between the different coupling constants (as in non-Abelian
gauge theories)}. The case of gauge theories is indeed special: Here,
the couplings are fixed ab initio by the requirement of gauge
invariance. This could be a good starting point for the search of a
factorising prescription for the Schwinger kernels. The amplitudes
contain tensor particles and fermions; the case of tensor particles
will be covered in section 6.
} 

\subsection{Example: Tree diagrams}

The most elementary examples of connected Feynman graphs are tree
diagrams. The vertices (insertions) at the ``endpoints'' of the tree
are connected to the rest of the diagram only by a single internal
line (so they have internal valency 1), and they are linked to at
least two external lines (external valency $\geq 2$). ``Internal''
branching vertices (internal valency $\geq 3$)
may or may not have a non-zero external valency as
well. Finally, there are vertices with internal valency two and
external valency $\geq 1$.

To cover the whole equivalence class $\mathfrak g$ of the tree, we
have to sum over all topologically inequivalent groupwise permutations
of the external legs whenever several external legs are located on
insertions with the same valences; insertions with internal valency 2
are allowed to slide all over the tree. As the external legs are all
distinguishable, there are no symmetries of the graph, so the symmetry
factor equals 1.

Let there be $n$ insertions with internal valency 2. We denote their
positions on the metric graph by $t_j \in \mathfrak g(\tau)$,
and the total external momentum entering at insertion $j$ by
$k_j$. Removing these,
let there be $V$ vertices left with internal valency other than 2,
connected by $P$ lines. Denote the lengths of these lines by the
moduli $\tau_j \in \mathbb R_+$. Clearly, the dimension of the graph
(the number of continuous moduli) is $\dim \mathfrak g = n + P$.

Given any two insertions $t_1, t_2 \in \mathfrak g(\tau)$, we want to find
the interaction potential for the momenta entering at these insertions
(the charges). We have to determine
\begin{equation*}
f(t) = \triangle^{-1} (\delta_{t_1} - \delta_{t_2})(t),
\hspace{1cm}t \in \mathfrak g(\tau).
\end{equation*}
The function $f(t)$ is easily characterized: Let
$\mathcal T(t_1, t_2)$ denote the path from $t_1$ to $t_2$, and
$|\mathcal T(t_1, t_2)|$ be its length (in terms of the moduli $\tau_j$).
$f(t)$ is continuous by definition; it is piecewise constant on all
segments of the graph except $\mathcal T(t_1, t_2)$;
and on $\mathcal T(t_1, t_2)$ it increases linearly with the distance
from $t_1$. The absolute value of $f(t)$ is unimportant,
so an arbitrary constant may be added. It follows that
\begin{equation*}
\varphi(t_1, t_2)
= \langle \delta_{t_1} - \delta_{t_2},
  \triangle^{-1} (\delta_{t_1} - \delta_{t_2}) \rangle
= f(t_1) - f(t_2)
= - |\mathcal T(t_1, t_2)|.
\end{equation*}
For the effective normalisation, one can see that
$Z_\mathrm{eff}^{-1}(\mathfrak g(\tau)) = 1$ by starting to integrate
the formula given in lemma \ref{lem:Zeff:1} at the
vertices forming the tips of the tree, and working down towards
$v_1$ which is an arbitrary vertex on the tree. The integrals then
always cancel exactly the prefactor.

So for a tree diagram, we find the amplitude
\begin{multline*}
G_\text{tree}(k_1, \dots, k_n)
= (2\pi)^{\frac{nd}{2}} (2\pi)^d \delta^{(d)} \Big( \sum_j k_j \Big)
\Big( \prod_{\mathrm{vertices}\; v} - c_v \Big) \\
\sum_{\text{perm.s of $k_j$}}
\int_0^\infty \dif[P]{\tau}
\int_{\mathfrak g(\tau)} \dif[n]{t}
\exp \Big(\sum_{v<w} (k_v \cdot k_w) |\mathcal T(t_v, t_w)|
          - m^2 \sum_j \tau_j \Big).
\end{multline*}
It is an amusing exercise to see how this expression resolves into the
correct amplitude upon integrating the moduli.

\subsection{Example: The one-loop ``spider'' diagram}
\label{sec:spider}

The simplest diagram containing a loop integration is the one-loop amplitude with an
arbitrary number $n$ of external legs \footnote{A yet unknown species
  of spiders.} coupled by three-valent vertices $c_3$ directly to the
loop. The class $\mathfrak g$ contains all permutations of the
order of external legs; there is a cyclic symmetry.

Let the incoming momenta be $k_1, \dots, k_n$. Let $t_j$ be the
coordinate which describes the
position of the $j$-th vertex entering the loop with respect to some
fixed parametrisation of the loop. Furthermore, let $\tau$ be the
total length of the loop. The true moduli are given by the distances
\textit{between} the insertions; so if we integrate the coordinates
$t_j$ instead, we have to include a factor $\frac{1}{\tau}$ to
account for the arbitrary choice of an origin of the parametrisation.
For the normalisation, one finds
\begin{align*}
Z_\mathrm{eff}^{-1}(\mathfrak g(\tau))
=& \left( \frac{1}{4 \pi \tau} \right)^{d/2}
\end{align*}
by convolution of the Wiener kernels of lemma \ref{lem:Zeff:1}.
Assume that $t_i < t_j$. Then,
\begin{equation*}
\triangle^{-1} (\delta_{t_i} - \delta_{t_j})(t)
= \begin{cases}
\left( \frac{t_j - \tau + t_i}{2} - t \right) \frac{t_j - t_i}{\tau}
& \text{if $t < t_i$},\\
\left( t - \frac{t_i + t_j}{2}\right) \frac{\tau - t_j + t_i}{\tau}
& \text{if $t_i \leq t < t_j$},\\
\left( \frac{t_j + \tau + t_i}{2} - t \right) \frac{t_j - t_i}{\tau}
& \text{if $t_j \leq t$}
\end{cases}
\end{equation*}
at the point with coordinate $t$ with respect to the parametrisation.
This is easily checked by applying $\triangle$. Thus,
\begin{align*}
\varphi(t_i, t_j)
=& \langle \delta_{t_i} - \delta_{t_j},
   \triangle^{-1} (\delta_{t_i} - \delta_{t_j}) \rangle
=  - \frac{| t_j - t_i | (\tau - |t_j - t_i| )}{\tau}.
\end{align*}
In the last form, the potential is valid also for $t_i > t_j$.
As the orientation of the loop parametrisation is arbitrary,
we have to include a symmetry factor $\frac{1}{2}$, and we get
a total amplitude
\begin{multline}
\label{eqn:spider}
G_\text{1-loop}(k_1, \dots, k_n)
= (2\pi)^{\frac{nd}{2}} (2\pi)^d \delta^{(d)} \Big( \sum_j k_j \Big)
\left( - c_3 \right)^n
\int_0^\infty \frac{\dif{\tau}}{2\tau} \left( \frac{1}{4\pi \tau} \right)^{d/2}\\
\exp \left( - m^2 \tau \right)
\int_0^\tau \dif[n]t \exp \left( \sum_{i<j} (k_i \cdot k_j)
   \frac{| t_j - t_i | (\tau - |t_j - t_i| )}{\tau} \right).
\end{multline}
Written in this form, it is plausible that the presummation of
sufficiently large equivalence classes of diagrams amounts to a
stringent organisation of the amplitudes.\ed Had we used the
equivalence class scheme B (the world line formalism), the cyclic
order of external legs entering the loop would be unalterable, and we
should sum explicitly over all such orderings.

This provides a simple example illustrating how renormalisation fits into the scheme:
By rescaling the parameters $t_j \to \tau t_j$, the integral takes the form
\begin{multline*}
G_\text{1-loop}(k_1, \dots, k_n)
= (2\pi)^{\frac{nd}{2}} (2\pi)^d \delta^{(d)} \Big( \sum_j k_j \Big)
\left( - c_3 \right)^n
\int_0^\infty \frac{\dif{\tau}}{2\tau} \left( \frac{1}{4\pi \tau} \right)^{d/2} \tau^n\\
\exp \left( - m^2 \tau \right)
\int_0^1 \dif[n]t \exp \left( \tau \sum_{i<j} (k_i \cdot k_j)
   | t_j - t_i | (1 - |t_j - t_i| ) \right).
\end{multline*}
The integrand has a power series expansion in $\tau$ around the origin
\begin{equation}
\label{eqn:Luke}
c_{n - d/2 - 1} \tau^{n - d/2 - 1}
+c_{n - d/2} \tau^{n - d/2}
+c_{n - d/2 + 1} \tau^{n - d/2 + 1} + \dots;
\end{equation}
the integral diverges at $\tau \to 0$ if $n - d/2 \leq 0$.
This is a UV divergence, by the reasoning of section \ref{sec:Richard}.
The graphs containing the counterterms are found in the moduli space cell
$\tau \to 0$. They are just given by formally integrating the divergent
terms of the power series expansion (\ref{eqn:Luke}):
\begin{align*}
G_{n - d/2 - 1}(k_1, \dots, k_n)
=& - (2\pi)^{\frac{nd}{2}} (2\pi)^d \delta^{(d)} \Big( \sum_j k_j \Big)
\left( - c_3 \right)^n
\int_0^\infty \frac{\dif{\tau}}{2\tau} \left( \frac{1}{4\pi \tau} \right)^{d/2}
\tau^n,\\
G_{n - d/2}(k_1, \dots, k_n)
=&- (2\pi)^{\frac{nd}{2}} (2\pi)^d \delta^{(d)} \Big( \sum_j k_j \Big)
\left( - c_3 \right)^n
\int_0^\infty \frac{\dif{\tau}}{2\tau} \left( \frac{1}{4\pi \tau} \right)^{d/2} \tau^{n+1}\\
& \left( - m^2 + (k_i \cdot k_j) \int_0^1 \dif[n]t
         \sum_{i<j} | t_j - t_i | (1 - |t_j - t_i| ) \right),
\end{align*}
etc. Naturally, the counterterms for higher order divergences are of derivative type.
With these subtractions in place, the total amplitude is finite.

\comment
{
The magnitude of this expression lends itself to a crude
approximation: The magnitude of the pair potential is given by
\begin{equation*}
\varphi(t_i, t_j) \geq \varphi(0, \frac{\tau}{2})
= -\frac{\tau}{4}.
\end{equation*}
The interaction potential is therefore comparable to
\begin{multline*}
V_\mathrm{eff}
= \sum_{i<j} (k_i \cdot k_j) \varphi(t_i, t_j)
\approx - \frac{\tau}{4} \sum_{i<j} (k_i \cdot k_j) \\
= - \frac{\tau}{8} \Big( \sum_{i,j} (k_i \cdot k_j)
      - \sum_i k_i^2 \Big)
= \frac{\tau}{8} \sum_i k_i^2.
\end{multline*}
In this approximation, it is independent of the insertion points $t_j$, and
the integral $\int \dif[n]t = \tau^n$ (this contains all orderings
of the external insertions). The correlations are finally
\begin{multline*}
(2\pi)^{\frac{nd}{2}}
(2\pi)^d \delta^{(d)} \Big( \sum_j k_j \Big)
\left( - c_3 \right)^n
\int_0^\infty \frac{\dif{\tau}}{2\tau} \left( \frac{1}{4\pi \tau} \right)^{d/2}
\exp \left( - m^2 \tau \right)
\tau^n \exp - \frac{\tau}{8} \sum_i k_i^2\\
= (2\pi)^{\frac{nd}{2}} \delta^{(d)} \Big( \sum_j k_j \Big)
\left( - c_3 \right)^n \Gamma\left(n - \frac{d}{2} \right)
\frac{\pi^{d/2}}{2}
\Big( m^2 + \frac{1}{8} \sum_i k_i^2 \Big)^{d/2 - n}.
\end{multline*}
For the self-energy ($n=2$), this gives indeed a qualitatively good
result for small external momentum.
}

\subsection{Example: Two-loop self-energy graph}

At two-loop level, the sliding of external legs over internal vertices is
important. We discuss the contribution of the graph given in figure \ref{fig:twoloop}
to the self-energy in scalar $\phi^3$-theory.

\begin{figure}[htbp]
\begin{center}

{
\begingroup\makeatletter
\gdef\SetFigFont#1#2{}%
\endgroup%
}

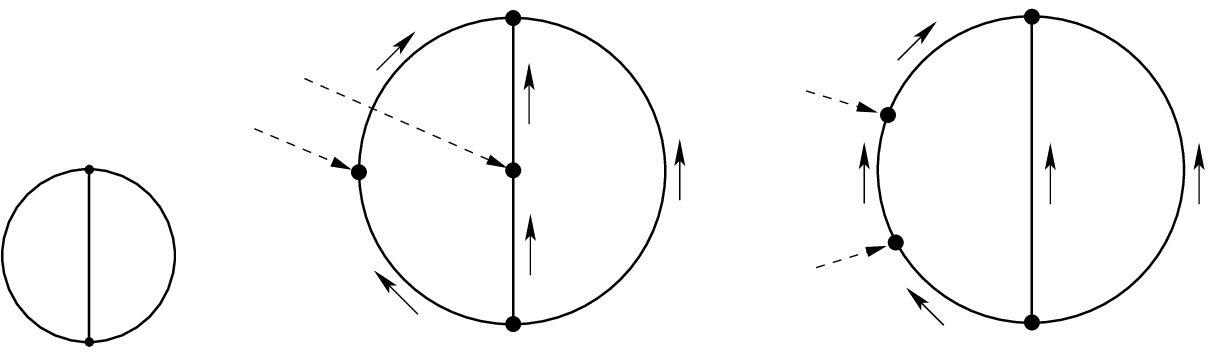
\caption{Notation in the two-loop computation.
$v_1, v_2$: external insertions.
$k_1, k_2$: external momenta.
$v_3, v_4$: internal vertices.
$\tau_1, \tau_2, \tau_3$ (inset): length moduli of branches between $v_1$ and $v_2$.
$t_1, t_1', t_2$: length moduli of segments between vertices / insertions.
$m_1, m_2,\dots$, arrows: gradients of the piecewise linear function
  $\triangle^{-1}_v (\delta_{v_1} - \delta_{v_2})(v)$ along the branches of the graph.
}
\label{fig:twoloop}
\end{center}
\end{figure}

There are two different topologies:
Both external insertions $v_1$ and $v_2$ may
slide along the same internal line between the two vertices (a), or they may be along two
separate lines (b). With the moduli parametrising the metric distances along the
propagators as defined in the figure, the first task is the computation of the
pair potentials $\varphi_{(a)}(v_1, v_2)$ resp. $\varphi_{(b)}(v_1, v_2)$, as
defined in (\ref{def:homosapiens}).

We need to compute the action of the inverse Laplacian
$\triangle^{-1}_v (\delta_{v_1} - \delta_{v_2})(v)$ for points $v \in \mathfrak g(\tau)$ on the graph.
We surely know that this is a continuous function, piecewise linear between vertices.
Denote the respective gradients
$\partial_v [\triangle^{-1}_v (\delta_{v_1} - \delta_{v_2})](v)$ in (a) by
$m_1, m_1', m_2, m_2', m_3$, as in the figure.
These gradients fulfill the equations
\begin{eqnarray*}
m_1' - m_1 &=& 1\\
m_2' - m_2 &=& -1\\
t_1 m_1 + (\tau_1 - t_1) m_1' = t_2 m_2 + (\tau_2 - t_2) m_2' &=& \tau_3 m_3 \\
m_1 + m_2 + m_3 &=& 0
\end{eqnarray*}
(the first pair of equations normalises the $\delta$ source terms; the second
pair says that the potential is continuous at the vertices $v_3$ and $v_4$;
and the last equation declares that $v_4$ is uncharged).
There is a another similar but redundant condition at the vertex $v_3$.
The solution is, with $\Delta = \tau_1 \tau_2 + \tau_1 \tau_3 + \tau_2 \tau_3$,
\begin{align*}
m_1' =& \frac{(\tau_2 + \tau_3) t_1 + \tau_3 t_2}{\Delta}
 & m_1 = m_1' - 1\\
m_2' =& -\frac{(\tau_1 + \tau_3) t_2 + \tau_3 t_1}{\Delta}
 & m_2 = m_2' + 1\\ 
m_3  =& \frac{\tau_1 t_2 - \tau_2 t_1}{\Delta}
\end{align*}
For the pair potential between insertions $v_1$ and $v_2$ in case (a), this implies
\begin{align}
\varphi_{(a)}(v_1, v_2)
=& \langle \delta_{v_1} - \delta_{v_2},
   \triangle^{-1} (\delta_{v_1} - \delta_{v_2}) \rangle
=  t_1 m_1 - t_2 m_2\\
=& - \frac{\tau_1 (\tau_2 - t_2) t_2 + \tau_2 (\tau_1 - t_1) t_1
  + \tau_3(t_1 + t_2)(\tau_1 + \tau_2 - t_1 - t_2)}{\Delta}. \nonumber
\end{align}
A similar computation for case (b) reveals that
\begin{align}
\varphi_{(b)}(v_1, v_2)
=& -\frac{t_1' \left[ (\tau_2 + \tau_3) (\tau_1 - t_1') + \tau_2 \tau_3 \right]}{\Delta}.
\end{align}
This second potential does not depend on the modulus $t_1$.

These potentials are really only pieces or ``branches'' of a \textit{single}
potential function $\varphi(v_1, v_2)$, continuous for all
$v_1 \times v_2 \in \mathfrak{g}(\tau) \times \mathfrak{g}(\tau)$.
They can be connected e.g. when the external insertion $v_2$ crosses through
$v_3$:
\begin{multline*}
\lim_{v_2 \to v_3} \varphi_{(a)}(v_1, v_2)
= \lim_{t_2 \to \tau_2} \varphi_{(a)}(v_1, v_2)
= -\frac{(\tau_1 - t_1) \left[ (\tau_2 + \tau_3) t_1 + \tau_2 \tau_3 \right]}{\Delta}\\
= \lim_{t_1' \to \tau_1 - t_1} \varphi_{(b)}(v_1, v_2)
= \lim_{v_2 \to v_3} \varphi_{(b)}(v_1, v_2).
\end{multline*}
A second way to pass from sheet (a) to sheet (b) is to pass $v_1$ through $v_4$
and use the symmetry of the graph. Again, the potential is continuous.

The computation of the normalisation constant 
$Z_\text{eff}^{-1}$ by equation (\ref{eqn:Zeff:C}) is straightforward.
There are $V=2$ vertices in the naked graph; the matrix $C$ according to
(\ref{eqn:def:C}) is
\begin{equation*}
C = \frac{1}{2} \Big( \frac{1}{\tau_1} + \frac{1}{\tau_2} + \frac{1}{\tau_2} \Big) \cdot
    \begin{pmatrix} 1 & -1 \\ -1 & 1 \end{pmatrix};
\end{equation*}
and thus with
$\| C \|_+ = \det (C + ee^T) = 2( \frac{1}{\tau_1} + \frac{1}{\tau_2} + \frac{1}{\tau_2})$,
the normalisation becomes
\begin{align*}
Z_\mathrm{eff}^{-1}(\mathfrak g(\tau))
=& \left( \frac{1}{32 \pi^2 \Delta} \right)^{d/2}.
\end{align*}
The symmetry factors for both graphs are 2. The total amplitude is therefore
\begin{multline}
G_\text{2-loop}(k_1, k_2)
=  (2\pi)^{2d} \delta^{(d)} \Big( k_1 + k_2 \Big)\;
   \frac{\left( - c_3 \right)^4}{2}
   \int_0^\infty \dif[3]{\tau}
   \left( \frac{1}{32 \pi^2 (\tau_1 \tau_2 + \tau_1 \tau_3 + \tau_2 \tau_3)} \right)^{d/2}\\
   e^{- m^2 (\tau_1 + \tau_2 + \tau_3)}
   \cdot \Bigg\{
   \int_0^{\tau_1} \dif{t_1} \int_0^{\tau_2} \dif{t_2}
   \exp \left( - (k_1 \cdot k_2) \varphi_{(a)}(v_1, v_2) \right)\\
   + \int_0^{\tau_1} \dif{t_1'} (\tau_1-t_1')
   \exp \left( - (k_1 \cdot k_2) \varphi_{(b)}(v_1, v_2) \right)
   \Bigg\}.
\end{multline}
The factor $\tau_1-t_1' = \int_0^{\tau_1-t_1'} \dif{t_1}$ comes from integrating
the irrelevant modulus $t_1$.

\commentpaper
{
In the limits $\tau_3 \rightarrow 0$
and $\tau_3 \rightarrow \infty$, the potentials $\varphi_{(a),(b)}$
degenerate to the corresponding world-graph potentials for the graphs where
$\tau_3$-propagator shrinks to length 0 resp. is removed
from the graph.


Let us study the limit $\tau_3 \rightarrow 0$ in detail. We obtain
\begin{multline}
\left. G_\text{2-loop}(k_1, k_2) \right|_{\tau_3 = 0}
=  (2\pi)^{2d} \delta^{(d)} \Big( k_1 + k_2 \Big)\;
   \frac{\left( - c_3 \right)^4}{2}
   \int_0^\infty \dif[2]{\tau}
   \left( \frac{1}{32 \pi^2 \tau_1 \tau_2} \right)^{d/2}\\
   e^{- m^2 (\tau_1 + \tau_2)}
   \cdot \Bigg\{
   \int_0^{\tau_1} \dif{t_1} \int_0^{\tau_2} \dif{t_2}
   \exp \left( (k_1 \cdot k_2) 
   \frac{\tau_1 (\tau_2 - t_2) t_2 + \tau_2 (\tau_1 - t_1) t_1}{\tau_1 \tau_2} \right)\\
   + \int_0^{\tau_1} \dif{t_1'} (\tau_1-t_1')
   \exp \left( (k_1 \cdot k_2) \frac{t_1' (\tau_1 - t_1')}{\tau_1} \right)
   \Bigg\}.
\end{multline}
}

\commentpaper
{
\subsection{Electric circuit analogy}

Let us contrast the vector-charge analogy with the well-known
electric-circuit analogy. This is a rather ancient method, see eg
Bjorken and Drell \cite[Chp. 18]{Bjorken1965}.

If we interpret $x$ as potential and $2\tau_j$ as a resistance in the
Schwinger representation (\ref{eqn:alpha:propagator}) for the
propagator, then $P = (x-y)^2/2\tau_j$ is the
power dissipated in the resistance (one could implement the propagator as wire
of unit conductivity and length $2\tau_j$). As the classical electric
circuit minimizes the dissipated heat, we can see that the exponent in
the Schwinger parametrised formalism can be used as Lagrangian in the
Lagrangian formalism of electric circuit theory. There is a slight
complication because $x$ is a vector, but one can easily see that the
different components do not mix. So, we have an analog ``electric
circuit problem'' for each cartesian coordinate.

Let $x \in \mathbb R^V$ be a vector containing (scalar) potentials.
These potentials will cause currents to flow on the graph; because of
current conservation, we must inject external currents at every vertex
to obtain a steady state. It is easy to see that the external current
we have to inject at vertex $j$ is exactly given by $i_j$, where
the current vector
\begin{equation}
\label{eqn:iCu}
i = C x
\end{equation}
and $C$ is the covariance matrix defined in the proof of the theorem.
Because the vertex $j$ is on potential $x_j$, the power dissipated in
the network is $x^T C x$. The matrix $C$ takes the role of a
conductivity matrix; its positivity guarantees
that the dissipated power is always positive.

It is clear from (\ref{eqn:iCu}) that the external momenta $k$ flowing
into the graph are taking the role of the currents entering the graph
at the vertices. So the circuit analog of momentum conservation is current
conservation. In relation to the world graph picture,
we see that an entry point of a current
flowing into the graph behaves like a charged particle; the
conserved currents running \textit{through} the graph are like the electric fields
generated by the charges in the world graph picture, and
local current conservation on the
graph is equivalent to flux conservation in the world graph picture.
} 

\subsection{Particles of different mass} 
\label{sec:mass}

We comment briefly on how to include particles of different
mass. For being explicit, consider the self-energy diagram
fig. \ref{fig:phi3-se}. Assume that the internal
propagators carry different masses $m_1$ and $m_2$, and that the
vertices always couple to one external and one of each particles of
masses $m_1$, $m_2$. The total mass exponential can then be written
\begin{multline*}
e^{-m_1^2 \tau_1 - m_2^2 \tau_2} + e^{-m_2^2 \tau_1 - m_1^2 \tau_2} \\
= \Tr \left\{
\begin{pmatrix} e^{-m_1^2 \tau_1} & 0 \\ 0 & e^{-m_2^2 \tau_1} \end{pmatrix}
\begin{pmatrix} 0 & 1 \\ 1 & 0 \end{pmatrix}
\begin{pmatrix} e^{-m_1^2 \tau_2} & 0 \\ 0 & e^{-m_2^2 \tau_2} \end{pmatrix}
\begin{pmatrix} 0 & 1 \\ 1 & 0 \end{pmatrix} \right\}.
\end{multline*}
The first and third matrix are representing the propagators; the
other two matrices are the vertices ``switching'' between different
masses. This might seem overly formal. However, it is only
consequent. Namely, observe that this is a trace over a path-ordered
exponential
\begin{equation*}
\Tr P_{\mathfrak g(\tau)}\;
\left\{
\begin{pmatrix} 0 & 1 \\ 1 & 0 \end{pmatrix}_{t_1}
      \begin{pmatrix} 0 & 1 \\ 1 & 0 \end{pmatrix}_{t_2}
\exp - \int_0^\tau \dif{t} 
  \begin{pmatrix} m_1^2 & 0 \\ 0 & m_2^2 \end{pmatrix} \right\},
\end{equation*}
where $t_1$ and $t_2$ denote the locations of the operator insertions,
$\tau$ is the total length of the loop, and $P_{\mathfrak g(\tau)}$ is the
path ordering on $\mathfrak g(\tau)$. This construction can be generalised to
more complex graphs (although a matrix notation for the vertices is clearly
not possible).
The resulting mass exponential is then part of the integrand of the
moduli space integral, in the spirit of the world graph formalism.
Similar techniques can be applied if the propagators are gauge bosons
in the adjoint representation of some local gauge group
(for the application of path ordering to inclusion of background
potentials see \cite{Fliegner1994}).

\commentpaper
{
\subsection{D'EPP relation}
\label{sec:DEPP}

We are going to prove an important relation introduced first by
D'Eramo, Parisi and Peliti \cite{DEramo1971}. It describes the transformation of a
star graph of conformal propagators into a triangle graph by
integrating out the central vertex. In the electric circuit analogy,
this is the star-$\delta$-transform \cite{Bryant1967}.
Consider the star graph defined by
the equation
\begin{equation*}
G(x_1, x_2, x_3)
= \int \dif[d]u |x_1 - u|^{-2\Delta_1} |x_2 - u|^{-2\Delta_2}
  |x_3 - u|^{-2\Delta_3},
\end{equation*}
with $\Delta_1 + \Delta_2 + \Delta_3 = \Delta$.
Performing a Fourier transform, this becomes with (\ref{eqn:Schwinger:conf})
\begin{equation*}
G(k_1, k_2, k_3)
= (2\pi)^{-\dhalf} \delta^{(d)}(k_1 + k_2 + k_3)
  \prod_{j=1}^3
  \frac{2^{d - 2 \Delta_j} \pi^{\frac{d}{2}}}{\Gamma(\Delta_j)}
     \int_0^\infty \dif{\alpha_j} \alpha_j^{\frac{d}{2} - \Delta_j - 1}
         e^{-\alpha_j k_j^2}.
\end{equation*}
Using momentum conservation, $k_1^2 = -k_1(k_2 + k_3)$ etc.;
then the exponential becomes
\begin{equation*}
(\alpha_1 + \alpha_2) k_1 k_2 + (\alpha_1 + \alpha_3) k_1 k_3 +
    (\alpha_2 + \alpha_3) k_2 k_3.
\end{equation*}
This is already the correct form for the exponent of a triangle graph,
cf. (\ref{eqn:spider}). However, we still need to get the correct
prefactors for the momenta. Going from a star to a delta
network means to substitute
\begin{equation*}
\tau_j
= \frac{\alpha_1 \alpha_2 + \alpha_1 \alpha_3 + \alpha_2 \alpha_3}{\alpha_j},
\end{equation*}
where $\tau_j$ is the resistance opposite to node $j$. The
inverse transformation is
\begin{equation*}
\alpha_j
= \frac{\tau_1 \tau_2 \tau_3}
  {\tau_j (\tau_1  + \tau_2 + \tau_3)}.
\end{equation*}
For the Jacobian of the transformation we get
\begin{equation*}
\dif{\alpha_1} \dif{\alpha_2} \dif{\alpha_3}
= \frac{\tau_1 \tau_2 \tau_3}
       {(\tau_1 + \tau_2 + \tau_3)^3}
\dif{\tau_1} \dif{\tau_2} \dif{\tau_3},
\end{equation*}
or
\begin{equation*}
\dif{\tau_1} \dif{\tau_2} \dif{\tau_3}
= \frac{(\alpha_1 \alpha_2 + \alpha_1 \alpha_3 + \alpha_2 \alpha_3)^3}
     {\alpha_1^2 \alpha_2^2 \alpha_3^2}
\dif{\alpha_1} \dif{\alpha_2} \dif{\alpha_3}.
\end{equation*}
Performing all the necessary substitutions, the amplitude reads
\begin{align*}
G(k_1, k_2, k_3)
=& (2\pi)^{-\dhalf} \delta^{(d)}(k_1 + k_2 + k_3)
  \left( \prod_{j=1}^3
  \frac{2^{d - 2 \Delta_j} \pi^{\frac{d}{2}}}{\Gamma(\Delta_j)}
     \int_0^\infty \dif{\tau_j} \tau_j^{d - \Delta + \Delta_j - 1}
   \right) (\tau_1  + \tau_2 + \tau_3)^{\Delta - \frac{3d}{2}}\\
& e^{\frac{\tau_3(\tau_1 + \tau_2) k_1 \cdot k_2 + 
    \tau_2(\tau_1 + \tau_3) k_1 \cdot k_3 + \tau_1(\tau_2 + \tau_3) k_2 \cdot k_3}
    {\tau_1 + \tau_2 + \tau_3}}.
\end{align*}
We see now that under the condition $\Delta=d$, we can indeed reach
the spider graph formula; substituting back the conformal propagator
formula (\ref{eqn:Schwinger:conf}), we obtain finally in coordinate space
\begin{equation*}
G(x_1, x_2, x_3)
= \pi^\dhalf \frac{\Gamma(\Delta_{23}) \Gamma(\Delta_{13}) \Gamma(\Delta_{12})}
      {\Gamma(\Delta_1) \Gamma(\Delta_2) \Gamma(\Delta_3)}
   \frac{1}{|x_1 - x_2|^{2\Delta_{12}}}
   \frac{1}{|x_1 - x_3|^{2\Delta_{13}}}
   \frac{1}{|x_2 - x_3|^{2\Delta_{23}}},
\end{equation*}
with $\Delta_{12} = \dhalf - \Delta_3$ etc. They obey
$\Delta_{12} + \Delta_{13} + \Delta_{23} = \dhalf.$
Note that the formula is valid only in the range
$0 < \Re \Delta_j < \dhalf$. It can be analytically continued to
disallowed scaling dimensions.

} 

\section{Vector and tensor particles and the dipole method}

So far, we have considered scalar interaction vertices and scalar particles. In
general, vertices might contain derivatives of the adjoining
propagators; when tensor particles are involved, their propagators
will contain supplementary polynomials in their momenta,
and the propagator term $(q_j^2 + m^2)^{-1}$ might be raised to a higher
power. While the latter can be generated by including additional
factors $\tau_j$ in the measure on moduli space (cf. section \ref{sec:conf}),
there remains a prefactor multiplying the integrand of a Feynman amplitude
which is a polynomial in the momenta $q_i$
of the propagators and $k_i$ of the external
legs. We develop here a formalism which introduces a
generating function for the prefactor, ie a function
\begin{equation*}
G(k_1, \dots, k_n; y_1, \dots, y_p)
\end{equation*}
where each $y_i$ is associated to one internal propagator
of the graph. For $y_i \equiv 0$ the function $G$ is equal
to the usual correlation function without prefactor
\begin{equation*}
G(k_1, \dots, k_n; 0, \dots, 0)
= G(k_1, \dots, k_n)
= \int \dif[\ell d]p^\text{loop} F(k, p^\text{loop}); 
\end{equation*}
here $p^\text{loop}_i$ are $\ell$ loop momentum variables, to be
integrated over. Let $P(q_1, \dots, q_p)$ be a polynomial of the
internal momenta along the propagators (in general we will not assume
that vector indices are contracted completely). If we substitute the
(components of the) $y$-derivatives for the internal momenta $q$ in
the polynomial $P$ and apply it to the generating function
\begin{equation}
\label{eqn:generating}
\left. P\left(i \frac{\partial}{\partial y}\right) G(k; y) \right|_{y = 0}
= \int \dif[\ell d]p^\text{loop} P(q) F(k, p^\text{loop}),
\end{equation}
we should obtain the prefactor given by the polynomial $P(q)$.
For tree graphs, the
momenta $q_i$ are constant; for loop graphs, there is an
integration over loop momenta.

There exists a method in the ``vector charge'' framework
to reproduce the corresponding amplitudes.
It utilizes the insertion of an additional
pair of oppositely charged external legs on each propagator
- a dipole. We are given a Feynman graph
$\mathfrak g(\tau)$ containing various propagators and vertices in the
Schwinger parametrised form with the loop momenta not yet
integrated. Pick a propagator $j$ transporting
momentum $q_j$ from vertex $v_1$ to vertex $v_2$. On $j$, we want to
insert a dipole with strength $\frac{i y_j}{2}$. The dipole is
constructed from two infinitesimally separated sources: Source $s^+_j$
located arbitrarily within
the propagator and source $s^-_j$ at distance $\varepsilon$ away from
$s^+_j$ in the direction of $q_j$ as we define it
(see fig. \ref{fig:dipole}).
\begin{figure}
\centering \includegraphics[width=8cm]{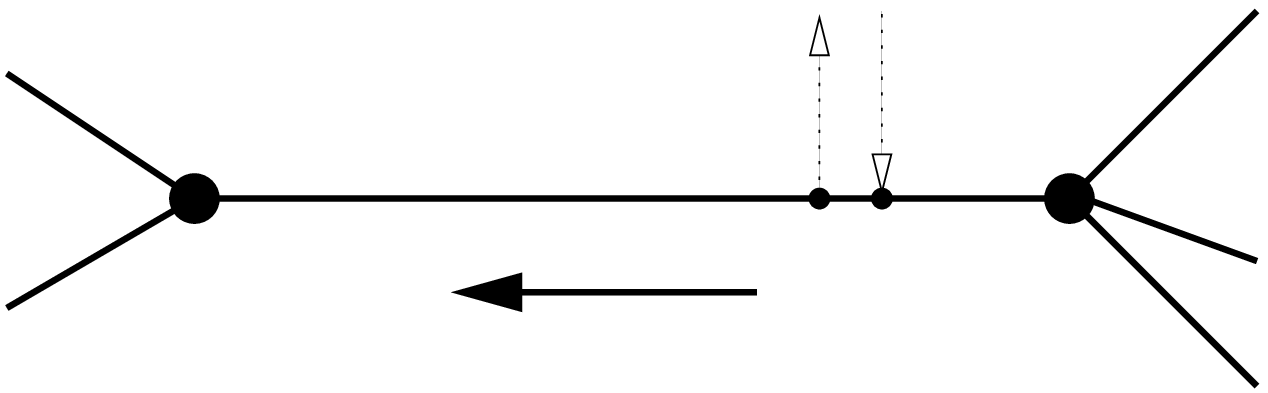}
\put(-120,8){$q_j$}
\put(-86,22){$s_j^-$}
\put(-72,22){$s_j^+$}
\caption{Insertion of a test dipole onto an internal propagator.}
\label{fig:dipole}
\end{figure}
As $\varepsilon$ is supposed to be
infinitesimal, $s^-_j$ will always fit on the propagator.
In the limit $\varepsilon \rightarrow 0$, the position of the dipole
on the propagator becomes $s_j$.
At $s^+_j$, momentum $\frac{iy_j}{2 \varepsilon}$ flows into the
graph; at $s^-_j$, momentum $\frac{iy_j}{2 \varepsilon}$ flows out of the graph;
so the total momentum is conserved. The ``dipole moment'' is the
vector $\mu_j = \varepsilon \cdot \frac{iy_j}{2 \varepsilon} = \frac{iy_j}{2}$.
Writing down the local momentum balance, it is clear that there is
a momentum transfer $q_j + \frac{iy_j}{2 \varepsilon}$
between $s^+_j$ and $s^-_j$. In addition, we multiply the
integrand by a ``dipole self-energy renormalisation constant''
$e^{- \frac{y_j^2}{4 \varepsilon}}$.
Without the dipole, the piece between the insertions naturally
contributes $e^{-\varepsilon q_j^2}$ to the Schwinger parametrised path integral.
The total effect of the dipole including the self-energy renormalisation
is a correction factor
\begin{equation*}
e^{\varepsilon q_j^2
  -\varepsilon (q_j + \frac{iy_j}{2 \varepsilon})^2
  -\frac{y_j^2}{4 \varepsilon}}
= e^{- i y_j \cdot q_j}.
\end{equation*}
This shows that the amplitude is indeed a generating function
for polynomials in $q_j$ by (\ref{eqn:generating}).

We saw that after integration of the loop momenta, the exponent in the Schwinger
integrand (\ref{eqn:G:1}) is equivalent to the potential
of an ensemble of vector charges defined by the external momenta.
The dipole insertions fit quite naturally into this picture.
Because they can be assembled from ``elementary'' vector charges,
the available ingredients of the formalism are completely
sufficient to handle them. All we have to do is calculate the additional potential
terms which arise due to the insertion of the dipoles (for
each propagator). It is advantageous to compute
the dipole-scalar and dipole-dipole pair potentials
before summing over all pairs. We define
\begin{eqnarray}
\label{def:dip:scal}
\varphi^\text{ds}(s_j, v_i)
&=& \lim_{\varepsilon \rightarrow 0}\;
    \frac{1}{\varepsilon} \left( \varphi(s^+_j, v_i)
    - \varphi(s^-_j, v_i) \right) 
 =  - \frac{\partial}{\partial s_j} \varphi(s_j, v_i)\\
\varphi^\text{dd}(s_j, s_i)
\label{def:dip:dip}
&=& \lim_{\varepsilon \rightarrow 0}\;
     \frac{1}{\varepsilon^2} \left( \varphi(s^+_j, s^+_i)
    - \varphi(s^-_j, s^+_i) - \varphi(s^+_j, s^-_i)
    + \varphi(s^-_j, s^-_i) \right) \nonumber\\
&=& - \frac{\partial}{\partial s_i} \varphi^\text{ds}(s_j,s_i)
 =  \frac{\partial^2}{\partial s_j \partial s_i} \varphi(s_j,s_i).
\end{eqnarray}
where the last equation is valid only for $i \neq j$ (the limits
$\varepsilon \rightarrow 0$ are allowed because it is easy to
see that both potentials are of order $\mathcal O(\varepsilon^0)$
and the higher-order terms are not relevant). The
dipole-scalar energy is
$\mu_i \cdot k_j \varphi^\text{ds}(s_j, v_i)$, and similarly for
dipole-dipole. For the self-interaction of a dipole, we define
\begin{equation}
\varphi^\text{d-self}_\varepsilon(s_j)
= - \frac{1}{\varepsilon^2} \varphi(s^-_j, s^+_j).
\end{equation}
The full generating function is then obtained by inserting the factor
\begin{align*}
\exp - \Big[& \frac{i}{2} \sum_{\text{prop.s $j$}} \sum_{\text{vert.s $i$}}
     (y_j \cdot k_i) \varphi^\text{ds}(s_j, v_i)
       + \left(\frac{i}{2}\right)^2 \sum_{\text{prop.s $i<j$}}
     (y_i \cdot y_j) \varphi^\text{dd}(s_i, s_j)\\
 &   + \left(\frac{i}{2}\right)^2 \sum_{\text{prop.s $j$}} y_j^2
     \left( \varphi^\text{d-self}_\varepsilon(s_j)
            - \frac{1}{\varepsilon} \right) \Big].
\end{align*}
By construction, this must be independent of $\varepsilon$, and even of the
coordinates $s_j$ of the dipole insertions on the propagators
\footnote{As the field is constant along the propagator.}.

We apply this formula to the loop graph.
We keep the nomenclature of section \ref{sec:spider} introducing the
loop graph and insert the dipoles at the general loop coordinates $s_j$.
The dipoles are oriented along the canonical loop direction.
We introduce the oriented distance function
\begin{equation*}
[ 0, T [\; \ni \tau_{ji} = (s_j - v_i) \mod T.
\end{equation*}
The relevant potentials are then
\begin{eqnarray}
\label{eqn:potentials}
\varphi(v_j, v_i)
&=& - \frac{\tau_{ji}(T - \tau_{ji})}{T}, \\
\varphi^\text{ds}(s_j, v_i)
&=& - \frac{\partial}{\partial s_j} \frac{- \tau_{ji} (T - \tau_{ji})}{T}
 =  \frac{T - 2\tau_{ji}}{T}, \nonumber \\
\varphi^\text{dd}(s_j, s_i)
&=& - \frac{\partial}{\partial s_i} \frac{T - 2\tau_{ji}}{T}
 =  -\frac{2}{T}, \nonumber \\
\varphi^\text{d-self}_\varepsilon(s_j)
&=& \frac{1}{\varepsilon} \frac{T - \varepsilon}{T}. \nonumber
\end{eqnarray}
The dipole self-energy is divergent as $\varepsilon \rightarrow 0$;
however if we include the additional factor
$e^{-\frac{y_j^2}{4\varepsilon}}$ in the action, then the
``renormalised'' self-energy
\begin{equation*}
\varphi^\text{d-self(ren)}(s_j)
= \lim_{\varepsilon \rightarrow 0}\;
  \left( \varphi^\text{d-self}_\varepsilon(s_j) - \frac{1}{\varepsilon} \right)
= - \frac{1}{T}
\end{equation*}
has a reasonable limit for vanishing dipole extension. The complete
generating factor is therefore
\begin{align*}
\exp & \left\{ - \frac{i}{2} \sum_{i,j=1}^n
     (k_i \cdot y_j) \frac{T - 2 \tau_{ji}}{T}
     - \frac{1}{4T} \Big( \sum_{j=1}^n y_j \Big)^2 \right\}.
\end{align*}
The generating function is therefore
\begin{multline*}
G_\text{1-loop}(k_1, \dots, k_n; y_1 , \dots, y_n)
= (2\pi)^d \delta^{(d)} \Big( \sum_j k_j \Big)
\left( - c_3 \right)^n
\int_0^\infty \frac{\dif{T}}{2T} \left( \frac{1}{4\pi T} \right)^{d/2}\\
\exp \left( - m^2 T \right)
\int_0^T \dif[n]t \exp \left\{ \sum_{i<j} (k_i \cdot k_j)
   \frac{| t_j - t_i | (T - |t_j - t_i| )}{T} \right\}\\
\exp \left\{ - \frac{i}{2} \sum_{i,j=1}^n (y_j \cdot k_i) \frac{T - 2 \tau_{ji}}{T}
     -  \frac{1}{4T} \Big( \sum_{j=1}^n y_j \Big)^2 \right\},
\end{multline*}
Due to the fact that we chose the test dipoles to be imaginary, the
last factor is well-behaved for real $y_j$ when integrating the modulus
$T$. Note, however, that there is an essential singularity at
$\sum_j y_j = 0$ when we allow complex $y_j$. As we compute higher
moments of the propagator momenta, the superficial degree of divergence
of the Feynman amplitude increases until the amplitude needs to be
regularised in order to converge; in terms of the generating function
formalism, this implies that higher derivatives of the generating function
are divergent at the origin. A way out offering itself almost naturally
in formalisms embracing Schwinger parametrisation is dimensional
regularisation (see eg \cite{deCalan:1980qj, Bergere:1974zh}).

\paragraph{Tree graphs.}

On tree graphs, one computes that the generating factor agrees
precisely with the expected form
\begin{equation*}
\exp -i \sum_{\text{prop. j}} y_j \cdot q_j.
\end{equation*}
There are no quadratic terms.

\paragraph{One-loop two-point function.}

This is the simplest non-trivial example. The diagram consists of a
loop with two insertions, connected by two propagators which we will
call left (1) and right (2). We choose as moduli the
lengths $\tau_1$ and $\tau_2$ of the left and right branch. Rather than keeping
two external momenta and imposing momentum conservation,
we assume that there is one momentum $k$ entering and leaving the loop. The
momenta $q_1$ and $q_2$ are defined parallel to the canonical loop
coordinate running around the loop and starting at the insertion where
the external momentum $k$ enters the loop. The dipole on the left
branch is at coordinate $s_1$, on the right branch at coordinate
$s_2$. The generating factor is then (with $T = \tau_1 + \tau_2$)
\begin{multline*}
e^{ - \frac{i}{2} \big[ (y_1 \cdot k) \frac{T - 2 s_1}{T}
     +  (y_2 \cdot k) \frac{T - 2 s_2}{T}  
     -  (y_1 \cdot k) \frac{T - 2 (s_1 + \tau_2)}{T}  
     -  (y_2 \cdot k) \frac{T - 2 (s_2 - \tau_1)}{T} \big]  
    - \frac{1}{4T} \left( y_1 + y_2 \right)^2} \\
= e^{ - i \frac{(y_1 \cdot k) \tau_2 - (y_2 \cdot k) \tau_1}{T}
     - \frac{1}{4T} \left( y_1 + y_2 \right)^2}.
\end{multline*}
We obtain the generating function (without coupling constants)
\begin{align}
\label{eqn:gen2}
G_\text{1-loop}(k; y_1, y_2)
=& \frac{(2\pi)^d}{2}
   \int_0^\infty \dif{\tau_1} \int_0^\infty \dif{\tau_2}
   \left( \frac{1}{4\pi (\tau_1 + \tau_2)} \right)^{d/2}
   e^{ - m^2 (\tau_1 + \tau_2)} \nonumber\\
 & e^{ - \frac{\tau_1 \tau_2}{\tau_1 + \tau_2} k^2
       - i \frac{y_1 \tau_2 - y_2 \tau_1}{\tau_1 + \tau_2} \cdot k
       - \frac{1}{4(\tau_1 + \tau_2)} ( y_1 + y_2 )^2}.
\end{align}

\comment
{
\paragraph{Graphical representation.}

Consider once more the generating rule (\ref{eqn:generating}). We can
see that either, a $y$-derivative acts on the terms in the exponent
linear in $y$ - then we get a prefactor proportional $k_i$ - or it acts
on the quadratic part; then we need another $y$-derivative to act on
the second factor of $y$, in order for the contribution to survive
once we set $y=0$. This enables us to perform a graphical expansion
for the correlations: Namely, derivatives with respect to the
$y_j$ are drawn as lines crossing the propagator $j$; if a derivative
falls on the linear term, then the line ends at the vertex of external
momentum $k_i$; if the derivative falls on the quadratic term, then
the two derivatives are connected by a line. On the other hand, the
$y$-derivatives are contracted finally in the polynomial $P$, and this
can be graphically represented by lines connecting the external ends
of the $y$-lines. If all indices are saturated, then there will be
many $y$-lines ending and beginning at the external momenta $k_j$, and
also closed index loops. [EXAMPLE]
}

\paragraph{Relation to pre-integrated Schwinger representation.}

Of course, there is no magic in here. It is instructive to inspect the
Fourier transform of the generating function
$G_\text{1-loop}(k; y_1, y_2)$ with respect to the variables $y_1$,
$y_2$. This will yield a kernel whose moments are equal to the
derivatives of the generating function at $y_1 = y_2 = 0$; we have
\begin{multline*}
\left. (i\partial_1)^{n_1} (i\partial_2)^{n_2}
G_\text{1-loop}(k; y_1, y_2) \right|_{y_1 = y_2 = 0}\\
= (2\pi)^{-2d} \int \dif[d]q_1 \dif[d]{q_2} q_1^{n_1}\, q_2^{n_2}
  \int \dif[d]y_1 \dif[d]{y_2} e^{i q_1 \cdot y_1 + i q_2 \cdot y_2}
  G_\text{1-loop}(k; y_1, y_2).
\end{multline*}
One finds with (\ref{eqn:gen2})
\begin{multline*}
  (2\pi)^{-2d} \int \dif[d]y_1 \dif[d]{y_2} e^{i q_1 \cdot y_1 + i q_2 \cdot y_2}
  G_\text{1-loop}(k; y_1, y_2) \\
= \frac{1}{2} \delta^{(d)}(k + q_2 - q_1)
  \int_0^\infty \dif{\tau_1} \int_0^\infty \dif{\tau_2}
  e^{ - \tau_1 (q_1^2 + m^2) - \tau_2 (q_2^2 + m^2)}.
\end{multline*}
This is hardly a surprising result; the generating function is
nothing more than the Fourier transform of the Schwinger parametrised
amplitude, \textit{before} integrating out the moments along the
propagators. In the general case, this gives us a convenient way to
obtain the generating functional.

\section{Summary and outlook}

We have demonstrated an ``integrated'' approach to the
Schwinger parametrisation of connected Feynman amplitudes, encompassing the
graph as an entity (world graph) by introducing specific boundary
conditions at the vertices rather than breaking the graph at these
points. With the necessary caveats of interpretation, it can be viewed
as a diffusion process of splitting and re-fusing particles.
The summation over Feynman graphs in order to obtain the total amplitude
is thereby converted into a summation over equivalence classes of
graphs. The Schwinger parameters are generalised to ``Schwinger
moduli'' in the cells of moduli space corresponding to the equivalence
classes. The vertices turn out to be by no means special points on the
graph.

The prefactor $Z_0(\mathfrak g(\tau))$ providing the necessary
normalisation is difficult to determine in general (although in
specific cases, like the spider graph, it is clear). It would be
desirable to find a closed formula for the
determinant of the graph Laplacian, bringing us in a position to
determine explicitly $Z_0(\mathfrak g(\tau))$.

Renormalisation is naturally included into this scheme by the assumption
that the cancellation of infinities is local in moduli space. For this
idea, the concept of a cell complex structure of moduli space
is critical. As long as amplitudes are not renormalised explicitly,
the assumption of non-integer space dimensions is a solution adapted very well
to the formalism.

The analogy to a system of charges on a graph which is a crucial step
in the proof offers a simple way to include vector and tensor
particles by way of a generating function formalism.
The stress again lies not on the computational advantages
of the scheme, but rather on the conceptual side: It is not necessary to
sprinkle new terms over the path integral formula which do not have an
intrinsic meaning. Rather, the generating function has a natural
interpretation as containing test dipoles.

\ed On the speculative side, we suggest the possibility of transforming by
integration by parts the integral over the total moduli space cell
complex associated to a correlation into a sum of integral
contributions from minimal and maximal cells only. This might be of
special interest in the treatment of non-Abelian gauge theories.

\ed Looking further, the charge formalism carries the promise of a simple
treatment of graphs containing particles carrying a
``real world'' electric charge. Such particles would bring along a
cloud of photons coupling to the graph by a derivative coupling; the
external photons would be represented by dipoles on the graph
in the ``charge formalism''. Consequently, these dipoles would
\textit{shield} the pair potential along the world graph. In a similar
vein, it might be advantageous to examine multi-particle production
(which would presumably make the potential on the graph
slightly random, like in a charged, grainy background medium).

The main motivation behind this work lies not in such ``classical''
issues, however; we hope that we might contribute to the rephrasing of
the Schwinger parametrised perturbation amplitude into a bulk
amplitude in the framework of the \AdSCFT correspondence.

\section{Acknowledgements}

The author wants to thank K.-H.~Rehren for thoroughly commenting on
this text, and pointing out where clarifications or more detailed
explanations were necessary. Also thanks to W.~Dybalski for listening
patiently. This work has been financed by the DFG
project RE 1208/4-1 under the supervision of K.-H.~Rehren; the author
also thanks DFG for support for visiting a string school in Trieste, and the 
Albert-Einstein-Institute in Potsdam for their support when visiting
the String Steilkurs I + II. Finally, I want to thank the anonymous
referee whose accurate work has helped greatly to improve the
quality of this publication.

\commentpaper
{

\section{Irreducible Tensor Currents}

We are now going to construct a class of traceless, totally
symmetric conserved tensor currents from the bilinear
field operators of the free quantum theory. These currents generally have the form
\cite{Todorov1978, Anselmi:1999bb, Diaz:2006nm}
\begin{equation}
\label{def:irrec}
\mathcal J^s = \sum_{k=0}^s a^s_k \,
\wick{\partial^k \varphi\, \partial^{s-k} \varphi^{(*)}}
\;-\; \text{traces},
\end{equation}
where it is understood that the indices of the derivatices are totally
symmetrised, and the numerical prefactors are given by
\begin{equation}
\label{def:ak}
a^s_k = \onehalf (-1)^k \binom{s}{k}
\frac{\binom{s + d - 4}{k + \dhalf - 2}}
     {\binom{s + d - 4}{\dhalf - 2}}
= \onehalf (-1)^k \binom{s}{k}
  \frac{(\dhalf - 1)_s}
     {(\dhalf - 1)_k (\dhalf - 1)_{s-k}},
\end{equation}
$(a)_n = \frac{\Gamma(a+n)}{\Gamma(a)}$ is the Pochhammer symbol.
We have indicated that in case of a complex field,
the currents are constructed from the field and its adjoint. In
the real field case, it is obvious that due to symmetrisation, only the
currents with even spin $s$ survive the symmetrisation.
The currents shall be irreducible, implying that
for different spin, they are in general orthogonal; ie
$\langle \mathcal J^s \mathcal J^t \rangle = 0$ if $s \neq t$.

As the tensors are totally symmetric, we will contract the
free indices by a vector $y \in \mathbb C^d$ (this is a very common
technique; see eg \cite{Ruehl:2004kq}). Currents are then obtained by
applying a partial derivative with respect to $y$. The ``current
polynomial'' is thus given by
\begin{equation}
\label{def:irrec:y}
\mathcal J^s(y) = \frac{1}{s!} \sum_{k=0}^s a^s_k \,
\wick{(y \cdot \partial)^k \varphi\, (y \cdot \partial)^{s-k} \varphi^{(*)}}
\;-\; \text{traces},
\end{equation}
and symmetrisation is implicit in this scheme. The prefactor implies
the normalisation
\begin{equation*}
\mathcal J^s = (\partial_y)^s \mathcal J^s(y).
\end{equation*}
Subtraction of traces is technically a nontrivial procedure (this
is also visible from the fact that authors tend to ignore the
issue). In the $y$-polynomial scheme, a traceless tensor $T^s(y)$ is
characterised by the condition
\begin{equation*}
\triangle_y \mathcal T^s(y) = 0,
\end{equation*}
so it corresponds to a \textit{harmonic} function.
The ``subtraction'' term contains (symmetrised) tensors of the structure
$g_{\mu \nu} T_{(\rho)}$, where $g$ is the metric. These would be
represented by $y$-polynomials of the form
\begin{equation*}
y^2 p^{s-2}(y),
\end{equation*}
where $p^{s-2}(y)$ is a homogeneous polynomial of order $s-2$. Subtractions
are meaningful and well defined if there exists a unique function
$p^{s-2}(y)$ such that
\begin{equation}
\label{eqn:deq:traceless}
\triangle_y (\mathcal J^s(y) - y^2 p^{s-2}(y)) = 0.
\end{equation}
For the proof of existence and uniqueness of the subtraction, see
\cite{Bargmann:1977gy}. Alternatively, we could restrict the vector
$y$ to the complex cone $y^2=0$ (this is an important technique for
the proof).

As a first application, we evaluate the subtraction scheme for a
symmetric tensor of the type
$\frac{1}{s!} (t \cdot y)^s, \; t \in \mathbb C^d$;
by solving the relevant differential equation
(\ref{eqn:deq:traceless}) and selecting the unique solution with
finite polynomial degree, we find
\begin{align}
\label{eqn:monom:sub}
\frac{1}{s!} (t \cdot y)^s \;-\; \text{traces}
=& \frac{(t \cdot y)^s}{s!}
   \hypergeom\Big(-\frac{s}{2}, \onehalf - \frac{s}{2};
              2 - s - \dhalf;
              \frac{t^2 y^2}{(t \cdot y)^2}\Big)\nonumber\\
=& \frac{|t|^s|y|^s}{2^s (\dhalf-1)_s}\;
   C_s^{\dhalf-1}\Big( \frac{t \cdot y}{|t|\,|y|} \Big),
\end{align}
where $C_s^\lambda(x)$ are the Gegenbauer polynomials
\footnote{also called ultraspherical polynomials}
(this may be checked by computing the divergence). This is the \textit{unique}
harmonic function of homogeneity degree $s$ which can be constructed
from the expressions $t \cdot y$ and $y^2$ (containing $y$). For if
there were another function with that property, the subtraction of
traces procedure would not be well-defined (this is of course not a proof).
Note that due to the symmetry of the expression in $t$ and $y$, this function is
harmonic with respect to the variable $t$ as well. If $O$ is a real
orthogonal matrix in $\mathbb R^d$, then
\begin{align*}
\frac{1}{s!} (t \cdot O y)^s \;-\; \text{traces}
=& \frac{|t|^s|y|^s}{2^s (\dhalf-1)_s}\;
   C_s^{\dhalf-1}\Big( \frac{t \cdot O y}{|t|\,|y|} \Big)
\end{align*}
is harmonic in $t$ and $y$ simultaneously. We list the first few
cases:
\begin{eqnarray*}
\frac{1}{0!} (t \cdot y)^0 \;-\; \text{traces}
=& 1\\
\frac{1}{1!} (t \cdot y)^1 \;-\; \text{traces}
=& t \cdot y\\
\frac{1}{2!} (t \cdot y)^2 \;-\; \text{traces}
=& \frac{(t \cdot y)^2}{2} - \frac{t^2 y^2}{2d}\\
\frac{1}{3!} (t \cdot y)^3 \;-\; \text{traces}
=& \frac{(t \cdot y)^3}{6} - \frac{t^2 y^2 (t \cdot y)}{2(d+2)}
\end{eqnarray*}
If we substitute $\partial_y$ instead of $y$, then
\begin{align*}
\left( e^{t \cdot \partial_y} \;-\; \text{traces}_t \right) f(y)
= \left( e^{t \cdot \partial_y}  f(y) \;-\; \text{traces}_t \right)
= f(t) \;-\; \text{traces}_t = f(t)
\end{align*}
iff $f$ is a harmonic function; so
$\left( e^{t \cdot \partial_y} \;-\; \text{traces}_t \right)$
acts as the identity on the harmonic functions.

Computing the harmonic $y$-polynomial for the currents $\mathcal
J^s(y)$ is not an easy task. In a first step, we encode the
derivatives acting on the two field operators with left and right
pointing arrows [SAME FOR THE G's]
\begin{equation*}
\mathcal J^s(y) \sim \frac{1}{s!} \sum_k a^s_k (y \cdot \cev\partial)^k
       (y \cdot \vec\partial)^{s-k} \;-\; \text{traces}
= \frac{1}{s!} \sum_k a^s_k g_1^k g_2^{s-k}  \;-\; \text{traces}.
\end{equation*}
One one hand side, we can sum easily
\begin{equation*}
\frac{1}{s!} \sum_{k=0}^s a^s_k g_1^k g_2^{s-k}
= (-1)^{\dhalf-2} \frac{\Gamma(\dhalf - 1)}{2 s!}
  (g_1 + g_2)^{\dhalf- 2 + s} \left( -g_1 g_2 \right)^{1 -
    \frac{d}{4}} P_{\dhalf-2+s}^{2 - \dhalf}
  \left( \frac{g_2 - g_1}{g_2 + g_1} \right).
\end{equation*}
For the dimensions usually considered, this becomes quite a simple
function; in $d=4$,
\begin{equation*}
\frac{1}{s!} \sum_{k=0}^s a^s_k g_1^k g_2^{s-k}
= \frac{(g_2 + g_1)^s P_s\Big( \frac{g_2 - g_1}{g_2 + g_1} \Big)}{2s!}.
\end{equation*}
Subtraction of traces can now be performed by setting $g_1 = y \cdot
\partial_1, g_2 = y \cdot \partial_2$ etc. However, the subtraction
for general $s$ is a very involved procedure, and the combinatorial
difficulties are protecting the solution very well
\footnote{The uniqueness argument does not apply here because we have
  available the terms $y \cdot \partial_1$, $y \cdot \partial_2$ and
  $y^2$ for the construction of the currents $\mathcal J^s(y)$}.

Another path which seems to lead in the right direction is the
following: We know that
\begin{equation*}
(g_2 - g_1)^s = \sum_{k=0}^s \binom{s}{k} (-1)^k g_1^k g_2^{s-k}.
\end{equation*}
We have evaluated already the subtraction of traces for simple
powers. Therefore, we have
\begin{align*}
\mathcal J^s(y)
=& \frac{1}{2(\dhalf-1)_{\hat n_1} (\dhalf-1)_{\hat n_2}}
   \frac{|\partial_2 - \partial_1|^s|y|^s}{2^s}\;
   C_s^{\dhalf-1}\Big( \frac{(\partial_2 - \partial_1) \cdot y}
  {|\partial_2 - \partial_1|\,|y|} \Big)\\
=& \frac{\Gamma(\dhalf-1)^2}{2\,\Gamma(d-2)\, B(\dhalf-1 + \hat n_1, \dhalf-1+\hat n_2)}
   \frac{|\partial_2 - \partial_1|^s|y|^s}{2^s (d-2)_s}\;
   C_s^{\dhalf-1}\Big( \frac{(\partial_2 - \partial_1) \cdot y}
  {|\partial_2 - \partial_1|\,|y|} \Big)
\end{align*}
where $\hat n_j$ is taken to be an ``operator'' counting the powers of
$\partial_j$. We list the relevant expressions for small spins:
\begin{eqnarray*}
\mathcal J^0(y)
=& \frac{1}{2}\\
\mathcal J^1(y)
=& \frac{(\partial_2 - \partial_1) \cdot y}{2}\\
\mathcal J^2(y)
=& \frac{(\partial_1 \cdot y)^2 + (\partial_2 \cdot y)^2}{4}
 - \dhalf \frac{(\partial_1 \cdot y) (\partial_2 \cdot y)}{d-2}
 - \frac{\partial_1^2y^2 + \partial_2^2 y^2}{4d}
 + \onehalf \frac{(\partial_1 \cdot \partial_2) y^2}{d-2}
\end{eqnarray*}

A very important tool is now the construction of a generating
function for \textit{all} spins, by summing over the relevant values of $s$.
The idea is that for a single spin, the currents are
always polynomials of order $s$, and computations reduce practically
to summations over single terms (monomials); in the generating function
approach, we are dealing with analytic functions and we hope to be
able to use general arguments from analysis to simplify expressions.
Such hopes are not always futile. In this case, the generating function is
\begin{align*}
J(y) = \sum_s \mathcal J^s(y)
=& \frac{\Gamma(\dhalf-1)^2}{2\,\Gamma(d-2)\,
         B(\dhalf-1 + \hat n_1, \dhalf-1+\hat n_2)}\\
 & e^{\onehalf (\partial_2 - \partial_1) \cdot y}
   \pFq{0}{1}\Big( ; \frac{d-1}{2};
    \frac{[(\partial_2 - \partial_1) \cdot y]^2
          - (\partial_2 - \partial_1)^2 y^2}{16} \Big).
\end{align*}
There is a certain arbitrariness in the definition of the generating function,
as we may normalise the spin $s$ contribution according to taste - in the
example, we have included a factor $1/s!$ in the definition of $\mathcal J^s(y)$.
It is obvious, however, that the generating function is very hard to
get by. Or, the summation might also be performed directly, yielding
\footnote{The sum is on the sheet where it equals one if $\partial_2 -
  \partial_1=0$.}
\begin{align*}
J(y)
=& \frac{1}{2(\dhalf-1)_{\hat n_1} (\dhalf-1)_{\hat n_2}}
   \left( 1 - (\partial_2 - \partial_1) \cdot y +
          \frac{(\partial_2 - \partial_1)^2 y^2}{4} \right)^{1 - \dhalf}\\
=& \frac{1}{2(\dhalf-1)_{\hat n_1} (\dhalf-1)_{\hat n_2}}
   \left[ \left(\frac{y}{2} \right)^2
   \left(R\frac{y}{2} + \partial_1 - \partial_2 \right)^2 \right]^{1 - \dhalf},
\end{align*}
where we have used the conformal inversion $(Rx)_\mu = \frac{x_\mu}{x^2}$.
For example, in the trivial case $d=1$, we can evaluate the square
root and obtain
\begin{align*}
J^{d=1}(y)
=& \frac{1}{2} - \frac{y}{2} (\partial_1 - \partial_2),
\end{align*}
implying that all the higher currents vanish.

The generating function has a sum formula
\begin{align*}
J(y)
=  \sum_{n_1, n_2, m_1, m_2, m_3}
 & \frac{1}{2(\dhalf-1)_{n_1 + 2m_1 + m_3} (\dhalf-1)_{n_2 + 2m_2 + m_3}}
   \frac{(\dhalf-1)_{n_1 + n_2 + m_1 + m_2 + m_3}}{n_1! n_2! m_1! m_2! m_3!} \\
 & (-\partial_1 \cdot y)^{n_1} (\partial_2 \cdot y)^{n_2}
   \Big(-\frac{\partial_1^2y^2}{4}\Big)^{m_1}
   \Big(-\frac{\partial_2^2y^2}{4}\Big)^{m_2}
   \Big(\frac{\partial_1\cdot\partial_2 y^2}{2}\Big)^{m_3}
\end{align*}

We will derive an integral representation of the generating function
$J(y)$, by using double Fourier transform. We have
\begin{align*}
J(y)
=& (2\pi)^{-2d} \int \dif[d]k_1 \int \dif[d]k_2
   \int \dif[d]x_1 \int \dif[d]x_2
   \left[ \left(\frac{y}{2} \right)^2
   \left( R\frac{y}{2} + x_1 - x_2 \right)^2 \right]^{1 - \dhalf} \\
 & \frac{1}{2(\dhalf-1)_{\hat n_1} (\dhalf-1)_{\hat n_2}}
   e^{-i x_1 \cdot k_1 - i x_2 \cdot k_2}   
   e^{ i k_1 \cdot \partial_1 + i k_2 \cdot \partial_2} \\
=& (2\pi)^{-d} \int \dif[d]k_1 \int \dif[d]k_2
   \int \dif[d]{(x_2 - x_1)} \delta^{(d)} (k_1 + k_2)
   \left[ \left(\frac{y}{2} \right)^2
   \left( x_1 - x_2 \right)^2 \right]^{1 - \dhalf} \\
 & \frac{\Gamma(\dhalf-1)^2}{2}
   e^{-i (x_2 - x_1 + R\frac{y}{2}) \cdot \frac{k_2 - k_1}{2}}   
   (i k_1 \cdot \partial_1)^{1 - \frac{d}{4}}
    I_{\dhalf-2}(2 \sqrt{i k_1 \cdot \partial_1})\;
   (i k_2 \cdot \partial_2)^{1 - \frac{d}{4}}
    I_{\dhalf-2}(2 \sqrt{i k_2 \cdot \partial_2})
\end{align*}

\subsection{Two-point functions of bilinear currents}

The computation of two-point function
$\langle \mathcal J^s(y) \mathcal J^{s'}(y') \rangle$
of the bilinears from the underlying Feynman graphs is
complicated. We quote the coordinate space results from
\cite{Anselmi:1999bb}, whose normalisation of the underlying scalar
field operators coincides with ours. Define the matrix-valued function
\begin{equation}
I_{\mu\nu}(x) = \delta_{\mu\nu} - 2\frac{x_\mu x_\nu}{x^2}.
\end{equation}
Note that $I$ is always orthogonal (it acts as a mirror symmetry with
respect to the plane orthogonal to $x_\mu$).
Then, symmetry considerations dictate the form of the two-point
function in coordinate space to be \cite[III.2]{Fradkin:1996is} 
\begin{equation}
\langle \mathcal J^s(y; x) \mathcal J^s(y', 0) \rangle
= n(s) \left( \frac{(y \cdot I(x) y')^s}{s!} \;-\; \text{traces} \right)
  \frac{1}{(x^2)^{d-2+s}},
\end{equation}
where $n(s)$ is a normalisation. By \cite{Anselmi:1998bh}, we find
that
\begin{equation*}
n(s)
= \frac{2^{s-5}\, (2s+d-4)!\, \left[ \left( \dhalf-2 \right)! \right]^2}
       {(s + d - 4)!}.       
\end{equation*}
By construction, the two-point functions vanish for currents of
different spin. It is crucial to have at hand an inverse for these
currents.

If we want to perform explicitly the computations necessary to obtain
the two-point functions, we may use the
Schwinger parametrised representation (\ref{eqn:gen2}) of the one-loop
graph with two insertions as generating function with arguments $y_1$
and $y_2$ and letting the derivatives $\cev\partial$ and $\vec\partial$
act on these variables, under the inclusion of the correct signs and
powers of $i$. We make the agreement that the respective momenta on
the propagators are always interpreted as incoming; if $y_i$ denotes
an outgoing momentum, then we have to include a minus sign in the
respective derivative. Because the currents are irreudicle, currents
of different spin will be orthogonal. The two-point functions are thus
given by
\begin{multline*}
\langle \mathcal J^s(y) \mathcal J^{s'}(y') \rangle
= \left( \frac{1}{s!} \sum_{k=0}^s a^s_k (- y \cdot \partial_1)^k
         (y \cdot \partial_2)^{s-k}\;-\; \text{traces} \right)\\
\left. \left( \frac{1}{s'!} \sum_{k'=0}^{s'} a^{s'}_{k'} (- y' \cdot \partial_2)^{k'}
         (y' \cdot \partial_1)^{s'-k'}\;-\; \text{traces} \right)
G_\text{1-loop}(k; y_1, y_2) \right|_{y_1 = y_2 = 0}.
\end{multline*}
For real scalar fields, there is a second term corresponding to the
mirror image of one of the currents; effectively, currents of odd spin
will vanish then. For transparency, let us see explicitly how this
looks like for $s = s' = 1$:
\begin{align*}
\langle \mathcal J^1(y) \mathcal J^1(y') \rangle
=&\frac{(2\pi)^d}{2}
  \int_0^\infty \dif{\tau_1} \int_0^\infty \dif{\tau_2}
  \left( \frac{1}{4\pi (\tau_1 + \tau_2)} \right)^{d/2}
  \frac{(\partial_1 + \partial_2) \cdot y}{2}
  \frac{(\partial_1 + \partial_2) \cdot y'}{2} \\
& \left. e^{ - \frac{\tau_1 \tau_2}{\tau_1 + \tau_2} k^2
      - i \frac{y_1 \tau_2 - y_2 \tau_1}{\tau_1 + \tau_2} \cdot k
      - \frac{1}{4(\tau_1 + \tau_2)} \left( y_1 + y_2 \right)^2} \right|_{y_1 = y_2 = 0}\\
=&\frac{(2\pi)^d}{8}
  \int_0^\infty \dif{\tau_1} \int_0^\infty \dif{\tau_2}
  \left( \frac{1}{4\pi (\tau_1 + \tau_2)} \right)^{d/2} \\
& \left[ \left( -i \frac{\tau_2 - \tau_1}{\tau_1 + \tau_2} \right)^2
         (y \cdot k) (y' \cdot k) 
         - \frac{2}{\tau_1 + \tau_2} y \cdot y' \right]
  e^{ - \frac{\tau_1 \tau_2}{\tau_1 + \tau_2} k^2} \\
=& -\frac{\pi^\dhalf}{8} \int_0^1 \dif{\tau}
   \int_0^\infty \dif{T} T^{1 - \dhalf}
   \left[ (1 - 2\tau)^2 (y \cdot k) (y' \cdot k) + \frac{2}{T} y \cdot y' \right]
  e^{ - T \tau (1 - \tau) k^2} \\
=& -\frac{\pi^\dhalf}{8} \int_0^1 \dif{\tau}
   \left[ \Gamma\left(2 - \dhalf\right) \frac{(1 - 2\tau)^2 (y \cdot k)
          (y' \cdot k)}{(\tau (1 - \tau) k^2)^{2 - \dhalf}}
 + 2\Gamma\left(1 - \dhalf\right) \frac{y \cdot y'}{(\tau (1 - \tau) k^2)^{1 - \dhalf}} \right]\\
=& -\frac{\pi^{\dhalf + \threehalf}}
        {2^{d+1} \Gamma(\frac{d+1}{2}) \sin(\pi\dhalf)}
   \frac{1}{(k^2)^{1 - \dhalf}}
   \left( \frac{(y \cdot k)(y' \cdot k)}{k^2}- y \cdot y' \right).
\end{align*}
Of course, in even dimensions, this will not work out as the integral
is divergent, and strictly speaking we would have to regularise the
two-point function. Concerning $y$ and $y'$, this is the scalar
product of the projection onto the plane orthogonal to the momentum
$k$. This is physical: We do not expect that the vector fields
possess a longitudinal polarisation. As a consequence,
\begin{equation}
\label{eqn:J1J1:transversal}
(k \cdot \partial_y) \langle \mathcal J^1(y) \mathcal J^1(y') \rangle=0.
\end{equation}
Introducing the projection matrix
\begin{equation}
\label{def:PiMatrix}
(\Pi_k)_{\mu\nu} = \delta_{\mu\nu} - \frac{k_\mu k_\nu}{k^2}, 
\end{equation}
we can write the two-point function as
\begin{align*}
\langle \mathcal J^1(y) \mathcal J^1(y') \rangle
=& - \frac{\pi^{\dhalf + \threehalf}}
        {2^{d+1} \Gamma(\frac{d+1}{2}) \sin(\pi\dhalf)}
   \frac{1}{(k^2)^{1 - \dhalf}} (y \cdot \Pi_k  y').
\end{align*}
Obviously, this Fourier space Green function kernel does not have a
general inverse; we might formulate an inverse if we apply certain
boundary conditions. For example, if we demand that its domain of
definition is restricted to vector fields with vanishing divergence in
coordinate space.

The computation of the general two-point function from first
principles (Feynman diagrams) in the Schwinger-parametrised approach
does not pose any new problems. The integrals which have to be solved
in order to control higher spins are of the same type as the integrals
appearing in the spin 1 calculation. For \textit{generic} spin $s$,
the computation is difficult, though, because of the number of terms
appearing when we let the derivatives act on the generating function
grows quickly. Also, the currents $\mathcal J^s(y)$ are themselves
only known as power series, and the trace-subtracted form for general
spin $s$ is not easily accessible. We will therefore follow a
different approach using the presumptive properties of the sought-for
two-point function $\langle \mathcal J^s(y) \mathcal J^{s'}(y') \rangle$. To
begin with, by construction, the currents $\mathcal J^s$ are orthogonal
for different spin, so the two-point function vanishes if
$s \neq s'$. Because the currents are conserved, we have
\begin{equation}
\label{eqn:JsJs:transversal}
(k \cdot \partial_y) \langle \mathcal J^s(y) \mathcal J^s(y') \rangle =
(k \cdot \partial_{y'}) \langle \mathcal J^s(y) \mathcal J^s(y') \rangle = 0,
\end{equation}
in parallel to (\ref{eqn:J1J1:transversal}) for $s=1$.
And finally, the traceless condition is bequeathed from the currents
onto the two-point function: We have
\begin{equation}
\label{eqn:JsJstraceless}
\triangle_y \langle \mathcal J^s(y) \mathcal J^s(y') \rangle
= \triangle_{y'} \langle \mathcal J^s(y) \mathcal J^s(y') \rangle
= 0.
\end{equation}
This is already enough information to fix the currents, up to a scalar
amplitude depending on the momentum $k$. From the condition
(\ref{eqn:JsJs:transversal}), we find that the correlator, as a
function of $y$, may only vary in a plane orthogonal to $k$. So it
has to have the functional form
\begin{equation*}
\langle \mathcal J^s(y) \mathcal J^s(y') \rangle
= j^s(\Pi_k  y,\, \Pi_k y'),
\end{equation*}
where $j^s$ is an unknown function. Now, due to these projections,
equation (\ref{eqn:JsJstraceless}) might be written
\begin{equation}
0 = \triangle_y j^s(\Pi_k  y,\, \Pi_k  y')
= \triangle_{\tilde y}^{(d-1)} j^s(\tilde y,\, \tilde y'),
\end{equation}
where $\tilde y, \tilde y'$ are living in the $d-1$-dimensional
hyperplane orthogonal to $k$, and $\triangle^{(d-1)}$ is the
$d-1$-dimensional Laplacian for this plane. Due to symmetry,
$\tilde y$ appears in this function only in the two fixed forms
$\tilde y \cdot \tilde y'$ and $\tilde y^2$. As the currents are
homogeneous functions of $y$ of degree $s$ (and similarly for $y'$),
we may apply the results on uniqueness following (\ref{eqn:monom:sub})
and find that
\begin{align*}
\langle \mathcal J^s(y) \mathcal J^s(y') \rangle
\sim& \frac{1}{s!} (\Pi_k y \cdot \Pi_k y')^s \;-\; \text{traces}_{\Pi_k y}^{(d-1)}\\
=& \frac{|\Pi_k y|^s|\Pi_k y'|^s}{2^s \left( \frac{d-3}{2} \right)_s}\;
    C_s^{\frac{d-3}{2}}\Big( \frac{y \cdot \Pi_k y'}{|\Pi_k y|\,|\Pi_k y'|} \Big),
\end{align*}
using the dimension $d-1$ of the hyperplane orthogonal to $k$. In order to
determine the normalisation, we select the summand proportional to
$(y \cdot y')^s$; this is the leading order contribution from the
Gegenbauer polynomial (which is a polynomial of order $s$). In the
Gegenbauer polynomial, this summand carries a factor $1/s!$.
Using the generating function approach, this term is generated by
derivatives falling exclusively onto the quadratic term (generating
a factor of $s!$ by combinatorics):
\begin{align*}
\langle \mathcal J^s(y) \mathcal J^s(y') \rangle
\ni& \left( \frac{1}{s!} \sum_{k=0}^s a^s_k (- y \cdot \partial_1)^k
          (y \cdot \partial_2)^{s-k} \right)
 \left( \frac{1}{s!} \sum_{k'=0}^s a^s_{k'} (- y' \cdot \partial_2)^{k'}
         (y' \cdot \partial_1)^{s-k'} \right) \\
 & \frac{(2\pi)^d}{2}
  \int_0^\infty \dif{\tau_1} \int_0^\infty \dif{\tau_2}
  \left( \frac{1}{4\pi (\tau_1 + \tau_2)} \right)^{d/2}
  \left. e^{ - \frac{\tau_1 \tau_2}{\tau_1 + \tau_2} k^2
      - \frac{1}{2(\tau_1 + \tau_2)} y_l y_r}
 \right|_{y_l = y_r = 0} \\
=& \frac{(-y \cdot y')^s}{s!} \Big( \sum_{k=0}^s a^s_k (-1)^k \Big)^2
   \frac{\pi^\dhalf}{2^{s+1}}
  \int_0^\infty \dif{\tau_1} \int_0^\infty \dif{\tau_2}
  \left( \frac{1}{\tau_1 + \tau_2} \right)^{d/2 + s}
  e^{ - \frac{\tau_1 \tau_2}{\tau_1 + \tau_2} k^2} \\
=& \frac{(-y \cdot y')^s}{s!} \Big( \frac{2^{d+2s-5}\Gamma\left(\dhalf-1\right)
         \Gamma\left( \frac{d-3}{2} + s \right)}
    {\pi^\onehalf \Gamma(d-3+s)} \Big)^2
   \frac{\pi^\dhalf}{2^{s+1}}
   \left( \frac{(-1)^{s+1} \pi^{\frac{3}{2}} 2^{3-d-2s}}
               {(k^2)^{2-\dhalf-s} \Gamma\left( \frac{d-1}{2}+s \right)\sin \pi\dhalf} \right)\\
=& - \frac{2^{d+s-8} \pi^{\frac{d+1}{2}} \Gamma\left(\frac{d-3}{2}
    +s\right) \Gamma(\dhalf-1)^2}
   {\left(\frac{d-3}{2} +s\right)\Gamma(d-3+s)^2 \sin \pi\dhalf}
   \frac{1}{(k^2)^{2-\dhalf-s}} \frac{(y \cdot y')^s}{s!}.
\end{align*}
In order to obtain this normalisation, we have to set
\begin{align}
\langle \mathcal J^s(y) \mathcal J^s(y') \rangle
=& \frac{c_s}{(k^2)^{2-\dhalf-s}} \left( \frac{1}{s!} (\Pi_k y \cdot \Pi_k y')^s \;-\;
  \text{traces}_{\Pi_k y}^{(d-1)} \right),
\end{align}
with
\begin{equation*}
c_s = - \frac{2^{d-8+s} \pi^{\frac{d+1}{2}}
  \Gamma(\frac{d-3}{2} + s) \Gamma(\dhalf-1)^2}
   {\left(\frac{d-3}{2} +s\right)\Gamma(d-3+s)^2 \sin \pi\dhalf}.
\end{equation*}
This is in agreement with the $s=1$-computation.

The inverse propagator is well-defined only for traceless symmetric
tensors in the subspace orthogonal to $k$. In this subspace, the
inverse propagator is constructed accordingly as
\begin{align}
G_\text{inv}^s(\partial_y; \partial_{y'})
=& \frac{(k^2)^{2-\dhalf-s}}{c_s}\;
   \left( \frac{1}{s!} ( \Pi_k \partial_y \cdot \Pi_k \partial_{y'} )^s
          \;-\; \text{traces}_{\Pi_k \partial_y}^{(d-1)} \right).
\end{align}
The normalisation is chosen such that
\begin{equation}
\label{eqn:Ginv:norm}
G_\text{inv}^s(\partial_y; \partial_{y'}) 
\;\langle \mathcal J^s(y') \mathcal J^s(y'') \rangle
= \frac{1}{s!} (\Pi_k \partial_y \cdot \Pi_k y'')^s
          \;-\; \text{traces}_{\Pi_k \partial_y}^{(d-1)}.
\end{equation}

\subsection{The compound propagator}

Using (\ref{eqn:Ginv:norm}), we are able to show the
trivial identity
\begin{multline*}
G_\text{inv}^s(\partial_y; \partial_{y'})
\; \langle \mathcal J^s(y) \mathcal J^s(\tilde y) \rangle
\; \langle \mathcal J^s(y') \mathcal J^s(\tilde y') \rangle \\
= \left( \frac{1}{s!} (\Pi_k \partial_y \cdot \Pi_k \tilde y')^s
          \;-\; \text{traces}_{\Pi_k \partial_y}^{(d-1)} \right)
\langle \mathcal J^s(y) \mathcal J^s(\tilde y) \rangle \\
= \frac{c_s}{(k^2)^{2 - \dhalf-s}}
  G_\text{inv}^s(\partial_y; \tilde y')\;
  \langle \mathcal J^s(y) \mathcal J^s(\tilde y) \rangle \\
= \frac{c_s}{(k^2)^{2 - \dhalf-s}}
  \left( \frac{1}{s!} (\Pi_k \tilde y \cdot \Pi_k \tilde y')^s
          \;-\; \text{traces}_{\Pi_k \tilde y}^{(d-1)} \right)
= \langle \mathcal J^s(\tilde y) \mathcal J^s(\tilde y') \rangle.
\end{multline*}
This suggests to construct a quantity called the ``compound
propagator''
\begin{equation*}
G_\text{comp}^s =
G_\text{inv}^s(\partial_y; \partial_{\tilde y})
\mathcal J^s(y) \tilde{\mathcal J}^s(\tilde y).
\end{equation*}
The second current is carrying a tilde because it is at
home in the second loop). In fact, we will assume that the four external legs of
this compound propagator are coupled to propagators transporting
prescribed momenta; then, the compound propagator is a function of
these four external momenta (which have to add up to zero, in order to
conserve momentum).

The particular nature of the inverse propagator and of
the currents allows an immediate simplification. We insert an
intermediate variable $p$ and write
\begin{align*}
G_\text{comp}^s
=& \frac{(k^2)^{2-\dhalf-s}}{c_s}
   \left( \frac{1}{s!} ( \Pi_k \partial_y \cdot \Pi_k \partial_{\tilde y} )^s
    \;-\; \text{traces}_{\Pi_k \partial_y}^{(d-1)} \right)
   \mathcal J^s(y) \, \tilde{\mathcal J}^s(\tilde y)\\
=& \frac{(k^2)^{2-\dhalf-s}}{c_s}
   \left( \frac{1}{s!} ( \Pi_k \partial_y \cdot \Pi_k \partial_p)^s
     \;-\; \text{traces}_{\Pi_k \partial_p}^{(d-1)} \right) \mathcal J^s(\Pi_k y)\\
 & \left( \frac{1}{s!} ( \Pi_k p \cdot \Pi_k \partial_{\tilde y})^s
     \;-\; \text{traces}_{\Pi_k p}^{(d-1)} \right) \tilde{\mathcal J}^s(\Pi_k \tilde y).
\end{align*}
The crucial point is that $\mathcal J^s(y)$ is a function of the
products $y \cdot \cev\partial$, $y \cdot \vec\partial$ and $y^2$, where
the $y^2$-contributions are generated by the subtraction of
traces. As $\cev\partial + \vec\partial = k$, and $\Pi_k k=0$, we find
that $\mathcal J^s(\Pi_k \partial_p)$ is a function of
$\Pi_k \partial_p \cdot \cev\partial$ and $(\Pi_k \partial_p)^2$ \textit{only}, and
therefore the subtraction of $\text{traces}_{\Pi_k \partial_p}$ will
completely determine the terms containing $(\Pi_k \partial_p)^2$ from the
leading term $(\Pi_k \partial_p \cdot \cev\partial)^s$, ``overwriting'' any
prior subtraction of traces performed before on $\mathcal J^s(y)$.
The same is true for ${\mathcal J}^s(\tilde y)$.
Therefore, we may ignore the subtraction of traces in the currents. In
conclusion, it is sufficient to compute
\begin{align*}
\mathcal J^s(\Pi_k \partial_p)
=& \frac{1}{s!} \sum_{k=0}^s a_k^s (\Pi_k \partial_p \cdot \cev\partial)^k
   (\Pi_k \partial_p \cdot \vec\partial)^{s-k} \;-\; \text{traces}\\
=& \frac{1}{s!} \sum_{k=0}^s a_k^s (\Pi_k \partial_p \cdot \cev\partial)^k
   (-\Pi_k \partial_p \cdot \cev\partial)^{s-k} \;-\; \text{traces}\\
=& \frac{1}{s!} (-\Pi_k \partial_p \cdot \cev\partial)^s\; \frac{2^{d+2s-5}\Gamma\left(\dhalf-1\right)
         \Gamma\left( \frac{d-3}{2} + s \right)}
    {\pi^\onehalf \Gamma(d-3+s)} \;-\; \text{traces}.
\end{align*}
A similar relation is true for $\tilde{\mathcal J}^s(\tilde y)$.
Inserting this result into the compound propagator yields
\begin{align*}
G_\text{comp}^s
=& \Big( \frac{2^{d+2s-5}\Gamma(\dhalf-1)
         \Gamma( \frac{d-3}{2} + s)}
    {\pi^\onehalf \Gamma(d-3+s)} \Big)^2
   \frac{(k^2)^{2-\dhalf-s}}{c_s} \\
 & \left( \frac{1}{s!} ( \Pi_k \cev\partial \cdot \Pi_k \partial_p )^s
    \;-\; \text{traces}_{\Pi_k \partial_p}^{(d-1)} \right)
  \left( \frac{1}{s!} ( \Pi_k p \cdot \Pi_k \cev{\tilde \partial} )^s
    \;-\; \text{traces}_{\Pi_k p}^{(d-1)} \right)\\
=& - \frac{2^{d+3s-2} \Gamma( \frac{d-1}{2} + s)\,
                \sin(\pi\dhalf)}{\pi^{\frac{d+3}{2}}} (k^2)^{2-\dhalf-s}
   \left( \frac{1}{s!} ( \Pi_k \cev\partial \cdot \Pi_k \cev{\tilde \partial} )^s
    \;-\; \text{traces}_{\Pi_k \cev\partial}^{(d-1)} \right)\\
=& - \frac{2^{d-2} \Gamma( \frac{d-3}{2})\,
                \sin(\pi\dhalf) (k^2)^{2-\dhalf}}{\pi^{\frac{d+3}{2}}}\;
   \frac{(\frac{d-3}{2} + s)\, 4^s |\Pi_k \cev\partial|^s
         |\Pi_k \cev{\tilde \partial}|^s}{(k^2)^s}
   C_s^{\frac{d-3}{2}} \Bigg(
   \frac{\Pi_k \cev\partial \cdot \Pi_k \cev{\tilde \partial}}
        {|\Pi_k \cev\partial| |\Pi_k \cev{\tilde \partial}|} \Bigg).
\end{align*}
In the second step, we have used that the terms in the brackets again
are powers of a scalar product, with traces subtracted.

\subsection{Summation of the compound propagator}

Summation over $s$ yields
\begin{align*}
\sum_{s=0}^\infty G_\text{comp}^s
=& - \frac{2^{d-2} \Gamma( \frac{d-1}{2})\,
                \sin(\pi\dhalf) (k^2)^{2-\dhalf}}{\pi^{\frac{d+3}{2}}}\;
   \frac{1 -  16 \frac{(\Pi_k \cev\partial)^2 (\Pi_k \cev{\tilde
         \partial})^2}{k^4}}
   {\left( 1 - 8 \frac{\Pi_k \cev\partial \cdot \Pi_k \cev{\tilde \partial}}{k^2}
    + 16 \frac{(\Pi_k \cev\partial)^2 (\Pi_k \cev{\tilde \partial})^2}
           {k^4} \right)^{\frac{d-1}{2}}}.
\end{align*}

\comment{
We will consider the general expression without the
subtraction of traces, to begin with. When inserting
$G_\text{1-loop}$, the derivatives may fall onto the quadratic term
$(y_1 + y_2)^2$ in the exponential, or onto the linear terms. As we
have to set $y_1 = y_2 = 0$ in the end, there always have to be two
derivatives saturating the quadratic term. We will assume that one of
these comes from $\mathcal J^s(y)$ and the other from
$\mathcal J^{s'}(y')$, as the final removal of traces will cancel
those terms with two derivatives from one current falling onto the
quadratic term. Let us assume that there are $\ell$ derivatives from
each side falling onto the quadratic term (for simplicity, we assume
w.l.o.g. that $s \leq s'$). Then,
\begin{align*}
\langle \mathcal J^s(y) \mathcal J^{s'}(y') \rangle
=& \frac{(2\pi)^d}{2\, s!s'!}
  \int_0^\infty \dif{\tau_1} \int_0^\infty \dif{\tau_2}
  \left( \frac{1}{4\pi (\tau_1 + \tau_2)} \right)^{d/2}
  e^{ - \frac{\tau_1 \tau_2}{\tau_1 + \tau_2} k^2 } \sum_{\ell=0}^s \ell!\\
 & \sum_{k=0}^s \sum_{n=0}^{\min(k, \ell)}
   \binom{k}{n} \binom{s-k}{\ell - n} a^s_k (-1)^k
   \left( \frac{-i \tau_2}{\tau_1 + \tau_2} y \cdot k \right)^{k-n}
   \left( \frac{ i \tau_1}{\tau_1 + \tau_2} y \cdot k \right)^{s-k-(\ell-n)}\\
 & \sum_{k'=0}^{s'} \sum_{n'=0}^{\min(k', \ell)}
   \binom{k'}{n'} \binom{s'-k'}{\ell - n'} a^{s'}_{k'} (-1)^{k'}
   \left( \frac{ i \tau_1}{\tau_1 + \tau_2} y' \cdot k \right)^{k'-n'}
   \left( \frac{-i \tau_2}{\tau_1 + \tau_2} y' \cdot k \right)^{s'-k'-(\ell-n')}\\
 & \left( - \frac{1}{2(\tau_1 + \tau_2)} y \cdot y' \right)^\ell
   \;-\; \text{traces}\\
=& \frac{(2\pi)^d i^{s-s'}}{2\, s!s'!}
   \sum_{\ell=0}^s
   \sum_{n=0}^\ell \sum_{k=n}^s
    \binom{k}{n} \binom{s-k}{\ell - n} a^s_k
   \sum_{n'=0}^\ell \sum_{k'=n'}^{s'}
    \binom{k'}{n'} \binom{s'-k'}{\ell - n'} a^{s'}_{k'}\\
 &2^{-\ell} \ell!\;
  (y \cdot k)^{s-\ell} (y' \cdot k)^{s' - \ell} (y \cdot y')^\ell
  (-1)^{n-n'+\ell} \\
 &\int_0^\infty \dif{\tau_1} \int_0^\infty \dif{\tau_2}
  \left( \frac{1}{4\pi (\tau_1 + \tau_2)} \right)^{d/2}
  \frac{\tau_1^{k'-n'+s-k+n-\ell} \tau_2^{k-n+s'-k'+n'-\ell}}{(\tau_1 + \tau_2)^{s + s' - \ell}}
  e^{ - \frac{\tau_1 \tau_2}{\tau_1 + \tau_2} k^2 } \;-\; \text{traces}.
\end{align*}
The integrations are now trivial; with
\begin{multline*}
(2\pi)^d \int_0^\infty \dif{\tau_1} \int_0^\infty \dif{\tau_2}
  \left( \frac{1}{4\pi (\tau_1 + \tau_2)} \right)^{d/2}
  \frac{\tau_1^{m_1} \tau_2^{m_2}}{(\tau_1 + \tau_2)^M}
  e^{ - \frac{\tau_1 \tau_2}{\tau_1 + \tau_2} k^2 }\\
= \pi^{\dhalf} \frac{\Gamma(2 - \dhalf - M + m_1 + m_2)\,
  \Gamma(M - m_1 + \dhalf-1)\, \Gamma(M - m_2 + \dhalf - 1)}
  {\Gamma(d-2+2M-m_1-m_2)}   (k^2)^{\dhalf-2+M-m_1-m_2},
\end{multline*}
we obtain
\begin{align*}
\langle \mathcal J^s(y) \mathcal J^{s'}(y') \rangle
=& \frac{\pi^\dhalf i^{s-s'}}{2\, s!s'!}
   \sum_{\ell=0}^s
   \sum_{n=0}^\ell \sum_{k=n}^s
    \binom{k}{n} \binom{s-k}{\ell - n} a^s_k
   \sum_{n'=0}^\ell \sum_{k'=n'}^{s'}
    \binom{k'}{n'} \binom{s'-k'}{\ell - n'} a^{s'}_{k'}\\
 &2^{-\ell} \ell!\; (k^2)^{\dhalf-2} (y \cdot k)^s (y' \cdot k)^{s'}
  \left( \frac{k^2 (y \cdot y')}{(y \cdot k) (y' \cdot k)} \right)^\ell
  (-1)^{n-n'+\ell} \\
 &\frac{\Gamma(2-\dhalf-\ell) \Gamma(\dhalf-1+k-n+s'-k'+n')
  \Gamma(\dhalf-1+k'-n'+s-k+n)}{\Gamma(d-2+s+s')}
  \;-\; \text{traces}.
\end{align*}
By extensive testing with a an algorithmic computer package, one finds
that the sum over $k$ and $k'$ vanishes if $\ell \neq \max(n,
n')$. In particular, the $n'$-sum does not extend beyond $n'=s$,
because the sum over $\ell$ ends there. 
Therefore, we can swap the order of summation and obtain
\begin{align*}
\langle \mathcal J^s(y) \mathcal J^{s'}(y') \rangle
=& \frac{\pi^\dhalf i^{s-s'}}{2\, s!s'!}
   \sum_{n=0}^s \left\{ \sum_{n'=0}^n [\ell = n] +
       \sum_{n'=n+1}^s [\ell = n'] \right\} \sum_{k=n}^s
    \binom{k}{n} \binom{s-k}{\ell - n} a^s_k
     \sum_{k'=n'}^{s'} \binom{k'}{n'} \binom{s'-k'}{\ell - n'} a^{s'}_{k'}\\
 &2^{-\ell} \ell!\; (k^2)^{\dhalf-2} (y \cdot k)^s (y' \cdot k)^{s'}
  \left( \frac{k^2 (y \cdot y')}{(y \cdot k) (y' \cdot k)} \right)^\ell
  (-1)^{n-n'+\ell} \\
 &\frac{\Gamma(2-\dhalf-\ell) \Gamma(\dhalf-1+k-n+s'-k'+n')
  \Gamma(\dhalf-1+k'-n'+s-k+n)}{\Gamma(d-2+s+s')}
  \;-\; \text{traces} \\
=& \frac{\pi^\dhalf i^{s-s'}}{2\, s!s'!}
   \Big\{ \sum_{n=0}^s \sum_{n'=0}^n \sum_{k=n}^s
    \binom{k}{n} a^s_k
    \sum_{k'=n'}^{s'} \binom{k'}{n'} \binom{s'-k'}{n - n'} a^{s'}_{k'}\\
 &2^{-n} n!\; (k^2)^{\dhalf-2} (y \cdot k)^s (y' \cdot k)^{s'}
  \left( \frac{k^2 (y \cdot y')}{(y \cdot k) (y' \cdot k)} \right)^n
  (-1)^{n'} \\
 &\frac{\Gamma(2-\dhalf-n) \Gamma(\dhalf-1+k-n+s'-k'+n')
  \Gamma(\dhalf-1+k'-n'+s-k+n)}{\Gamma(d-2+s+s')}
  + \text{$[\ell = n']$} \Big\}
  \;-\; \text{traces}.
\end{align*}

} 

\subsection{Cutting of Loops}

Using propagators and couplings which contain derivatives, it might be
possible to implement a very peculiar propagator. Assume that we are
given a spider (one-loop) diagram with an arbitrary number of (external scalar
particle) insertions. We will take the diagram to be massive, for
simplicity at a constant mass $m$ for all propagators. We assume that the integration over positions
has not been done, the insertions are located at definite positions on
the loop. We want to cut this loop $L_0$ into two parts; the two branches
which we obtain are closed, and at the position of the cut, a new
``effective propagator'' is inserted which links the two loops $L$ and
$\tilde L$ which we obtained. In the end, we will find that this propagator will have to be a
superposition of propagators for tensor particles of different
orders. We want to choose these propagators and their couplings in such
a way that the resulting amplitude precisely equals the original
amplitude with the uncut loop (after integration). We will determine the necessary couplings and
the propagator in this subsection. There are simple conditions under
which such an operation is resembled, and they are formulated by help
of the charge picture.

Let the loops $L$ and $\tilde L$ be coordinatised canoncially in
clockwise direction from 0 to $T$ resp. $\tilde T$, in accordance with
the loop coordinate on the uncut loop $L_0$.
the effective propagator connecting the two loops shall be
located at coordinate 0 on both sides. In
the loops $L$ and $\tilde L$, the external insertions are numbered, and the
respective incoming momenta are $k_i$ resp. $\tilde k_j$; the
coordinates of the insertions are $t_i$ resp. $\tilde t_j$. In
addition, there is the charge $-\sum_i k_i = -K$ at 0 in the loop $L$, and
the charge $-\sum \tilde k_j = K$ at 0 in the loop $\tilde L$;
momentum conservation implies $\sum k_i + \sum \tilde k_j = 0$.
The ``dangling ends'' which arise after cutting will be denoted
$0+, T-$ on the left loop and $\tilde 0+, \tilde T-$ on the right, in
a self-explanatory notation. The cut has separated the pairs $0+,
\tilde T-$ and $T-, \tilde 0+$. [ILLUSTRATION]

The cutting process is guided by the fundamental hypothesis that the
whole procedure is ``local''; ie \textit{we assume that when we cut the
loop $L_0$ and rewire the ``dangling ends'', the potentials on all
branches are completely unaltered}, with the sole exception that the
potential on either of the two loops may be shifted by an (irrelevant)
constant. This implies the following condition on the potentials:
\begin{equation}
\label{def:cond:pot}
\varphi(0+) - \varphi(T-)
= \tilde \varphi(\tilde T-) - \tilde \varphi(0+). 
\end{equation}
Also, the \textit{field strengths} on opposite dangling ends 
are equal. This is equivalent to the gradient of the potential
$\varphi$ to be equal,
\begin{equation}
\label{def:cond:field}
\varphi'(0+) = \tilde \varphi'(\tilde T-),
\hspace{1cm}\text{and}\hspace{1cm}
\varphi'(T-) = \tilde \varphi'(0+).
\end{equation}
It is obvious that there is a discontinuity in the potential between $0+$ and $T-$
resp. $\tilde 0+$ and $\tilde T-$. This discontinuity can only persist
if there are dipole (vector!) charges $\mu$ resp. $\tilde \mu$ sitting
directly on the points $0, \tilde 0$ where the loops and the effective propagator are sewn
together \footnote{The charge $-\sum_i k_i$ of the linking propagator is
  supposed to be right in the middle between the dipole charges; so
  it ``sees'' the potential $\frac{\varphi(0+)\, +\, \varphi(T-)}{2}$.}.
Our program will be to determine these dipole charges; the
effective tensor propagator and its couplings are then determined in
such way that they ``amount to'' an effective dipole charge.

From (\ref{eqn:potentials}), we find for the potential difference
\begin{equation*}
\varphi(0+) - \varphi(T-)
= \mu \left(\varphi^\text{ds}(0, 0+) - \varphi^\text{ds}(0, T-)\right)
= -2\mu.
\end{equation*}
Similarly, $\tilde \varphi(\tilde 0+) - \tilde \varphi(\tilde T-) = -2\tilde \mu$,
and from (\ref{def:cond:pot}), we obtain
\begin{equation}
\label{eqn:mumu}
\tilde \mu = - \mu.
\end{equation}
To determine the fields (gradients of the potentials), we will use the
formulas (\ref{eqn:potentials}) for the related dipole
potentials. Have
\begin{multline*}
- \varphi'(0+)
= \varphi^\text{d}(0+)
= \sum_i k_i \varphi^\text{ds}(0+, t_i) 
  - K \varphi^\text{ds}(0+, 0)
  + \mu \varphi^\text{dd}(0+, 0)\\
= F - K - \frac{2\mu}{T}.
\end{multline*}
Similarly, $\varphi^\text{d}(T-) = F + K - \frac{2\mu}{T}$,
and the parallel relations on $\tilde L$.
Obviously, the difference between the equations (\ref{def:cond:field})
is trivially fulfilled by (\ref{eqn:mumu}). Their sum, however, is
only true if
\begin{equation*}
F - \frac{2\mu}{T} = \tilde F - \frac{2 \tilde \mu}{\tilde T}.
\end{equation*}
Together with (\ref{eqn:mumu}), we conclude that
\begin{equation}
\label{eqn:mu}
\mu = \frac{F - \tilde F}{2} \frac{T \tilde T}{T + \tilde T}
= - \tilde \mu.
\end{equation}
We want to calculate the total energy of the system.
The contribution by the external charges $k_j$ resp. $\tilde k_j$ is
in accordance with (\ref{def:Epot:varphi}) given by
\begin{equation}
\label{eqn:U0}
U_0
= \frac{1}{2} \Big( \sum_i k_i \cdot \varphi(t_i)
+ \sum_j \tilde k_j \cdot \tilde \varphi(\tilde t_j) \Big).
\end{equation}
For the uncut loop, this is already the whole story.
In the case of the two separated loops $L$ and $\tilde L$,
there are additional contributions from the charge
$-K$ of the linking propagator; however, these contributions
vanish due to the equality of potential on both sides, and the
contributions from the dipoles. We will not compute these dipole
contributions, but rather ask how the total energy changes when we remove the
dipoles. Exactly at 0 (between the dipole constituting charges), the
potential will not change. At the locus of the external insertions,
the potential will change, however; the change in the total energy is
then
\begin{align}
\label{eqn:Epot:delta}
\triangle U
=& - \frac{1}{2} \left( \sum_i (\mu \cdot k_i) \varphi^\text{ds}(0, t_i)
        + \sum_j (\tilde \mu \cdot \tilde k_j)
          \tilde \varphi^\text{ds}(\tilde 0, \tilde t_j) \right) \nonumber \\
=& - \frac{\mu}{2} \cdot \left( F - \tilde F \right) \nonumber \\
=& - \Big(\frac{F - \tilde F}{2} \Big)^2 \frac{T \tilde T}{T + \tilde T}.
\end{align}
We number the propagators in the loop $L$ ($\tilde L$) in clockwise
direction from 0 to $l$ (from $0$ to $\tilde l$) and denote the
coordinate of the respective test dipole insertions with $s_i$.
The effective propagator is described by a $c$-number valued distribution
$P(k^\text{prop}_0,\, k^\text{prop}_l;\, \tilde k^\text{prop}_0,\,
\tilde k^\text{prop}_{\tilde l};\, K)$. The momenta $k^\text{prop}_j$
are generated as usual through the generating functional
formalism by $i \partial_{y_j}$. As we have momentum conservation
at the vertex 0, we can actually reduce the number of arguments
through the relation $K = k^\text{prop}_l - k^\text{prop}_0$, and
similarly for $\tilde L$. So the asymmetric \textit{minimal form} of
the effective propagator is
\begin{equation*}
P(k^\text{prop}_0;\, \tilde k^\text{prop}_0;\, K).
\end{equation*}
The contribution from a certain ordering of the external insertions
to the partition sum of the two loops with moduli
\footnote{The ``moduli'' in the worldline formalism are the lengths
 between neighbouring insertions or vertices.}
$\tau$ resp. $\tilde \tau$ reads (ignoring symmetry factors and
external coupling constants)
\begin{equation}
\label{def:Gcut}
G_\text{cut}(k_1, \dots, \tilde k_1, \dots)
= \int \frac{\dif[l+1]{\tau}\dif[\tilde l+1]{\tilde \tau}}
            {(4\pi T)^\dhalf (4\pi\tilde T)^\dhalf}
  e^{- U_0 - m^2 (T + \tilde T) - \triangle U}
  \left. P\Big( i \partial_{y_0};\,
                i \partial_{\tilde y_0};\,
                K \Big)
  e^{Q(y) + \tilde Q(\tilde y)} \right|_{y \equiv 0},
\end{equation}
where
\begin{equation*}
Q(y)
= - \frac{i}{2} \Big(
 \sum_{i,j} (y_j \cdot k_i) \varphi^\text{ds}(s_j, t_i)
 - \sum_j (y_j \cdot K) \varphi^\text{ds}(s_j, 0) \Big)
     - \frac{1}{4T} \Big( \sum_j y_j \Big)^2
\end{equation*}
and similarly for the loop $\tilde L$. For our purposes, it will
suffice to retain only $y_0$; by letting $s_0=0+$,
the function $Q$ can be brought in the form
\begin{equation}
\label{eqn:QL}
Q(y_0)
= - \frac{i}{2} y_0 \cdot (F - K )
     - \frac{1}{4T} y_0^2
\end{equation}
(for $\tilde Q$, we have to substitute $K \mapsto -K$).
If the cut graph shall have the same partition sum as the uncut one, then
the effective propagator will have to contribute a total factor
$e^{\triangle U}$ to the kernel of the partition sum.
This is the case if
\begin{equation}
\label{eqn:cutuncut:approx}
\left. P\Big( i \partial_{y_0};\,
              i \partial_{\tilde y_0};\, K \Big)
 e^{Q(y) + \tilde Q(\tilde y)} \right|_{y \equiv 0}
\sim e^{\triangle U}.
\end{equation}
We do not have absolute equality as there might be prefactors
concerning the density of states in moduli space.

\paragraph{Solution of effective vertex equation.}

The exponents $Q$ have linear and quadratic terms. One can easily see
that
\begin{align*}
& \left. P\Big( i \partial_{y_0};\,
              i \partial_{\tilde y_0};\,
              K \Big)
 e^{Q(y) + \tilde Q(\tilde y)} \right|_{y \equiv 0} \\
=& \left. P\Big( i \partial_{y_0} + \frac{F - K}{2};\,
                i \partial_{\tilde y_0} + \frac{\tilde F + K}{2};\,
                K \Big)
 e^{- \frac{1}{4T} y_0^2
    - \frac{1}{4\tilde T} \tilde y_0^2} \right|_{y \equiv 0}
\sim e^{\triangle U}.
\end{align*}
Observe now that $\triangle U$ is completely independent of
$K$. This shows that the function $P$ has a definite functional
form (where we use that same letter $P$ for a new function)
\begin{equation*}
P(k^\text{prop}_0;\, \tilde k^\text{prop}_0;\, K)
\equiv P\Big(k^\text{prop}_0 + \frac{K}{2};\, \tilde k^\text{prop}_0 - \frac{K}{2} \Big).
\end{equation*}
For the new $P$, we have the equation
\begin{equation*}
\left. P\Big( i \partial_{y_0} + \frac{F}{2};\,
              i \partial_{\tilde y_0} + \frac{\tilde F}{2} \Big)
 e^{- \frac{1}{4T} y_0^2
    - \frac{1}{4\tilde T} \tilde y_0^2} \right|_{y \equiv 0}
\sim e^{\triangle U}.
\end{equation*}
This is easily resolved by Fourier transform. Substituting
\begin{equation*}
e^{- \frac{1}{4T} y^2}
= \left( \frac{T}{\pi} \right)^\dhalf
  \int \dif[d]x e^{-i x \cdot y - T x^2},
\end{equation*}
we get
\begin{equation*}
\Big( \frac{T \tilde T}{\pi^2} \Big)^\dhalf
\int \dif[d]x \int \dif[d]{\tilde x}
P\Big( x + \frac{F}{2},\, \tilde x + \frac{\tilde F}{2} \Big)
e^{-T x^2 - \tilde T \tilde x^2}
\sim e^{- \left(\frac{F - \tilde F}{2} \right)^2 \frac{T \tilde T}{T + \tilde T}}.
\end{equation*}
The right hand side depends only on the difference $F - \tilde F$; so
\begin{equation*}
P\Big( x + \frac{F}{2},\, \tilde x  + \frac{\tilde F}{2} \Big)
\equiv P\Big(x - \tilde x + \frac{F - \tilde F}{2} \Big),
\end{equation*}
and after integrating out one of the $x$-variables we get
\begin{equation*}
\Big( \frac{T \tilde T}{\pi (T + \tilde T)} \Big)^\dhalf
\int \dif[d]x
P\Big( x + \frac{F - \tilde F}{2} \Big)
e^{-\frac{T \tilde T}{T + \tilde T} x^2}
\sim e^{- \left(\frac{F - \tilde F}{2} \right)^2 \frac{T \tilde T}{T + \tilde T}}.
\end{equation*}
This has the unique solution
\begin{equation*}
P(x) = (2\pi)^d\, \delta^{(d)}(x)
\end{equation*}
(the prefactor is for later convenience), or
\begin{equation}
P(k^\text{prop}_0;\, \tilde k^\text{prop}_0;\, K)
= (2\pi)^d\, \delta^{(d)}(k^\text{prop}_0 - \tilde k^\text{prop}_0 + K).
\end{equation}
This result is not totally unexpected: We can rewrite it as
\begin{equation}
P(k^\text{prop}_l;\, \tilde k^\text{prop}_0;\, K)
= (2\pi)^d\, \delta^{(d)}(k^\text{prop}_l - \tilde k^\text{prop}_0).
\end{equation}
Including the prefactor, this distribution fulfills
\begin{equation}
\label{eqn:cutuncut:prec}
\left. P\Big( i \partial_{y_0};\,
              i \partial_{\tilde y_0};\, K \Big)
 e^{Q(y) + \tilde Q(\tilde y)} \right|_{y \equiv 0}
= \Big( \frac{4 \pi\, T \tilde T}{T + \tilde T} \Big)^\dhalf
  e^{\triangle U}.
\end{equation}
It turns out that this prefactor is just what the doctor
ordered. For if we insert it into equation (\ref{def:Gcut}),
then we obtain the uncut contribution to the partition sum
from the specific ordering of the external legs
\begin{equation}
\label{def:Guncut}
G_\text{uncut}(k_1, \dots, \tilde k_1, \dots)
= \int \frac{\dif[l+1]{\tau}\dif[\tilde l+1]{\tilde \tau}}
            {(4\pi(T + \tilde T))^\dhalf}
  e^{- U_0 - m^2 (T + \tilde T)}.
\end{equation}
The factor which determines the measure on moduli space
has come out correctly, with one exception: We see that the
\textit{coordinates of the cut} are being integrated. The propagators
with moduli $\tau_0$ and $\tilde \tau_{\tilde l}$ are concatenated,
and likewise $\tilde \tau_0$ and $\tau_l$. In the Schwinger
representation, this corresponds to a ``convoluted'' propagator
\begin{multline*}
\int_0^\infty \dif{\tau_0} \int_0^\infty \dif{\tilde \tau_{\tilde l}}
e^{- \tau_0((k^\text{prop}_0)^2 + m^2)
   - \tilde \tau_{\tilde l}((k^\text{prop}_0)^2 + m^2)}\\
= \int_0^\infty \dif{\tau_0} \tau_0
e^{- \tau_0((k^\text{prop}_0)^2 + m^2)}
= \frac{1}{((k^\text{prop}_0)^2 + m^2)^2}.
\end{multline*}
For massless diagrams, the interpretation is that the propagators
connecting the left and right segments have the wrong dimension.
We need to insert ``by hand'' the factor
\begin{equation*}
((k^\text{prop}_0)^2 + m^2) ((\tilde k^\text{prop}_0)^2 + m^2)
\end{equation*}
into the effective propagator to cancel these ``double
propagators''. So the complete tensor coupling is
\begin{equation}
\mathcal P(k^\text{prop}_0;\, \tilde k^\text{prop}_0;\, K)
= (2\pi)^d\, \delta^{(d)}(k^\text{prop}_0 - \tilde k^\text{prop}_0 + K)
  ((k^\text{prop}_0)^2 + m^2) ((\tilde k^\text{prop}_0)^2 + m^2).  
\end{equation}
Note that this result confirms that the formalism indeed works even
if there are tensor particles coupling to the loop! In fact,
obviously, this insertion can be used to substitute \textit{any} pair
of propagators in a Feynman graph; even if they lie in different
components of connectedness of the graph.

If we set $m = 0$, then we can analyse the effective propagator from
the coint of view of conformal field theory. Assume that we decompose
the effective propagator into a sum over irreducible components. Each
component consists of two operator insertions with equal quantum
numbers into the loops $L$ resp. $\tilde L$, and an intermediate
effective propagator. The spin of an insertion is given by
counting the number of derivatives acting on either $L$ or $\tilde
L$. For a component of spin $s$, we count a power $K^{4-d-2s}$ of
momentum. This proves that the operators are having canonical
dimension $\Delta(s) = d - 2 + s$. This proves that they correspond to
conserved currents (minimal twist).

However, note that this method also works in the case of non-singlet
sector amplitudes!

\subsection{Analysis of the resulting amplitude [INCOMPLETE]}

We analyse the resulting effective propagator in the massless case
($m=0$). We will basically perform a series expansion in the
``external'' momenta $k^\text{prop}_0$ and $\tilde
k^\text{prop}_0$. However, the appearance of the
$\delta$-distributions puts us in serious difficulties: We can use it
to trade any appearance of $\tilde k^\text{prop}_0$ for a
$k^\text{prop}_0 + K$, and to insert any term proportional to
\begin{equation}
\label{def:null}
\Big( k^\text{prop}_0 - \tilde k^\text{prop}_0 + K \Big)
\delta^{(d)} \Big( k^\text{prop}_0 - \tilde k^\text{prop}_0 + K \Big).
\end{equation}
We know that the propagator should be symmetric
under the simultaneous exchange
$k^\text{prop}_0 \mapsto \tilde k^\text{prop}_0$ and $K \mapsto -K$.
As $\delta^{(d)}$ is symmetric, we find that we can multiply
(\ref{def:null}) with factors $K$ and
$k^\text{prop}_0 - \tilde k^\text{prop}_0$ such that the resulting
function is even. This is automatic because all vector indices have to
be paired eventually.

A further restriction comes from the fact that the resulting
propagator should have twist two: So the prefactor
multiplying $\delta^{(d)}$ should be homogeneous in the momenta with
degree $4$.

\section{Generalised CPWE}

We will use the ``effective propagator'' introduced so far as a tool
to perform a ``generalised conformal partial wave expansion'' in the
free field model. The idea is the following:

The central objects of interest for us are one loop spider diagrams
with an arbitrary number of external insertions. Assume that the
diagram has more than three external insertions. Then we may cut the
loop according to the procedure described in the last section, under
the requirement that on every side of the cut, there are at least two
external insertions. Together wiht the linking effective propagator,
there are at least three legs impinging onto each loop. If any of the
loops has again more then three insertions, we repeat the process
iteratively. As a result, we will end up with a graph consisting of
loops connected by effective propagators to yield a tree structure;
every loop has exactly three insertions.

So far, everything is trivial. In the next step, we have to symmetrise
each three-loop; in effect, this amounts to a sum over certain
permutations of the external insertions on the original loop (but not
\textit{all} arrangements). The totally symmetrised amplitude of the
original loop is obtained by finally summing over all possible cutting
schemes.

The next step is to convey the loops with three external insertions
(triangle graphs) into star graphs with a central
three-vertex. Basically, this step is performed by help of the D'EPP
relations discussed earlier. As the effective propagators couple like
sums over tensors, the prescription cannot be applied na\"ively.

Finally, by a regrouping of all the factors involved, we aim to obtain
a sum over effective tensor propagators, coupled by general
three-vertices coupling these tensor particles. For the propagators to
be true tensor propagators, there is one crucial condition: They
should have the same number of free indices on both sides. In the
generating factor formalism, this indicates that the derivatives
$i \partial_{y_0}$ etc which appear in the effective propagators
ahould always appear in the form
$(i \partial_{y_0}) \cdot (i \partial_{\tilde y_0})$
or
$(K \cdot i \partial_{y_0})(K \cdot i \partial_{\tilde y_0})$.
This kind of symmetrisation will only be possible if we
include also the derivatives $i \partial_{y_2}$ and
$i \partial_{\tilde y_2}$. It will be necessary to put some factors in
the propagators ``into'' the tensor couplings, in order to sort the
terms correctly. This, however, induces another kind of ambiguity: By
the conservation of total momentum, it is always possible to add terms
proportional to the overall momentum. These will, appropriately
regrouped, again contribute to the propagators and couplings.

\subsection{Effective propagators and D'EPP relation}

Let us study the change which the generating factor undergoes
when we transform a loop graph with three insertions into a star graph
with help of the D'EPP relation. We keep all notations of section
\ref{sec:DEPP}. The generating factor is calculated as
\begin{equation*}
e^{-i \frac{y_1 \cdot (k_2 \tau_3 - k_3 \tau_2)
 + y_2 \cdot (k_3 \tau_1 - k_1 \tau_3)
 + y_3 \cdot (k_1 \tau_2 - k_2 \tau_1)}{\tau_1 + \tau_2 + \tau_3}
 - \frac{(y_1 + y_2 + y_3)^2}{4(\tau_1 + \tau_2 + \tau_3)}}.
\end{equation*}
Going over to a star graph means effectively to re-parametrise the
moduli space, ie the three dimensional space spannbed by the triples
$(\tau_1, \tau_2, \tau_3) \in \mathbb R_+^3$. If we do the ``usual
substitution'', this results in
\begin{equation*}
e^{-i \frac{y_1 \alpha_1 \cdot (k_2 \alpha_2 - k_3 \alpha_3)
 + y_2 \alpha_2 \cdot (k_3 \alpha_3 - k_1 \alpha_1)
 + y_3 \alpha_3 \cdot (k_1 \alpha_1 - k_2 \alpha_2)}{A}
 - \frac{\alpha_1 \alpha_2 \alpha_3}{4A^2}(y_1 + y_2 + y_3)^2},
\end{equation*}
where $A = \alpha_1 \alpha_2 + \alpha_1 \alpha_3 + \alpha_2 \alpha_3$.
Another possibility is the preceding substitution $y_j \mapsto \tau_j y_j$,
which of course has also to be performed on the partial derivatives
$i \partial_{y_j} \mapsto \tau_j^{-1} i \partial_{y_j}$.
Under these circumstances, the generating factor becomes
\begin{equation*}
e^{-i \left( y_1 \cdot (k_2 \alpha_2 - k_3 \alpha_3)
 + y_2 \cdot (k_3 \alpha_3 - k_1 \alpha_1)
 + y_3 \cdot (k_1 \alpha_1 - k_2 \alpha_2) \right)
 - \frac{\alpha_2 \alpha_3}{4\alpha_1} y_1^2
 - \frac{\alpha_1 \alpha_3}{4\alpha_2} y_2^2
 - \frac{\alpha_1 \alpha_2}{4\alpha_3} y_3^2
 - \frac{y_1 \cdot y_2 \alpha_3 + y_1 \cdot y_3 \alpha_2
       + y_2 \cdot y_3 \alpha_1}{2}}.
\end{equation*}
Now we have to discuss what happens to the propagator.

``Normal'' triangle graph:
\begin{align*}
G(k_1, k_2, k_3)
=& (2\pi)^{-\dhalf} \delta^{(d)}(k_1 + k_2 + k_3)
  \left( \prod_{j=1}^3
  \frac{2^{d - 2 \Delta_j} \pi^{\frac{d}{2}}}{\Gamma(\Delta_j)}
     \int_0^\infty \dif{\tau_j} \tau_j^{d - \Delta + \Delta_j - 1}
   \right) (\tau_1  + \tau_2 + \tau_3)^{\Delta - \frac{3d}{2}}\\
& e^{\frac{\tau_3(\tau_1 + \tau_2) k_1 \cdot k_2 + 
    \tau_2(\tau_1 + \tau_3) k_1 \cdot k_3 + \tau_1(\tau_2 + \tau_3) k_2 \cdot k_3}
    {\tau_1 + \tau_2 + \tau_3}}.
\end{align*}

\subsection{Divergences. Renormalisation.}

In the usual Feynman diagrammatic approach, the appearance of
divergences cannot be avoided. There are ultraviolet and infrared
divergences; the latter usually point to the fact that we have ``asked
the wrong kind of question'' [CITATION?]. We will concentrate on the
UV divergences. They appear at high $p$, or small scales; and because
the Schwinger kernels are very sharply localised for small Schwinger
time parameters $\tau$, it is obvious that UV problems are scheduled
to appear for $\tau \rightarrow 0$. Now, the ``action'' term in the
exponential (\ref{}) is perfectly bounded, and falling off
exponentially for large Schwinger times. The divergences are caused
solely by the normalisation $Z_\text{eff}^{-1}$ and the symmetry
factors. These, however, do not depend on the locations of the
insertions ($p$-vector charges) but only on the length of the branches
of the Schwinger parametrised amplitude.

Now, when a single branch with length (Schwinger time) $\tau$
collapses ($\tau \rightarrow 0$), this is not tragically
\textit{per se}: The value of the amplitude is equal to the value of
the collapsed diagram (where the branch is glued together into one
single point). A problem appears when a complete loop of the graph is
collapsed simultaneously. It might then happen that the normalisation
$Z_\text{eff}^{-1}$ develops a singularity which is so bad that it
does not vanish by the phase space integral (integration over the
locations of the insertions). In that case, there appears a divergence
which has to be treated by a counterterm. (eg, in the spider
diagram, the total prefactor from the symmetry factors and
$Z_\text{eff}^{-1}$ together is proportional to $T^{-1-d/2}$, while
the phase space integrals leave a factor of roughly $T^n$. So the
spider diagram is divergent if $n \geq d/2$).

TODO:

Klare Vorgabe fr Richtung des Flusses in den Spider-Diagrammen
und vor allem bei den Ableitungen

Bezeichnung der generating function fr die currents - indizes x und k etc.

} 

\bibliographystyle{habbrv}
\bibliography{../hh}

\end{document}